\DeclarePairedDelimiter\floor{\lfloor}{\rfloor}
\newcommand{\be}{\begin}
\newcommand{\e}{\end}
\newcommand{\beq}{\begin{equation}}
\newcommand{\eeq}{\end{equation}}
\newcommand{\ul}{\underline}
\renewcommand{\l}{\left}
\renewcommand{\r}{\right}
\renewcommand{\d}{\mathrm{d}} %straight d for integrals
\newcommand{\set}[1]{\mathbb{#1}}
\newcommand{\curly}[1]{\mathcal{#1}}
\newcommand{\setof}[2]{\left\{ #1\; : \;#2 \right\}}
\newcommand{\R}{\set{R}}
\newcommand{\Z}{\set{Z}}
\newcommand{\E}{\set{E}}
\newcommand{\T}{\set{T}}
\newcommand{\om}{\omega}
\newcommand{\eps}{\epsilon}
\newcommand{\lam}{\lambda}
\newcommand{\Lam}{\Lambda}
\newcommand{\sig}{\sigma}
\newcommand{\al}{\alpha}
\newcommand{\de}{\delta}
\newcommand{\ind}{\mathbbm{1}}		%indicator function
\newcommand{\ttmatrix}[4]{\left(\be{array}{cc} #1&#2\\	#3&#4 \e{array}	\right)}
\newcommand{\Tr}{\mathrm{Tr}}	%Trace
\newtheorem{thm}{Theorem}[section]
\newtheorem{lm}[thm]{Lemma}
\newtheorem{prop}[thm]{Proposition}
\theoremstyle{definition}
\newtheorem{defn}[thm]{Definition}
\newtheorem*{defn*}{Definition}
\numberwithin{equation}{section}
\theoremstyle{remark}
\def\dotuline{\bgroup
  \ifdim\ULdepth=\maxdimen  % Set depth based on font, if not set already
   \settodepth\ULdepth{(j}\advance\ULdepth.4pt\fi
  \markoverwith{\begingroup
  \advance\ULdepth0.08ex
  \lower\ULdepth\hbox{\kern.15em .\kern.1em}%
  \endgroup}\ULon}
\def\dashuline{\bgroup
  \ifdim\ULdepth=\maxdimen  % Set depth based on font, if not set already
   \settodepth\ULdepth{(j}\advance\ULdepth.4pt\fi
  \markoverwith{\kern.15em
  \vtop{\kern\ULdepth \hrule width .3em}%
  \kern.15em}\ULon}
\begin{document}
\title[Eigenvalue distribution of matrices defined by skew-shift]{Global eigenvalue distribution of matrices defined by the skew-shift}
%\title[]{}%
\author{Arka Adhikari}
\author{Marius Lemm}
\author{Horng-Tzer Yau}
\address{Department of Mathematics, Harvard University, Cambridge, MA 02138, USA}
%\email{}
%
%\thanks{}%
%\subjclass{}%
%\keywords{}%
%
%%\date{}%
%%\dedicatory{}%
%%\commby{}%
%% ----------------------------------------------------------------
\begin{abstract}
 We consider large Hermitian matrices whose entries are defined by evaluating the exponential function along orbits of the skew-shift $\binom{j}{2} \om+jy+x \mod 1$ for irrational $\om$. We prove that the eigenvalue distribution of these matrices converges to the corresponding distribution from random matrix theory on the global scale, namely, the Wigner semicircle law for square matrices and the Marchenko-Pastur law for rectangular matrices. The results evidence the quasi-random nature of the skew-shift dynamics which was observed in other contexts by Bourgain-Goldstein-Schlag and Rudnick-Sarnak-Zaharescu. 
\end{abstract}
\maketitle
% ----------------------------------------------------------------
\section{Introduction and main results}
The Wigner semicircle law was the first derived example of random matrix statistics. In 1955, Wigner showed that it arises as the asymptotic density of eigenvalues of $N\times N$ Hermitian random matrices $H_N$, as $N\to\infty$ \cite{Wigner}. In recent years, extensive efforts have been devoted to deriving the Wigner semicircle law down to very small scales for large classes of random matrices, including sparse ones coming from random graphs containing relatively few random variables (essentially $\gg N$); see \cite{BKY,BK,EKYY,ESY} and others.

In the present paper, we study the following question:\\

\emph{Suppose the entries of the large Hermitian matrices $H_N$ are generated by sampling along the orbits of an ergodic dynamical system. Do their eigenvalues  still exhibit random matrix statistics, like the Wigner semicircle law?}\\

We will answer this question in the affirmative for the model of $H_N$ defined below, where the underlying dynamical system is generated from the skew-shift dynamics: 
$$
\binom{j}{2} \om+jy+x \mod 1,
$$
Here $x,y\in \T$ (with $\T$ being the torus) are the starting positions of the dynamical system and $\om\in \T$ is a (typically irrational) parameter called the frequency. The skew-shift dynamics possesses only weak ergodicity properties, e.g., it is not even weakly mixing. Nonetheless, it is believed to behave in a quasi-random way (meaning like an i.i.d.\ sequence of random variables) in various ways reviewed at the end of the introduction. Moreover, the quasi-random behavior of the skew-shift should deviate from that of the more rigid standard shift  $j\om+x\mod 1$ (the circle rotation by an irrational angle $\om$). The key difference between the skew-shift and circle rotation is of course the appearance of a quadratic term $j^2\om$ for the skew-shift. This quadratic term has the effect of increasing the oscillations and thus improving the decay of the exponential sums over skew-shift orbits. This general fact is a central tenet of analytic number theory \cite{HL,Mont,W1,W2}, and of our analysis here as well.

We consider $2N\times 2N$ Hermitian matrices of the form
$$
H=\ttmatrix{0}{X}{X^*}{0}
$$
with $X$ an $N\times N$ matrix generated from the skew-shift via
\beq\label{eq:Xdefn}
X_{i,j}=\frac{1}{\sqrt{N}} e\l[ \l(\binom{j}{2} \om_i+jy_i+x_i\r)\r],\qquad e[t]:=\exp(2\pi \sqrt{-1} t)
\eeq
Here the $\om_1,\ldots,\om_N$ are chosen deterministically (see the examples below), the $y_1,\ldots,y_N$ in \eqref{eq:Xdefn} are sampled uniformly and independently from $[0,1]$, and the $x_1,\ldots,x_N$ are arbitrary. (In particular, one can take $x_1=x_2=\ldots=x_N=0$.)

We prove that the empirical spectral distribution of $H$ converges weakly to the Wigner semicircle law, that is,
$$
\frac{1}{2N}\sum_{j=1}^{2N} \delta_{\lam_j}\,\rightharpoonup\, \frac{1}{2\pi} \sqrt{4-x^2}\d x,\qquad\textnormal{ as } N\to\infty.
$$

The result applies for all choices of frequencies $(\om_1,\om_2,\ldots)$ for which a certain oscillatory exponential sum is of small size; see Definition \ref{defn:qrandom} below. We can verify that various classes of frequencies $(\om_1,\om_2,\ldots)$ with sufficient irrationality properties fall under this definition (see Section \ref{sect:ntheory}). Two examples of viable choices for the frequencies are the irrational circle rotation
$$
(\om_1,\om_2,\om_3,\om_4,\ldots)=(\sqrt{2},\,2\sqrt{2},\,3\sqrt{2},\,4\sqrt{2},\,\ldots)
$$
(see Figure \ref{fig:sqrt2}) and the square-root sequence
$$
(\om_1,\om_2,\om_3,\om_4,\ldots)=\l(1,\sqrt{2},\sqrt{3},\sqrt{4},\ldots\r)
$$
(see Figure \ref{fig:3/2}). 

\begin{figure}[t]
\begin{center}
    \includegraphics[scale=.2]{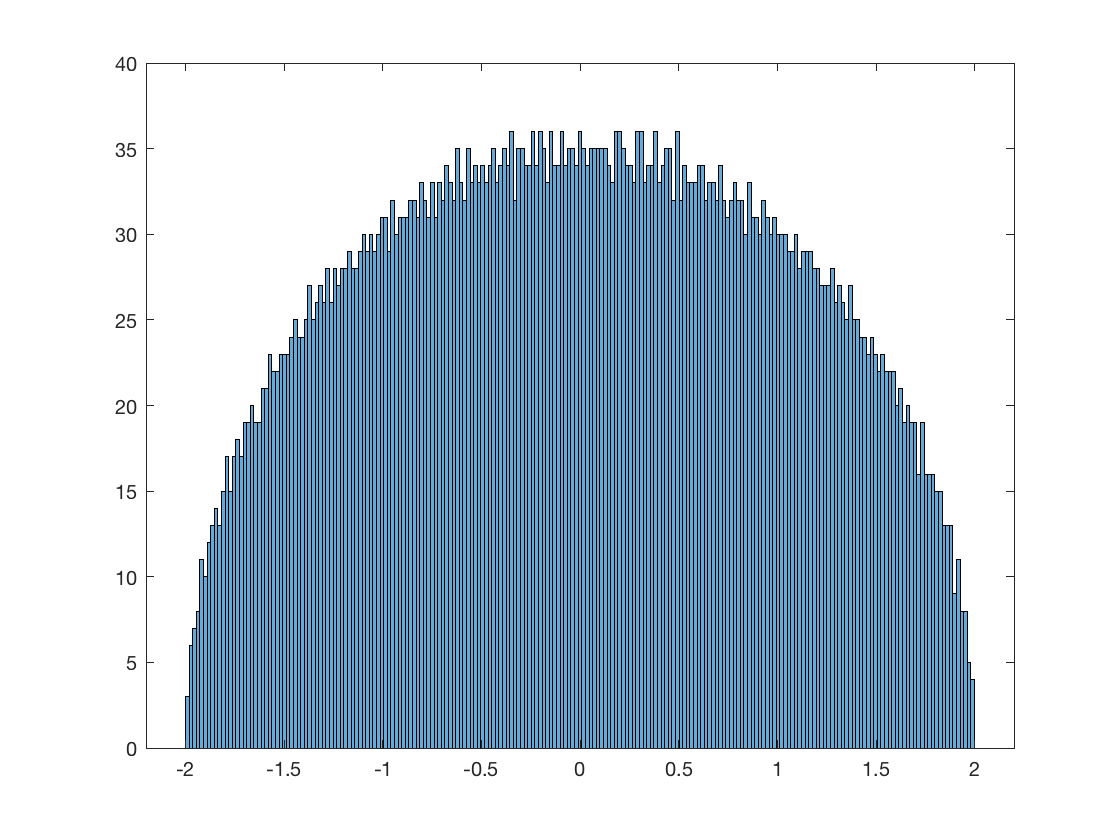}
    \caption{The empirical eigenvalue distribution $\frac{1}{N}\sum_{j=1}^N \delta_{\lam_j}$ of a $6000\times 6000$ matrix $H$ defined via \eqref{eq:Xdefn} with $\om_i=i\sqrt{2}$ and $x_i=0$ for $i\geq 1$.}
    \label{fig:sqrt2}
\end{center}
\end{figure}

The result includes the case of rectangular matrices $X$ generated from the skew-shift dynamics. In that case, we prove that the limiting eigenvalue distribution is given by the Marchenko-Pastur law \cite{MP}. Altogether, the results evidence the random-like behavior of the skew-shift. 

Our results are global in nature, i.e., they establish convergence to the limiting distributions on order $1$ scales. Extending the results to a local semicircle law in the form of \cite{BKY,BK,EKYY,ESY} and others is an open problem. 

In Section \ref{sect:deterministic}, we also discuss some numerical and analytical results concerning fully deterministic matrices. In summary, the fully deterministic situation appears to be more delicate, but nonetheless, a semicircular distribution and GUE eigenvalue spacing (Wigner surmise) can still be observed for certain models which are sufficiently quasi-random. See Table \ref{table} and Figure \ref{fig:spacing} for instances of this phenomenon. These observations suggest certain deterministic matrices that may belong to the universality class of random Hermitian matrices from the perspective of spectral statistics. Deriving such a result based on the properties of the underlying deterministic dynamical system is an open problem.  \\

We close the introduction with a brief review of two related well-known conjectures concerning the quasi-random behavior of the skew-shift.

\be{itemize}
\item 
Rudnick, Sarnak, and Zaharescu \cite{RSZ} conjectured that skew-shift orbits exhibit Poissonian spacing (as i.i.d.\ sequences would), and in fact proved this along subsequences for topologically generic frequencies; see also \cite{DRH,MS,MY,RS}. By contrast, the spacing distribution of irrational circle rotation displays level repulsion \cite{Ble,PBG}. 

\item 
In mathematical physics, the conjecture that the one-dimensional discrete Anderson model with on-site potential given by $\lam\cos\l(\binom{j}{2} \om+jy+x\r)$ exhibits Anderson localization for arbitrarily small coupling constant $\lam>0$, just like the random model \cite{Anderson,KS}, has seen only limited progress \cite{B1,BGS,HLS1,HLS2,K1,K2}, the most significant result for small $\lam$ being due to Bourgain \cite{B2}. Note that the conjecture again says that the skew-shift behaves markedly different from circle rotation $\lam\cos(j\om+x)$, which is Anderson localized if and only if $\lam>1$ \cite{Jit}.
\e{itemize}

\subsection{The model}
Let $\mathbb T$ be the one-dimensional torus, which we identify with $[0,1]$ in the usual way. For the skew-shift, the role of the ``angle'' is played by the frequency $\om \in[0,1]$. The skew-shift is then the transformation
$$
\begin{aligned}
T:&\T^2\to\T^2\\
&(x,y)\mapsto (x+y,y+\om).
\end{aligned}
$$
We write $T^j$ for the $j$-fold iteration of $T$ and $(T^j(x,y))_1$, for the first component of the vector $T^j(x,y)\in \T^2$, i.e.,
\beq\label{eq:Tjformula}
(T^j(x,y))_1=\binom{j}{2} \om+jy+x.
\eeq

We are now ready to define the matrix model we will be studying.

\be{defn}[Matrix model]
Let $M,N\geq 1$ and fix three vectors
$$
\ul{x}=(x_1,\ldots,x_M),\quad \ul{y}=(y_1,\ldots,y_M),\quad \ul{\omega}=(\omega_1,\ldots,\omega_M)\in\T^M.
$$
We define the Hermitian matrix
\beq\label{eq:Hdefn}
H_{M,N}:=\ttmatrix{0}{X}{X^*}{0},
\eeq
where $X$ is an $M\times N$ complex-valued matrix given by \eqref{eq:Xdefn}, i.e.,
$$
X_{i,j}:=N^{-1/2} e[(T^j(x,y))_1],\qquad e[t]:=\exp(2\pi \sqrt{-1} t).
$$
\e{defn}

We also introduce

\be{defn}[Averaging operation]
Given a function $f:[0,1]^M\to \R$, we define the averaging operation 
$$
\mathbb{E}_{\ul{y}} [f(\ul{y})]:=
\int_0^1\ldots \int_0^1 f(y_1,y_2,\ldots,y_M) \d y_1\ldots \d y_M.
$$
\e{defn}

%\subsection{Assumption on frequencies}
Our main result is a proof of the global Wigner semicircle law, respectively the global Marchenko-Pastur law. The proof applies for any sequence of frequencies $(\omega_1,\omega_2,\ldots)$ that is ``quasi-random'' in the following sense.

\begin{defn}
\label{defn:qrandom}
Let $\de,\rho>0$. The sequence of frequencies $(\omega_1,\omega_2,\ldots)$ with $\omega_i\in [0,1]$ is $(\de,\rho)$-quasi-random, if there exists a constant $C>0$ such that
\beq
\l|\frac{1}{N^5} \sum_{i_1,i_2=1}^{\floor{\rho N}} \sum^N_{\substack{j_1,j_2,j_3,j_4=1\\ j_1+j_3=j_2+j_4}} 
e\l[\frac{\om_i-\om_{i'}}{2}(j_1^2-j_2^2+j_3^2-j_4^2)\r]\r|\leq C N^{-\de},
\eeq
for all $N\geq 1$.
\end{defn}

Notice that the sum is normalized so that the trivial bound is a constant independent of $N$. The quasi-random condition is closely related to irrationality of the $\omega_i$, as one might expect from the perspective of ergodic theory.

In Section \ref{sect:ntheory} at the end of this paper, we will provide several classes of examples of frequency sequences $(\omega_1,\omega_2,\ldots)$ that are quasi-random in this sense, as well as graphs displaying the results of numerical simulations. As we will see there, the choice of $\rho$ is insignificant for verifying Definition \ref{defn:qrandom} in explicit examples.

\subsection{Main results}
We use the moment method to identify the global distribution of the eigenvalues.

Our first result, Theorem \ref{thm:keyrho} below, computes the expectation values of even moments of $H$ asymptotically as $M,N\to\infty$ with $M=\floor{\rho N}$. To state the result, we introduce the normalized moments
$$
\mu^{(2k)}_{M,N}:=\frac{1}{2N}\mathbb \Tr[H^{2k}_{M,N}].
$$
Notice that the odd moments are automatically zero, due to the block structure of $H_{M,N}$. 

Our first main result concerns the case of quadratic matrices, $M=N$. 

\be{thm}[Main result for $\rho=1$]
\label{thm:keyrho1}
Let $\de>0$ and let $(\om_1,\om_2,\ldots)$ be $(\de,1)$-quasi-random. Let $k\geq 1$. As $N\to\infty$, it holds that
\beq
\mathbb{E}_{\ul{y}}[\mu^{(2k)}_{N,N}]=c_{k}+O(N^{-\de/16}),
\eeq
where $c_k=\frac{1}{k+1}\binom{2k}{k}$ are the Catalan numbers. The estimate holds uniformly in the choice of $(x_1,x_2,\ldots)$.
\e{thm}

Our second main result concerns the case of rectangular matrices. $M=\floor{\rho N}$ with $\rho>0$.  For this, we introduce the following rescaling of the Marchenko-Pastur law
\beq %\label{eq:fdefn}
f_{\rho^{-1}}(t):=\rho^{-2} f^{MP}_{\rho^{-1}}\l(\frac{t}{\rho}\r),
\eeq 
where the Marchenko-Pastur law with parameter $\rho^{-1}$ is given by
$$
f^{MP}_{\rho^{-1}}(t)=\frac{\sqrt{(\lam_+-t)(t-\lam_-)}}{2\pi t \rho^{-1}}\ind_{\lam_-\leq t\leq \lam_+},\qquad \lam_{\pm}=(1\pm \rho^{-1/2})^2.
$$
We write $\mu_k$ for the moments of $f_{\rho^{-1}}$.

 \be{thm}[Main result for general $\rho>0$] 
\label{thm:keyrho}
For $\rho>0$, set $M=\floor{\rho N}$. Let $\de>0$ and let $(\om_1,\om_2,\ldots)$ be $(\de,\rho)$-quasi-random. Let $k\geq 1$ and $\rho >0$. As $N\to\infty$, it holds that
\beq
\mathbb{E}_{\ul{y}}[\mu^{(2k)}_{M,N}]=\mu_k+O(N^{-\de/16}).
\eeq
The estimate holds uniformly in the choice of $(x_1,x_2,\ldots)$.
\e{thm}

\be{cor}\label{cor:distr}
From Theorems \ref{thm:keyrho1} and \ref{thm:keyrho}, we obtain that the empirical spectral distribution $\frac{1}{N}\sum_{j=1}^N \delta_{\lam_j}$
converges weakly in distribution to the Wigner semicircle law, respectively, the Marchenko-Pastur law (depending on the value of $\rho$).
\e{cor}

\begin{proof}
The corollary follows from the solvability of the associated moment problem, since both distributions are compactly supported.
\end{proof}

In Section \ref{sect:deterministic}, we discuss examples of fully deterministic matrices where no average $\mathbb E_{\ul{y}}$ is taken, and instead $\ul{x}=\ul{y}=0$. Our findings there show that the situation is delicate in the deterministic class: Numerically, we observe that the global eigenvalue distribution is semicircular only for some models (e.g., $\om_i=\sqrt{i}$) which are sufficiently quasi-random, but not for other ones which involve linear terms, like $\om_i=i\sqrt{2}$. (Note that this bears some similarity with the situation in other contexts mentioned at the end of the introduction.) Moreover, even when the distribution is semicircular, it is accompanied by heavy tails which render the moment method ineffective. See Table \ref{table} for a summary. We also establish an analytical bound for the moment of a deterministic model (Theorem \ref{thm:modelb}) and we observe numerically that the eigenvalue spacing for $\om_i=\sqrt{i}$ (and similar models) matches that of GUE matrices (Wigner surmise).

\subsection{Some possible extensions}
The main results and their proofs extend verbatim if $y_1,\ldots,y_M$ are not sampled uniformly, but with respect to another measure $\d\mu$ on $[0,1]$ such that $\int_0^1 e[jy]\d\mu(y)=\de_{j,0}$.

Moreover, one can instead study matrices 
\beq\label{eq:jpmodel}
X_{i,j}=e[ j^p\om_i+j y_i]
\eeq
with $p$ an integer $>1$ and $y_1,\ldots,y_M\in [0,1]$ again sampled uniformly and independently at random. Alternatively, if one wants to generate the matrix elements again as orbits of a true dynamical system, one can take the skew-shift on the $p$-torus. In either of these cases, the method we develop here applies. Of course, the relevant input about exponential sums (the $(\de,\rho)$-quasirandom condition for our skew-shift model) changes from case to case. We note that the presence of the $y_i$ in \eqref{eq:jpmodel} is crucial for our method. The reason is that the average $\mathbb E_{\ul{y}}$ ensures the validity of the Kirchhoff circuit law in our graphical representation for the moments (see the next section). Without the Kirchhoff circuit law, e.g., in the deterministic setting, one needs to understand the exponential sums much more precisely to derive the semicircle law.

We also remark that in light of the ergodic theorem for the skew-shift dynamics, it may be possible to strengthen the convergence in Corollary \ref{cor:distr} to an almost sure result, but we do not dwell on this here.\\ 

\textbf{Notation.}
We write $C>0$ for universal constants. The proof of Theorems \ref{thm:keyrho1} and \ref{thm:keyrho} is given for a fixed choice of $k$ and $\rho$. We write $C_k$ for a constant that may depend on $k$ and $\rho$, but on $N$, and whose value may change from line to line.

\section{Graphical representation of the moments}
In this short section, we relate the expected moments $\E_{\ul{y}}[\mu^{(2k)}_{M,N}]$ to a sum over graphs with edge weights.\\

To this end, we introduce a  notion of ``exploration graph''. In a nutshell, an exploration graph is a directed graph that is generated by following a single closed path. (Multiple edges between each pair of vertices are allowed; this includes self-loops.) 

\begin{defn}[Explorations and exploration graphs]
\label{defn:exploration}
  Let $1\leq l\leq k$.
  \begin{enumerate}[label=(\roman*)]
\item An \emph{exploration} on $k$ edges and $l$ vertices is a list $L\in (\{1,2,\ldots,l\}^2)^k$ of the form
   $$
 L=((\nu_1,\nu_2),(\nu_2,\nu_3),\ldots,(\nu_{k-1},\nu_k),(\nu_k,\nu_1))
 $$
 where the numbers $\nu_1,\ldots,\nu_k\in \{1,2,\ldots,l\}$ satisfy the following two conditions:
 \begin{itemize}
     \item $\{\nu_1,\ldots,\nu_k\}=\{1,\ldots,l\}$.
     \item For all $1\leq j\leq l-1$, we have
 $$
 \min\setof{1\leq i\leq k}{\nu_i=j}<  \min\setof{1\leq i\leq k}{\nu_i=j+1},
 $$
 i.e., the first label $j$ occurs before the first label $j+1$. 
 \end{itemize}
 
 \item Each exploration $L$ defines an ``exploration graph'' $G_L=(V,L)$ as follows. One takes $V=\{\nu_1,\ldots,\nu_k\}$ as the vertex set and the elements of $L$ as the set of directed edges of $G_L$. The set of directed edges inherits an order from $L$. We write $\curly{L}_k$ for the set of exploration graphs on $k$ edges. 
 
 \item To any list $\ul{i}=(i_1,i_2,\ldots,i_k)\subset \{1,\ldots,M\}^k$, we associate a list of edges
$$
L_{\ul{i}}:=((i_1,i_2),(i_2,i_3),(i_3,i_4),\ldots,(i_k,i_1)).
$$
Let $l=|\{i_1,\ldots,i_k\}|$ and let $L$ be an exploration on $l$ vertices. We write
$$
\begin{aligned}
L_{\ul{i}}\sim L \quad\Longleftrightarrow\quad \exists\, &\textnormal{bijection}\, 
\sigma:\{i_1,\ldots,i_k\}\to\{1_,\ldots,l\}\\
&\textnormal{ such that } L_{(\sigma(i_1),\ldots,\sigma(i_k))}=L.
\end{aligned}
$$
 \end{enumerate}
\end{defn}

Notice that the exploration $L$ generates a closed path on the exploration graph $G_L$, and this path is by construction an Eulerian circuit (meaning it visits every edge exactly once). As a consequence, every vertex of $G_L$ has the same in-degree as out-degree.

\be{rmk}
The point of (iii) is that $L_{\ul{i}}\sim L$  holds iff the two lists correspond to the same exploration (when vertex labeling is ignored so that the vertices are exactly $\{1,\ldots,l\}$). 
Pictorially, $L_{\ul{i}}\sim L$ means that $L_{\ul{i}}$ and $L$ lead to the same graph if the order in which edges are traversed is kept.
\e{rmk}

The exploration graphs will be endowed with integer-valued edge weights (or ``currents'') satisfying Kirchhoff's current law.

\begin{defn}[Edge weights]\label{defn:Feynman}
Let $L$ be an exploration and $G_L=(V,L)$ its associated exploration graph. Given a vertex $v\in V$, we write $O_v$ for the set of outgoing edges from $v$, and $I_v$ for the set of incoming edges. Given a sequence $\ul{j}=(j_1,\ldots, j_k)\subset \{1,\ldots,N\}^k$, we assign the weight $j_i$ to the edge $(\nu_i,\nu_{i+1})$ in $L$.

We say that the sequence $\ul{j}$ is an admissible collection of edge weights for $L$ (or ``$L$-admissible'' for short), if the Kirchhoff circuit law holds on $G_L$, i.e., if
$$
\sum_{e \in I_v} j_e = \sum_{e \in O_v} j_e,\qquad \forall v\in V.
$$
\end{defn}

%   \be{defn}[Induced network graphs and symmetry factors]\label{defn:inducedF}
% \mbox{}
%  Given vectors $\ul{i}=(i_1,\ldots,i_k)\in \{1,\ldots,M\}^k$ and $\ul{j}=(j_1,\ldots,j_k)\in \{1,\ldots,N\}^k$, we construct the induced network graph $G_{\ul{i},\ul{j}}=( V_{\ul{i}},E_{\ul{i}},C_{\ul{i},\ul{j}})\in\curly{N}_k$ as follows: We set
%  $$
%  V_{\ul{i}}=\{i_1,\ldots,i_k\}, \qquad E_{\ul{i}}=\{(i_1,i_2),(i_2,i_3),\ldots,(i_k,i_1)\},
%  $$
%  and define the collection of edge currents $C_{\ul{i},\ul{j}}$ by
%  $$
%  c_{(i_l,i_{l+1})}=j_l, \qquad 1\leq l\leq k.
%  $$

% \e{defn}

With these graph-theoretic notions at hand, we can write down a graphical representation formula for the moments $\mathbb{E}_{\ul{y}}[\mu^{(2k)}_{M,N}]$. This formula is the starting point for the  subsequent analysis.\\

\paragraph{\textbf{Notation.}} We always use $1\leq i\leq M$ for row indices and $1\leq j\leq N$ for column indices. I.e., a sum $\sum_{\ul{i}=(i_1,\ldots,i_k)}$ is implicitly taken over $\ul{i}\in \{1,\ldots,M\}$ (with $M=\floor{\rho N}$) and a sum $\sum_{\ul{j}=(j_1,\ldots,j_k)}$ is implicitly taken over $\ul{j}\in \{1,\ldots,N\}$, unless specified otherwise. Moreover, we identify $i_{k+1}=i_1$. Borrowing convenient physics terminology, we call
\beq\label{Kpropdefn}
K_{(i,i')}(j):= e\l[(\om_i-\om_{i'})\frac{j^2}{2}\r]
\eeq
the ``effective propagator (from $i$ to $i'$ at momentum $j$)''. We recall that $\curly{L}_k$ is the set of all exploration graphs on $k$ edges (and therefore $\leq k$ vertices).

\be{prop}[Graphical representation formula for the moments]\label{prop:graphrep}
We have
\beq
\mathbb{E}_{\ul{y}}[\mu^{(2k)}_{M,N}]
=\frac{1}{N^{1+k}}\sum_{G_L=(V,L)\in \curly{L}_k}  \sum_{\substack{\ul{i}=(i_1,\ldots,i_k)\\ L_{\ul{i}}\sim L}}\sum_{\substack{\ul{j}=(j_1,\ldots,j_k)\\ \ul{j}\textnormal{ is $L$-admissible}}} \prod_{r=1}^k K_{(i_r,i_{r+1})}(j_r)
\eeq
\e{prop}

We mention that the collective sum over exploration graphs and edge weights can also be viewed as a sum over Feynman graphs, albeit with a spatially dependent propagator.

\be{proof}
We compute the moment as
$$
\begin{aligned}
\mu^{(2k)}_{M,N}
=&\frac{1}{2N}\Tr[H_{M,N}^{2k}]
=\frac{1}{N}\Tr[(XX^*)^k]
=\frac{1}{N}\sum_{i_1,\ldots,i_k=1}^M \sum_{j_1,\ldots,j_k=1}^N  \prod_{l=1}^k X_{i_l,j_l}X^*_{j_l,i_{l+1}}
\end{aligned}
$$
The true propagator (before averaging) is
\beq\label{eq:Ktilde}
\begin{aligned}
\tilde K_{(i,i')}(j):= N X_{i,j}X^*_{j,i'}=&e\l[\frac{\om_i-\om_{i'}}{2} (j^2-j)+(y_{i}-y_{i'})j+(x_{i}-x_{i'})\r]\\
=&e\l[\frac{\om_i-\om_{i'}}{2} j^2+(y_{i}-\om_i-(y_{i'}-\om_{i'}))j+(x_{i}-x_{i'})\r]
\end{aligned}
\eeq
We can write $\mu^{(2k)}_{M,N}$ as a sum over exploration graphs on $k$ edges:
\beq\label{eq:graphical0}
\mu^{(2k)}_{M,N}=\frac{1}{N^{1+k}}\sum_{G_L=(V,L)\in \curly{L}_k}  \sum_{\substack{\ul{i}=(i_1,\ldots,i_k)\\ L_{\ul{i}}\sim L}}\sum_{\ul{j}=(j_1,\ldots,j_k)} \prod_{r=1}^k \tilde K_{(i_r,i_{r+1})}(j_r).
\eeq
We first note that the phases $x_i$ telescope to zero along the exploration, meaning that $\sum_{r=1}^k (x_{i_r}-x_{i_{r+1}})=0$, since $i_{k+1}=i_1$.

The claim of the proposition can then be restated as saying that taking the average $\mathbb{E}_{\ul{y}}$ on both sides of \eqref{eq:graphical0} has two effects: (a) it retains only $L$-admissible edge weights $\ul{j}$ (i.e., it enforces the Kirchhoff circuit law at each vertex) and (b) it replaces the true propagator $\tilde K$ by the effective propagator $K$.

The fact that (a) and (b) hold follows directly from the formula \eqref{eq:Ktilde} and orthogonality of the functions $\{e[j\cdot]\}_{j\in \Z}$ over the torus $\T$, by a straightforward computation. This proves Proposition \ref{prop:graphrep}.
\e{proof}

%%%%%%%%%%%%%%%%%%%%%%%%%%%%%%%%%%%%%%%%%%%%%%%%%%%%%%%%%%%

\section{Characterization of subleading graphs}
\label{sect:graphtheory}

In this section, we will work with ordinary graphs $G=(V,E)$ with each vertex having even degree and undirected edges.

\subsection{Preprocessing and good cycles}

We define a notion of preprocessing which simplifies a graph without significantly changing the moment sum in Proposition \ref{prop:graphrep}.

\begin{figure}[t]
\begin{center}
    \includegraphics[scale=.4]{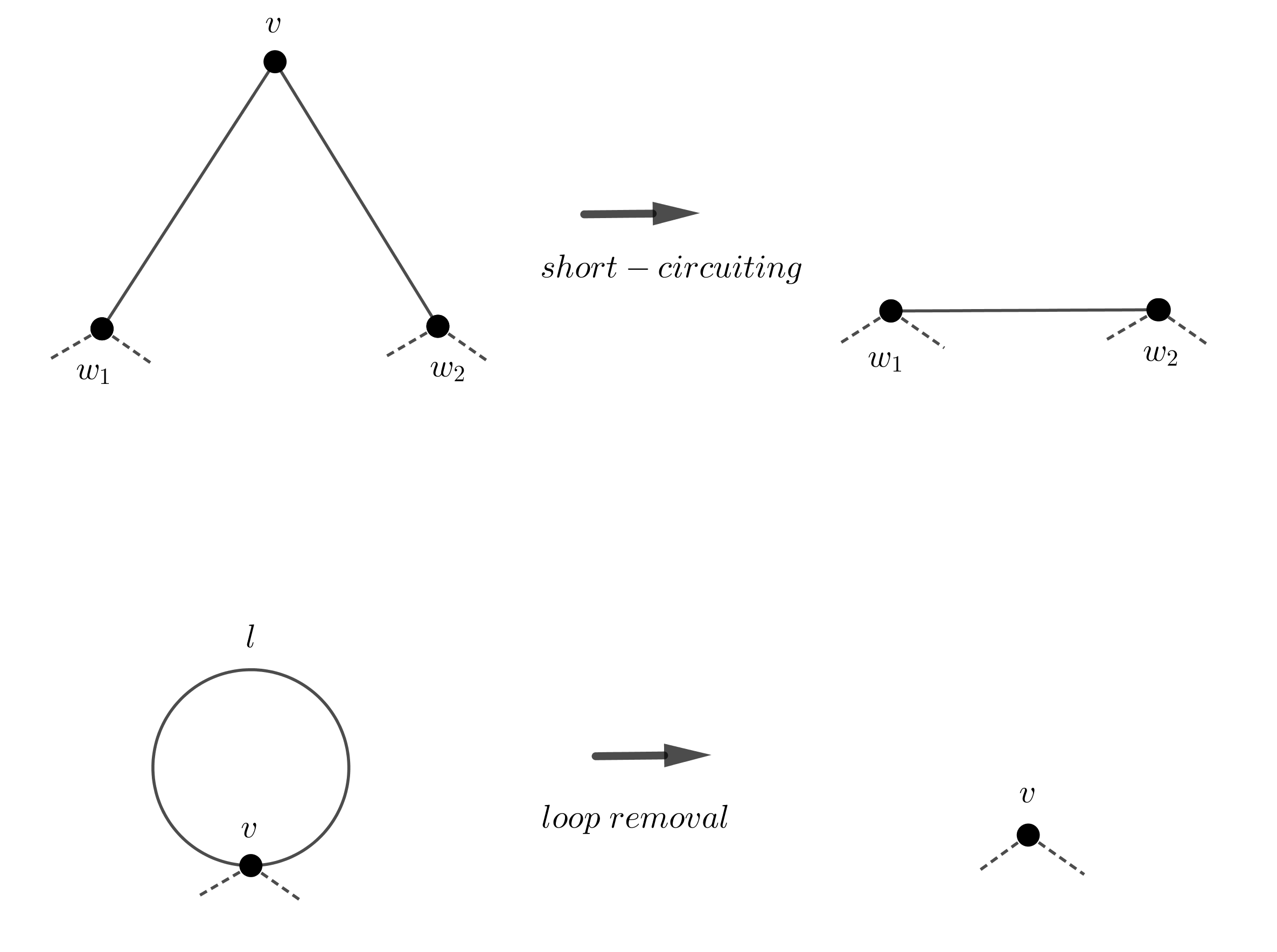}
    \caption{The two types of preprocessing}
\end{center}
\end{figure}

\begin{defn}[Preprocessing]\label{defn:preprocessing}
Consider a graph $G=(V,E)$ such that each vertex has even degree. We will iteratively apply the following two processes when possible.

\begin{enumerate}[label=(\roman*)]
    \item  
    Short-circuiting: If the graph $G$ has a vertex $v$ such that $v$ has only 2 edges $(v,w_1)$ and $(v,w_2)$, then the graph $\curly{S}(G,v)$ with the vertex $v$ short-circuited is defined as follows. From the graph $G$ we remove $v$ and its adjacent two edges and finally we replace them by the edge $(w_1,w_2)$
    
    \item 
    Loop removal: If the graph $G$ has a self-loop, $l$, at the vertex $v$, then $\curly{L}(G,l)$ is the graph $G$ with loop $l$ removed.
\end{enumerate}
A fully preprocessed graph is a graph $H$ upon which no preprocessing step can be applied.
\end{defn}

The structure we identify for characterizing leading versus subleading graphs is the following ``good cycle''.

\begin{figure}[t]
\begin{center}
    \includegraphics[scale=.4]{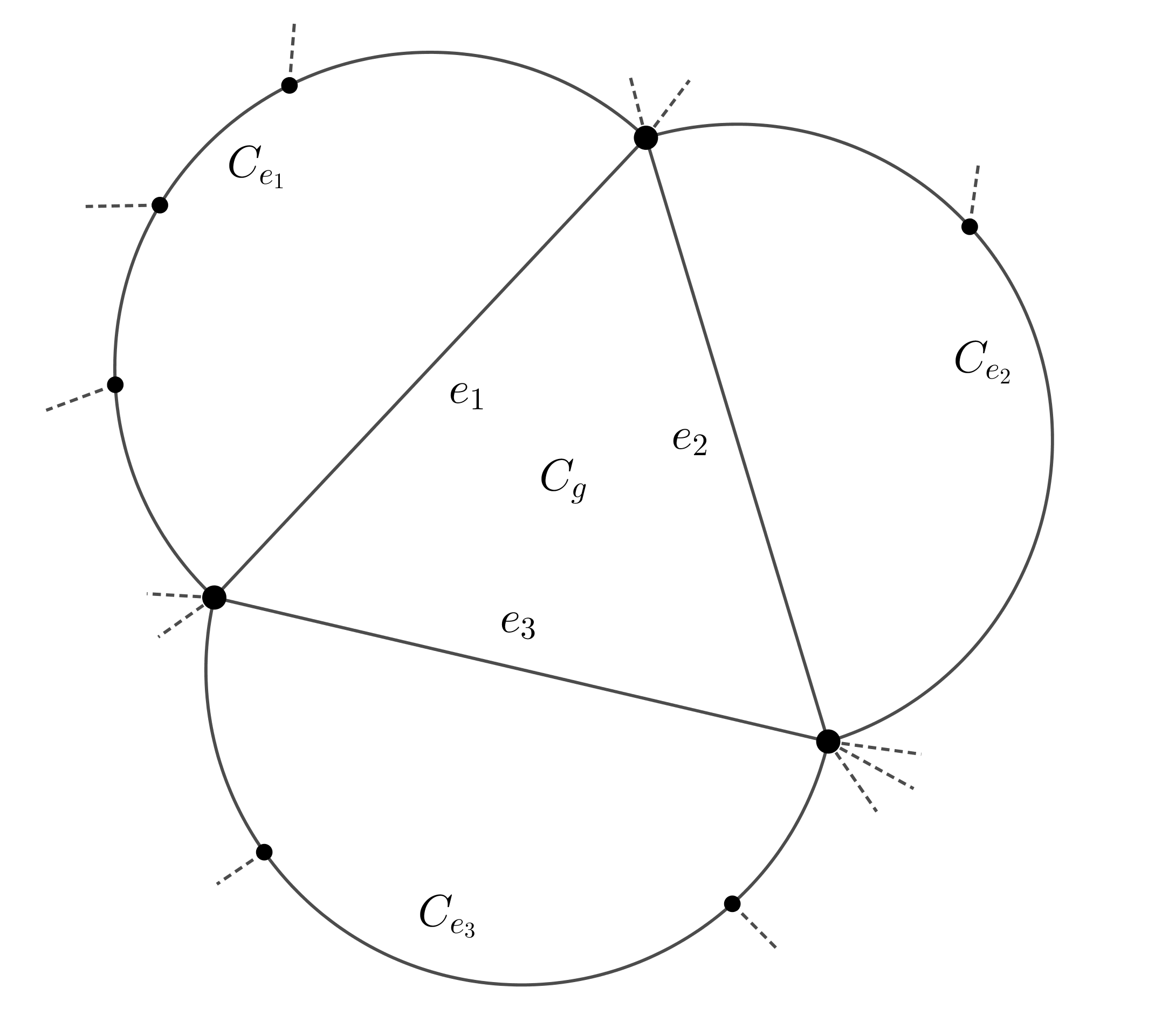}
\end{center}
\caption{An example of a good cycle $C_g=e_1\to e_2\to e_3\to e_1$. The semicircles represent the cycles $C_{e_i}$ which intersect $C_g$ only at $e_i$, and the dashed lines represent their connections to the rest of the graph.}
\end{figure}

\begin{defn}[Good cycle]\label{defn:good}
A good cycle $C$ is a simple cycle such that for every edge $e \in C$ there exists a cycle $C_{e}$ such that $C_{e}\cap C = \{e\}$
\end{defn}

The main result of this section is the following theorem which establishes the existence of a good cycle. In Section \ref{sect:control}, we will show that graphs containing a good cycle are subleading.

\begin{thm}[Existence of good cycle]
\label{thm:good}
If a fully preprocessed graph is not a point, then it has a good cycle.
\end{thm}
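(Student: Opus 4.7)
The plan is to argue by induction on $|E(G)|$, but first I would record two structural facts about a fully preprocessed graph $G$ that is not a point. First, every non-isolated vertex has even degree $\geq 4$, immediately from the preprocessing definition (short-circuiting removes degree-$2$ vertices and loop removal eliminates self-loops, with even degrees preserved throughout). Second, $G$ has no bridges: if $(u,v)$ were a bridge, removing it would leave one component in which $u$ has odd degree $\deg_G(u)-1$ while all other vertices retain their even degree, giving an odd total, contradicting the handshake lemma. Hence $G$ is bridgeless with min degree $\geq 4$ and all degrees even.

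I would then use the following equivalent characterization: a simple cycle $C$ is good if and only if all vertices of $V(C)$ lie in a single connected component of $H := G \setminus E(C)$. Indeed, the auxiliary cycle $C_e$ for $e=(u,v) \in C$ is precisely $e$ together with a $u$-$v$ path in $H$; since $C$ is closed, transitivity along the cycle shows adjacent-on-$C$ connectivity in $H$ forces all of $V(C)$ into one component of $H$.

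For the inductive step I would distinguish two cases. If $G$ is $3$-edge-connected, then I would exhibit a good cycle by invoking (the $3$-edge-connected analog of) Thomassen's non-separating induced cycle theorem: there is a cycle $C$ such that $G \setminus V(C)$ is connected. Since every $v \in V(C)$ has $\geq 2$ off-$C$ neighbors (from min degree $\geq 4$), all of $V(C)$ connects in $H$ to the common component containing $G \setminus V(C)$, and so $C$ is good. If instead $G$ has a $2$-edge-cut $\{e_1, e_2\}$ separating $G$ into $G_1, G_2$, I form a smaller fully preprocessed graph $G_1^\flat$ by replacing $e_1, e_2$ with a single virtual edge between the $G_1$-endpoints (or a loop later removed, if the endpoints coincide), preserving the even-degree parity, and then re-preprocess. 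The induction hypothesis yields a good cycle $C^\flat$ in $G_1^\flat$, which I lift back to $G$: either directly if $C^\flat$ avoids the virtual edge, or by replacing the virtual edge with an internal path of $G_2$ (which exists by the bridgelessness of $G_2$). The base case consists of minimal fully preprocessed graphs, e.g.\ two vertices joined by four parallel edges, where a good cycle is constructed by hand.

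The main obstacle will be the lifting step in the $2$-edge-cut case: I need to ensure that each auxiliary cycle $C_{e^\flat}^\flat$ in $G_1^\flat$ lifts to an honest auxiliary cycle $C_e$ in $G$ that meets the lifted $C$ only on the designated edge $e$. This requires careful bookkeeping of how short-circuited edges of $G_1^\flat$ correspond to paths in $G$, and separate handling of the sub-case where $C_{e^\flat}^\flat$ itself uses the virtual edge (where one must patch in a second $G_2$-path disjoint from the first). A secondary subtlety is in the $3$-edge-connected case: Thomassen's theorem as classically stated applies to $3$-vertex-connected graphs, so for $3$-edge-connected but not $3$-vertex-connected $G$ one likely needs a further reduction based on $2$-vertex-cuts, or a direct Menger-type argument exploiting that the two arcs of a shortest cycle already account for two edge-disjoint paths between the endpoints of any edge of $C$.
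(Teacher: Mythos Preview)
Your reformulation of ``good'' as ``$V(C)$ lies in one component of $G\setminus E(C)$'' is correct, as are the structural facts (even degree $\geq 4$, bridgeless). But both branches of your dichotomy have real gaps. In the $3$-edge-connected branch, Thomassen's non-separating induced cycle theorem requires $3$-\emph{vertex}-connectivity on a simple graph, and there is no standard analogue for $3$-edge-connected Eulerian multigraphs; your Menger fallback does not work either, because having three edge-disjoint $u$--$v$ paths says nothing about whether one of them avoids \emph{all} of $E(C)$ once $|E(C)|\geq 3$, and even a shortest (hence chordless) cycle gives you no control over the connectivity of $G\setminus V(C)$. In the $2$-edge-cut branch, lifting an auxiliary cycle $C^\flat_e$ that itself uses the virtual edge requires a \emph{second} path in $G_2$ edge-disjoint from the one already absorbed into the lifted $C$; but $G_2$ can have bridges (any bridge $f$ of $G_2$ pairs with $e_1$ to form another $2$-edge-cut of $G$, which your hypothesis does not exclude), so this second path need not exist. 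You also have not handled the case where $G_1^\flat$ preprocesses all the way down to a point (possible when the two cut edges share their $G_1$-endpoint).

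For comparison, the paper's proof is self-contained and uses neither Thomassen nor an edge-cut decomposition. It takes a minimal counterexample $G$, contracts an arbitrary edge $\tilde e$, lifts the good cycle of $G/\tilde e$ (which exists by minimality) back to a cycle $\hat C\subset G$, classifies each edge of $\hat C$ as ``type I'' or ``type II'' according to whether its auxiliary cycle is forced to pass through $\tilde e$, and then derives a contradiction from the parity of the number of type~II edges via repeated use of tailor-made bypass lemmas. The contraction move sidesteps precisely the connectivity bookkeeping that is blocking your approach, at the price of a longer case analysis.
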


We will prove this theorem by contradiction. From now on, we assume for a contradiction that there exist non-trivial fully preprocessed graphs that have no good cycle. We fix a minimal graph of this kind, call it $G$, where minimal is defined as having the minimal number of vertices. In the following, we will refer to $G$ as the ``smallest counterexample''.

\subsection{A good cycle in the contracted graph $G/ \tilde e$}

\begin{defn}
Fix an edge $\tilde{e}= (\tilde{v},\tilde{w})$ in $G$. The contracted graph $G / \tilde{e}$ is defined by contracting the edge $\tilde{e}$, which means combining the two vertices $\tilde{v}$ and $\tilde{w}$ into a new vertex $v_c$ in $G / \tilde{e}$, and replacing any additional edges between $\tilde v$ and $\tilde w$ by self-loops at $v_c$.
\end{defn}

\begin{lm}\label{lm:fullyprep}
 $G / \tilde{e}$ is a fully preprocessed graph.
\end{lm}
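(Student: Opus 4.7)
The plan is to verify the two defining properties of a fully preprocessed graph for $G/\tilde{e}$: no vertex has degree $2$ and no self-loops. The first is bookkeeping; the second is the substantive step that uses the hypothesis that $G$ is the smallest counterexample.

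For the degree condition, every vertex of $G$ has even degree, and, since the short-circuit rule is inapplicable, degree at least $4$. In $G/\tilde{e}$, every vertex $v\ne v_c$ keeps its original degree, hence $\ge 4$, while the merged vertex has degree $\deg_G(\tilde v)+\deg_G(\tilde w)-2\ge 6$ (any edges parallel to $\tilde e$ become self-loops at $v_c$, each still contributing $2$ to the degree). Hence no vertex of $G/\tilde{e}$ has degree $2$.

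The absence of self-loops in $G/\tilde{e}$ amounts to showing that $\tilde e$ has no parallel edge in $G$, since $G$ itself is self-loop-free. I would argue this by contradiction: the existence of a parallel edge $e'$ to $\tilde e$ forces $G$ to contain a good cycle, contradicting minimality. If $\tilde v$ and $\tilde w$ are connected by $\ge 3$ parallel edges $e_1,e_2,e_3$, then $C:=\{e_1,e_2\}$ is a simple cycle and $C_{e_i}:=\{e_i,e_3\}$ witnesses $C$ as good. If they are connected by exactly two edges $\tilde e,e'$, then the same $C=\{\tilde e,e'\}$ is good whenever $G\setminus\{\tilde e,e'\}$ still contains a $\tilde v$-to-$\tilde w$ path $P$, since concatenating $P$ with $\tilde e$ (resp.\ $e'$) gives the required auxiliary cycle $C_{\tilde e}$ (resp.\ $C_{e'}$).

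The remaining obstacle — and the main hurdle of the proof — is the possibility that $\{\tilde e,e'\}$ forms a $2$-edge-cut, separating $G$ into components $A\ni\tilde v$ and $B\ni\tilde w$. By parity, deleting $\tilde e,e'$ drops $\deg(\tilde v)$ and $\deg(\tilde w)$ by $2$ each, so $A$ and $B$ remain connected graphs with all even degrees, and each therefore admits an Eulerian circuit through its cut-endpoint. The plan is either to stitch such cycles in $A$ and $B$ to the edges $\tilde e,e'$ to build a good cycle in $G$, or to apply the minimality of $G$ to fully preprocessed closures of $A$ and $B$ (resolving the cut-endpoints by short-circuiting) and transport the resulting good cycle back to $G$. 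Closing this $2$-edge-cut case is the essential difficulty of Lemma~\ref{lm:fullyprep}.
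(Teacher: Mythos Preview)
Your argument is sound through the cases of $\ge 3$ parallel edges and of exactly two edges with $G\setminus\{\tilde e,e'\}$ still connected; these match the paper. The genuine gap is precisely where you flag it: the $2$-edge-cut case. You outline two plans but execute neither, and neither is obviously viable as stated. The first plan (``stitch Eulerian circuits in $A$ and $B$ to $\tilde e,e'$'') produces a cycle in $G$, but a good cycle requires, for \emph{each} of its edges, an auxiliary cycle meeting it only there; Eulerian circuits in $A,B$ do not supply these auxiliary cycles. The second plan (``apply minimality to fully preprocessed closures of $A$ and $B$'') is closer in spirit, but you would have to control what preprocessing does to $A$ and $B$ separately, and in particular rule out that one of them preprocesses to a point.

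The paper closes this case more directly, and the idea is worth internalizing. Rather than working with $A$ and $B$ separately, one observes that $G/\tilde e$ has exactly one self-loop $l$ at $v_c$ (coming from $e'$), and that $\mathcal{L}(G/\tilde e,l)$ is already fully preprocessed: every vertex other than $v_c$ keeps degree $\ge 4$, and $v_c$ has degree $\deg_G(\tilde v)+\deg_G(\tilde w)-4\ge 4$. Since $\mathcal{L}(G/\tilde e,l)$ has strictly fewer vertices than $G$, minimality gives a good cycle $C_{\mathrm{good}}$ there. The key structural point is that $v_c$ is a \emph{cut-vertex} of $\mathcal{L}(G/\tilde e,l)$ (precisely because $\{\tilde e,e'\}$ was a $2$-edge-cut of $G$), so any simple cycle, in particular $C_{\mathrm{good}}$ and each of its witness cycles $C_e$, lies entirely on the $A$-side or entirely on the $B$-side. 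Hence their lifts to $G$ use neither $\tilde e$ nor $e'$, and $C_{\mathrm{good}}$ lifts to a good cycle in $G$, contradicting the choice of $G$. This is the missing step you need.
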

\begin{proof}
We will perform some case analysis based on the number of edges between $\tilde{v}$ and $\tilde{w}$ in $G$.

\textit{Case 1:} There are at least $3$ edges between $\tilde{v}$ and $\tilde{w}$. In this case, we claim that the two edge cycle consisting of any two edges $e_1,e_2$ between $v$ and $w$ will be a good cycle in G. Let $e_3$ be a third edge between $v$ and $w$, then $e_1 \cup e_3$ will be a cycle that intersects $e_1 \cup e_2$ in only $e_1$ while $e_2 \cup e_3$ will be a cycle that intersects $e_1 \cup e_2$ in only $e_2$. Clearly, then $e_1 \cup e_2$ is a good cycle in G, which is a contradiction to the choice of $G$.

\textit{Case 2:} Now we consider the case that there are exactly two edges $e_1,e_2$ in between the vertices $\tilde{v}$ and $\tilde{w}$. If $G \setminus \{e_1,e_2\}$ is a connected graph, then there is a path, $p$, between $\tilde{v}$ and $\tilde{w}$ that does not use either of the edges $\tilde{v}$ and $\tilde{w}$. We can then argue that $e_1 \cup e_2$ is a good cycle. Indeed, the good cycle conditions are verified by the two cycles $e_1 \cup p$ and $e_2 \cup p$.

If the graph disconnects upon removing the edges $e_1:= \tilde{e}$ and $e_2$, then one can check that $G / \tilde{e}$ will have one self-loop $l$ at $v_c$ and the graph $\mathcal{L}(G / \tilde{e},l)$ will need no preprocessing; this is because all vertices in $\mathcal{L}(G / \tilde{e},l)$ will have degree at least 4. Since $G$ was a smallest counterexample, there necessarily exists a good cycle $C_{\text{good}}$ in $\mathcal{L}(G / \tilde{e},l)$. (Notice that already $G/\tilde e$ has strictly fewer vertices than $G$.) Since $C_{\text{good}}$ is simple, one can check that its lift to $G$ can be chosen such that it uses neither of the edges $e_1$ or $e_2$. This would imply that $G$ has a good cycle, which is a contradiction.

\textit{Case 3:} There is only 1 edge in between $\tilde{v}$ and $\tilde{w}$. Then $G / \tilde{e}$ will have no self-loops and all vertices in $G / \tilde{e}$ will have degree at least 4. No further preprocessing steps need to be taken.
\end{proof}

\begin{comment}
 Since we have assumed that $G$ is the smallest counterexample, we would like to say then that $G / \tilde{e}$ must have a good cycle. However, before we can do this, we first have to ensure that $G / \tilde{e}$ has been fully preprocessed. Note that all vertices in $G / \tilde{e}$ must all have degree at least 4; this ensures that the very first preprocessing step that must occur is the removal of a self-loop which must be located at the vertex $v_c$. If there are more than 1 of these self-loops, this must mean that there had to been at least 3 edges between $\tilde{v}$ and $\tilde{w}$ in $G$. This structure would immediately give a good cycle in $G$; one could just some two of the edges between $\tilde{v}$ and $\tilde{w}$. 

Now consider the case that there is only 1 self-loop,$l$, at $v_c$, we claim that if we remove this self-loop, then there will be no further preprocessing step to undergo. The total degree at $v_c$ of $\mathcal{L}(G / \tilde{e}, l)$ must be $d_{G}(\tilde{v}) + d_{G}(\tilde{w}) -4 \ge 4$ where $d_{G}(v)$ denotes the degree of $v$ in the graph $G$. Clearly all other vertices of $\mathcal{L}(G / \tilde{e}, l)$ will have degree greater than or equal to 4 and there can be no more self-loops. Thus, we cannot further preprocess the graph $\mathcal{L}(G / \tilde{e}, l)$.
\end{comment}

Now, we combine Lemma \ref{lm:fullyprep} with the fact that $G$ is the smallest counterexample to find a good cycle in $G / \tilde{e} $. (Notice that $G/\tilde e$ has strictly fewer vertices than $G$.) Call $C_{\text{good}}$ the good cycle in $G/ \tilde{e}$. Define $\hat{C}$ as the lift of $C_
{\text{good}}$ to the graph $G$. Notice that if the cycle $\hat{C}$ does not use the edge $\tilde{e}$, then $\hat{C}$ will be a good cycle in $G$ and $G$ cannot be the smallest counterexample. 

\subsection{Type I and type II edges}

We now delve into the major case of our analysis, where $\hat{C}$ contains the edge $\tilde{e}$.

We first divide the edges of the cycle $\hat{C}$ into two types.
\begin{defn}[Type I and II edges]\label{def:edgetyp}
Consider the graph $G$ and the lifted good cycle $\hat{C}$ as constructed earlier.
\begin{enumerate}
    \item Type I edge: An edge $e_1 \in \hat{C}$ with $e\neq \tilde e$ will be considered a type I edge if there exists a cycle $\hat C_{e_1}$ in $G$ such that $\hat C_{e_1} \cap \hat{C} = \{ e_1 \}$
    \item Type II edge: A type II edge is an edge $e_2\in\hat C$, $e_2\neq \tilde e$ that is not type I.
\end{enumerate}
\end{defn}
\begin{figure}
\begin{center}
    \includegraphics[scale=0.5]{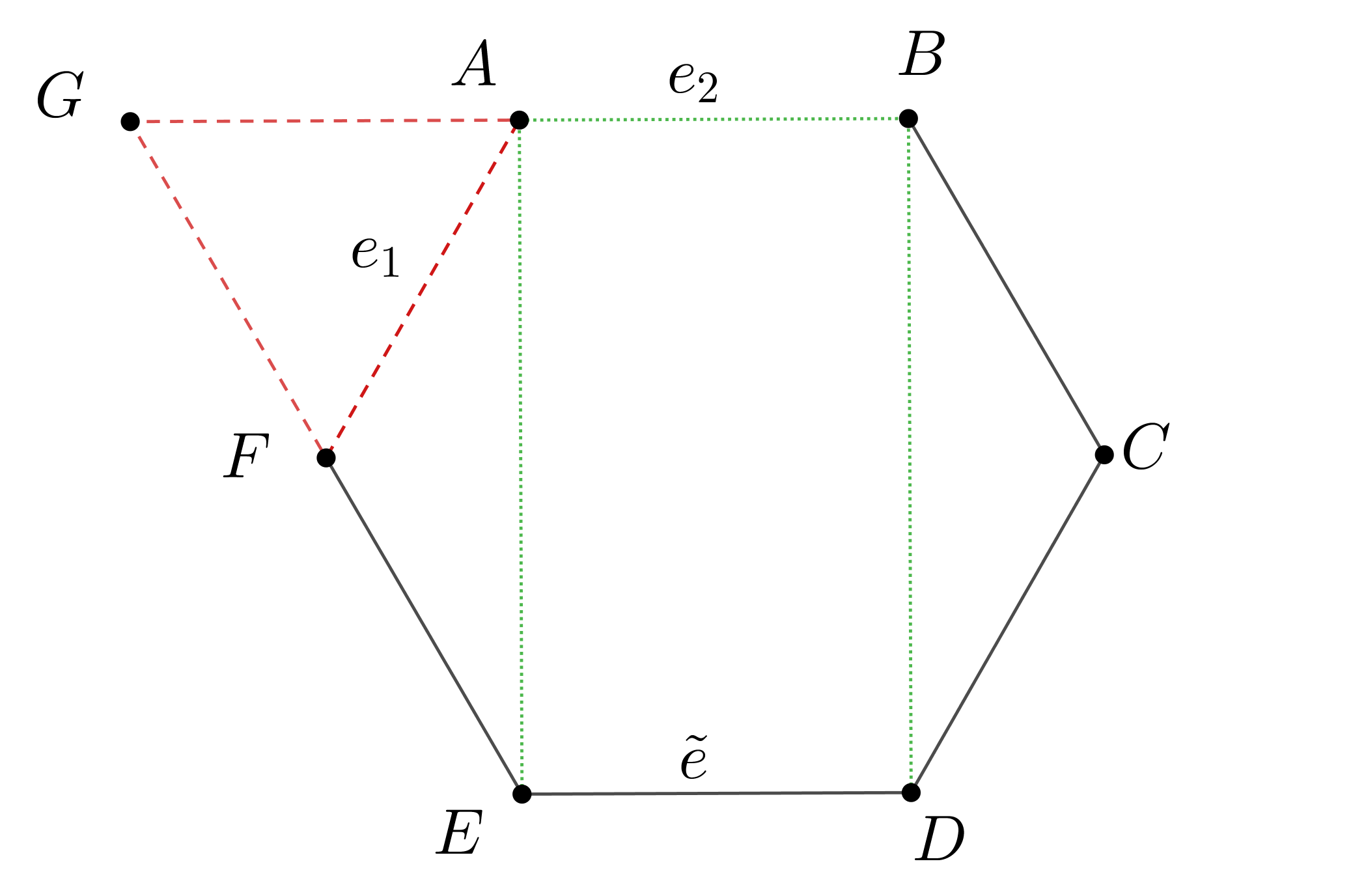}
\end{center}
\caption{In this example of a cycle $\{A,B,C,D,E,F\}$, the red edge $e_1$ is a type I edge, while the green edge $e_2$ is a type II edge.}
\end{figure}

\begin{lm}\label{lm:typeIIcycle}
For any  type II edge $e_2$ there exists a cycle $\hat C_{e_2}$ in $G$ such that $\hat C_{e_2} \cap \hat{C} = \{e_2,\tilde{e} \}$.
\end{lm}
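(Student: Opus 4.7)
The plan is to inherit the required cycle $\hat C_{e_2}$ from the good cycle structure in the contracted graph $G/\tilde e$ and lift it carefully to $G$. Since $e_2 \in \hat C$ and $e_2 \neq \tilde e$, the edge $e_2$ also sits in $C_{\text{good}}$. Because $C_{\text{good}}$ is a good cycle in $G/\tilde e$, there exists a cycle $C_{e_2}$ in $G/\tilde e$ with $C_{e_2} \cap C_{\text{good}} = \{e_2\}$, and we may assume it is a simple cycle by extracting a simple subcycle through $e_2$ if needed. I would then lift $C_{e_2}$ back to $G$ and perform a short case analysis according to whether it visits the contraction vertex $v_c$ and, if so, how its two adjacent edges attach to $\tilde v$ and $\tilde w$.

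Concretely, I would split into three cases. First, if $v_c \notin C_{e_2}$, the lift is verbatim a cycle in $G$ that does not use $\tilde e$; since $C_{e_2} \cap C_{\text{good}} = \{e_2\}$, its intersection with $\hat C$ is exactly $\{e_2\}$, making $e_2$ a type I edge and contradicting our assumption. Second, if $v_c \in C_{e_2}$ and the two edges of $C_{e_2}$ incident to $v_c$ lift to edges of $G$ incident to the same endpoint of $\tilde e$ (say both to $\tilde v$), then the lifted walk is already closed at $\tilde v$ and yields a cycle in $G$ avoiding $\tilde e$ whose intersection with $\hat C$ is $\{e_2\}$, again contradicting type II. Third, if the two incident edges lift to different endpoints (one at $\tilde v$ and one at $\tilde w$), the lift is an open walk from $\tilde v$ to $\tilde w$ which becomes a cycle upon adjoining $\tilde e$; this is the desired $\hat C_{e_2}$. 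Since cases one and two are excluded for a type II edge, only the third is possible.

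The main thing to verify carefully is that in the third case the intersection $\hat C_{e_2} \cap \hat C$ is exactly $\{e_2, \tilde e\}$, with no spurious extra edge. The edges of $\hat C_{e_2}$ other than $\tilde e$ correspond bijectively to the edges of $C_{e_2}$ (viewed inside $G$), while the edges of $\hat C$ other than $\tilde e$ correspond bijectively to the edges of $C_{\text{good}}$; this is where the choice of lift being faithful on non-contracted edges matters. Hence
\[
\bigl(\hat C_{e_2}\setminus\{\tilde e\}\bigr)\cap\bigl(\hat C\setminus\{\tilde e\}\bigr)
= C_{e_2}\cap C_{\text{good}} = \{e_2\},
\]
and adding $\tilde e$ to both sides yields $\hat C_{e_2} \cap \hat C = \{e_2,\tilde e\}$, as required.

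The anticipated obstacle is the subtlety of the multi-edge/self-loop bookkeeping at $v_c$: the edge $\tilde e$ itself becomes a self-loop in $G/\tilde e$ if there are multiple edges between $\tilde v$ and $\tilde w$, and one needs to confirm that the simple cycle $C_{e_2}$ can be taken not to use such self-loops parasitically (otherwise the naive lift could re-use $\tilde e$). This is handled by the earlier case analysis in Lemma \ref{lm:fullyprep}, which ensures that in the remaining situation $\tilde v$ and $\tilde w$ are joined by a single edge in $G$, so $G/\tilde e$ carries no self-loop at $v_c$ coming from $\tilde e$ and the lift is unambiguous.
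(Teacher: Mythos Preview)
Your proof is correct and follows essentially the same approach as the paper's: lift the cycle $C_{e_2}$ from the good-cycle structure in $G/\tilde e$, observe that the lift's intersection with $\hat C$ is either $\{e_2\}$ or $\{e_2,\tilde e\}$, and rule out the former by the definition of type II. Your version is more explicit than the paper's (which compresses your three cases into the single sentence ``the only possible change is that we may add the removed edge $\tilde e$''), and your final paragraph correctly identifies that the case analysis in Lemma~\ref{lm:fullyprep} forces us into the single-edge situation, so no self-loop ambiguity arises at $v_c$.
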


\begin{proof}
Consider a type II edge $e_2$.
We know that since $C_{\text{good}}$ was a good cycle in $G / \tilde{e}$, there exists a cycle $C_{e_2}$ whose intersection with $C_{\text{good}}$ is only $e_2$. When we lift these two cycles to $G$, the only possible change is that we may add the removed edge $\tilde{e}$. 

Letting $\hat C_{e_2}$ be the lift of $C_{e_2}$ to $G$, we see that the intersection of $\hat C_{e_2}$ with $\hat{C}$ could either be $\{e_2\}$ or $\{\tilde{e}, e_2\}$. If it were the former, then we would say $e_2$ is a type I edge. Since $e_2$ is a type II edge, it must be the latter.
\end{proof}

We can engage in casework depending on whether there are an even or an odd number of edges of type II. We recall that our overall goal is to derive a contradiction (to the assumption that a smallest counterexample $G$ exists).

\subsection{Excluding an even number of type II edges}

\begin{prop}\label{prop:typeIIodd}
The assumption that there is an even number of type II edges leads to a contradiction.
\end{prop}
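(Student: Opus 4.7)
My approach is to argue by contradiction. Supposing that $\hat{C}$ has $2m$ Type II edges (with $m \geq 0$), I would construct a good cycle in $G$, contradicting the assumption that $G$ is the smallest counterexample.

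The base case $m=0$ serves as a template. Every non-$\tilde{e}$ edge of $\hat{C}$ is then Type I with an attached sidecar $\hat{C}_{e_1}$. I would build a sidecar for $\tilde{e}$ by considering, in the $\mathbb{F}_2$-cycle space, the symmetric difference $F = \bigoplus_{e_1} \hat{C}_{e_1}$ taken over all Type I edges: since each Type I sidecar avoids $\tilde{e}$ entirely and meets $\hat{C}$ only in its own edge, $\hat{C} \oplus F$ is an Eulerian subgraph meeting $\hat{C}$ only in $\{\tilde{e}\}$. Any simple cycle through $\tilde{e}$ in its cycle decomposition is then a sidecar for $\tilde{e}$, so $\hat{C}$ itself is a good cycle, contradiction.

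For the main case $m \geq 1$, I would pair the Type II edges. For each pair $(e_2^{(2j-1)}, e_2^{(2j)})$ I consider $\hat{C}_{e_2^{(2j-1)}} \oplus \hat{C}_{e_2^{(2j)}}$, which by Lemma~\ref{lm:typeIIcycle} avoids $\tilde{e}$ and meets $\hat{C}$ exactly in $\{e_2^{(2j-1)}, e_2^{(2j)}\}$. Any simple cycle in the decomposition meeting $\hat{C}$ in only one of these two edges would serve as a sidecar in $G$ promoting that edge to Type I, a contradiction; so I may select a simple cycle $D_j$ containing both $e_2^{(2j-1)}$ and $e_2^{(2j)}$. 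The set $C^* := \hat{C} \oplus D_1 \oplus \cdots \oplus D_m$ is an Eulerian subgraph in which every Type II edge cancels (each appears once in $\hat{C}$ and once in the corresponding $D_j$), while $\tilde{e}$ and all Type I edges of $\hat{C}$ persist. Extracting a simple cycle $\hat{C}^*$ through $\tilde{e}$ from $C^*$ yields a candidate good cycle.

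The main obstacle is verifying that $\hat{C}^*$ is indeed good in $G$. A sidecar for $\tilde{e}$ is automatic, since $\hat{C}$ itself meets $\hat{C}^*$ exactly in $\{\tilde{e}\}$ by construction; sidecars for surviving Type I edges $e_1 \in \hat{C}^*$ are candidates from the $\hat{C}_{e_1}$'s; and new edges of $\hat{C}^*$ coming from the $D_j$'s require sidecars that must be constructed from the surrounding graph (e.g.\ from fragments of $\hat{C}$ or of other sidecars). The delicate point is checking that these candidate sidecars meet $\hat{C}^*$ in exactly one edge rather than several. I would handle this either by iterating the construction on $\hat{C}^*$ with its reclassified Type I/II edges (eventually reducing to the $m=0$ base case and applying the template above), or by invoking the minimality of $G$: any irresolvable obstruction should produce a strictly smaller fully preprocessed graph with no good cycle, contradicting the choice of $G$. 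The fully preprocessed condition, guaranteeing that every vertex has degree at least $4$, is essential throughout to provide enough cycle structure for these constructions.
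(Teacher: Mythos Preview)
Your base case $m=0$ is correct and essentially coincides with the paper's endgame: once every non-$\tilde e$ edge of $\hat C$ is type~I, the symmetric difference of $\hat C$ with all the type~I sidecars furnishes a sidecar for $\tilde e$, so $\hat C$ is good in $G$.

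For $m\geq 1$, however, there is a genuine gap, and one of your intermediate claims is false. You assert that ``$\hat C$ itself meets $\hat C^*$ exactly in $\{\tilde e\}$ by construction,'' but you yourself observed two sentences earlier that all type~I edges of $\hat C$ persist in $C^*$. Since $\hat C^*$ is extracted from $C^*$, it may well contain some of these type~I edges; there is no mechanism forcing $\hat C^*$ to avoid them. So $\hat C$ is \emph{not} automatically a sidecar for $\tilde e$ in $\hat C^*$. More broadly, the verification that $\hat C^*$ is good requires sidecars for every one of its edges, including the new edges coming from the $D_j$, and you have no concrete construction for these. Your two proposed rescues do not work as stated: iterating the construction on $\hat C^*$ is ill-defined because the type~I/II classification was relative to the special contracted edge $\tilde e$, and $\hat C^*$ has no such distinguished edge; and ``invoke minimality'' is not an argument without specifying which strictly smaller fully preprocessed graph you are producing.

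The paper's route avoids this entirely by never leaving $\hat C$. Instead of constructing a new candidate cycle, it fixes an arbitrary type~II edge $\hat e$ and directly builds a sidecar for it in $G$: take the union of $\hat C$ with the cycles $\hat C_e$ for all $e\neq \hat e,\tilde e$, apply the bypass lemma to eliminate each such $e$, and observe that $\tilde e$ then appears an \emph{even} number of times (precisely because the total number of type~II edges is even and $\hat e$ has been excluded). The even bypass lemma (Lemma~\ref{lem:EvBypass}) then removes $\tilde e$, and loop erasure yields a simple cycle meeting $\hat C$ only in $\hat e$. This makes $\hat e$ type~I, contradicting its type~II status; hence there are no type~II edges and the $m=0$ argument finishes. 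The parity you exploit through pairing is exactly the parity the paper exploits through the even bypass lemma, but the paper's argument closes because it produces a single sidecar rather than a whole new cycle whose goodness must be re-verified.
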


\begin{proof}[Proof of Proposition \ref{prop:typeIIodd}]
We will show that if there is an even number of type II edges, then $\hat{C}$ is a good cycle, which contradicts the choice of $G$. 

Let $\hat{e}$ be an arbitrary type II edge in $\hat{C}$. For every edge $e\in \hat C$, we choose an associated cycle $\hat C_e$. (For type I edges, we choose $\hat C_e$ from the definition, for type II edges, we choose it by Lemma \ref{lm:typeIIcycle}.)

Start with the union of cycles 
\begin{equation}
U_{\hat{e}}:= \hat{C} \cup\bigcup_{\substack{e\in \hat{C}:\\ e\ne \hat e,\tilde{e}}} \hat C_{e}
\end{equation}

From this, we want to construct a cycle $\hat{B}_{\hat{e}}$ whose intersection with $\hat{C}$ will only be the edge $\hat{e}$. For every edge $e \ne \hat{e}, \tilde{e}$ in $\hat{C}$ we will have two appearances in $U_{\hat{e}}$, once in $\hat{C}$ and once in $C_{e}$. An application of the proof of the bypass lemma \ref{lem:bypass} will show that we can construct a cycle, not necessarily simple, without an appearance of the edge $e$ for all $e \ne \hat{e}$ or $\tilde{e}$. One can see from the proof that the edge $\tilde{e}$ will appear an even number of times in such a cycle while $\hat{e}$ will appear only once.

By the even bypass lemma \ref{lem:EvBypass}, we obtain a cycle whose only intersection with $\hat{C}$ is $\hat{e}$, but which may not be simple.

There is a simple procedure which one can call ``loop erasure" for turning any cycle that is not simple into a non-trivial simple cycle. Start from the edge $e$ and go along the cycle. When reaching a vertex that is used twice or more, simply cut out the part of the cycle that occurs between its first and final appearance. Eventually, this procedure will result in a simple cycle. Since the edge $\hat e$ was contained only once in the original cycle, the resulting simple cycle cannot be the trivial cycle that traverses the edge $\hat e$ twice back and forth. It is a non-trivial simple cycle containing $\hat e$. This cycle establishes that $\hat e$ is a type I edge, a contradiction to the assumption that it is a type II edge. Hence, the only possible even number of type II edges is zero. However, this implies that all edges are type I and $\hat C$ is a good cycle in $G$, a contradiction to the choice of $G$.
\end{proof}
%\begin{figure}
%    \centering
%    \includegraphics{EvenBypassType1.png}
%    \caption{The red cycle shows the change upon applying the bypass lemma to avoid a type II edge}
%    \label{fig:Evenbypasstype1}
%\end{figure}
%
%
%\begin{figure}
%    \centering
%    \includegraphics{EvenBypassType2.png}
%    \caption{The second type of bypass transformation involving a type II edge}
%    \label{fig:EvenBypasstype2}
%\end{figure}
From now on we assume that there is an odd number of edges of type II.

\subsection{Excluding an odd number of type II edges}

We now consider the complementary case to Proposition \ref{prop:typeIIodd}.

\begin{prop}\label{prop:typeIIeven}
The assumption that there is an odd number of type II edges leads to a contradiction.
\end{prop}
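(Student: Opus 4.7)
The plan is to mimic the proof of Proposition~\ref{prop:typeIIodd}, with a single adjustment that flips the parity of $\tilde{e}$'s appearances in the underlying union of cycles. Fixing an arbitrary type II edge $\hat{e}\in\hat{C}$, I would form
\[
U'_{\hat{e}} := \hat{C} \cup \hat{C}_{\hat{e}} \cup \bigcup_{\substack{e\in\hat{C}\\ e\neq\hat{e},\tilde{e}}} \hat{C}_{e},
\]
differing from $U_{\hat{e}}$ of the even case by the inclusion of $\hat{C}_{\hat{e}}$, which by Lemma~\ref{lm:typeIIcycle} passes through both $\hat{e}$ and $\tilde{e}$. A direct count of multiplicities verifies that every edge of $\hat{C}$ appears an even number of times in $U'_{\hat{e}}$: each $e\in\hat{C}\setminus\{\hat{e},\tilde{e}\}$ appears twice (once in $\hat{C}$, once in $\hat{C}_{e}$); $\hat{e}$ appears twice (once in $\hat{C}$, once in $\hat{C}_{\hat{e}}$); and $\tilde{e}$ appears $\#(\mathrm{type\ II})+1$ times, since it occurs once in $\hat{C}$, once in $\hat{C}_{\hat{e}}$, and once in each $\hat{C}_{e}$ for a type II $e\neq\hat{e}$. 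This count is even because $\#(\mathrm{type\ II})$ is odd by assumption.

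With all edges of $\hat{C}$ appearing an even number of times, iterated applications of Lemma~\ref{lem:bypass} together with Lemma~\ref{lem:EvBypass} produce a closed walk $W\subseteq G$ that uses no edge of $\hat{C}$. Since each bypass cycle $\hat{C}_{e}$ and $\hat{C}_{\hat{e}}$ contributes at least one edge outside $\hat{C}$, and these contributions cannot all cancel against one another, $W$ is non-trivial. Loop erasure on $W$ then extracts a non-trivial simple cycle $D\subseteq G$ with $D\cap\hat{C}=\emptyset$.

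The final step is to convert the existence of $D$ into a contradiction with the choice of $G$ as the smallest counterexample. The plan is to pass to the connected component of $G\setminus E(\hat{C})$ containing $D$ and apply the preprocessing operations of Definition~\ref{defn:preprocessing}, producing a non-trivial fully preprocessed graph strictly smaller than $G$. By the minimality hypothesis, this smaller graph contains a good cycle, which then lifts back to a good cycle in $G$ — contradicting the assumption that $G$ has none. The principal obstacle I foresee is this last step: one must verify both that the preprocessed component is genuinely strictly smaller than $G$ (which may require careful tracking of how vertices on $\hat{C}$ with degree exactly $4$ get short-circuited away) and that the bypass cycles of the lifted good cycle extend coherently from the subgraph back to $G$. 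An alternative route, which may be cleaner, is to splice $D$ with $\hat{C}_{\hat{e}}$ and the bypasses of the type I edges to construct a good cycle in $G$ directly.
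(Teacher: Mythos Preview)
Your parity count is correct, but the step ``iterated applications of Lemma~\ref{lem:bypass} together with Lemma~\ref{lem:EvBypass} produce a closed walk $W$ that uses no edge of $\hat{C}$'' does not go through. The even bypass lemma requires an anchor edge appearing exactly \emph{once} in the walk; in your $U'_{\hat{e}}$ every edge of $\hat{C}$ has even multiplicity, so there is no anchor inside $\hat{C}$, and you have not shown that some edge outside $\hat{C}$ has odd multiplicity. In fact the mod-$2$ reduction of $U'_{\hat{e}}$ can vanish entirely: if two bypass cycles $\hat{C}_{e}$ and $\hat{C}_{e'}$ happen to share all of their external edges, everything cancels. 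Your assertion that ``these contributions cannot all cancel'' is exactly the point that needs proof, and it is not obvious. (Separately, note that a cycle in $G\setminus E(\hat{C})$ exists for a much cheaper reason: every vertex of $G$ has degree $\ge 4$, so after removing the simple cycle $\hat{C}$ every vertex still has degree $\ge 2$. So your whole first construction is not where the content lies.)

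The real difficulty is precisely the step you flag as ``the principal obstacle'': passing to a smaller subgraph, preprocessing, finding a good cycle by minimality, and lifting it back to $G$. This is not a detail --- it is essentially the entire proof. The paper carries this out, but only after introducing the sign decomposition $B_{+},B_{-}$ of the type II endpoints (Definition~\ref{def:Sign}), splitting into the cases where $B_{+}$ and $B_{-}$ are connected or disconnected in $G\setminus\hat{C}$, proving a structural result (Proposition~\ref{prop:prepcharac}) that constrains which preprocessing steps can occur in the subgraph $G_{+}$, and then analyzing three sub-cases for how the lifted good cycle interacts with the short-circuited boundary vertices. Your sketch does not supply any of this structure, and in particular gives no mechanism for controlling how short-circuiting at boundary vertices of $\hat{C}$ distorts the lifted cycle. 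The alternative you mention at the end (splicing $D$ with $\hat{C}_{\hat{e}}$) is too vague to assess.
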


Notice that, taken together, Propositions \ref{prop:typeIIodd} and \ref{prop:typeIIeven} lead to a contradiction based solely on the existence of the smallest counterexample $G$, and hence establish Theorem \ref{thm:good}.

In the remainder of this section, we will prove Proposition \ref{prop:typeIIeven}. From now on, we assume that there are an odd number of type II edges. We divide their endpoints into two categories --- positive and negative ones.
\begin{defn}[Positive and negative vertices]\label{def:Sign}
In order to properly define this notion, we need to give an orientation to the cycle $\hat{C}$, taken to start at the special edge $\tilde{e}=(\tilde v,\tilde w)$. Letting $e_0:=\tilde{e}$, we define the oriented version of the cycle $\hat{C}$ by
\begin{equation}
    \hat{C}_{\to} := e_0 \rightarrow e_1 \rightarrow e_2 \rightarrow ... \rightarrow e_m \rightarrow e_0
\end{equation}
Given this ordering of the edges along the cycle, we can order the type II edges $\hat{e}_1,\ldots,\hat{e}_k$ according to the order in which they are visited by  $\hat{C}_\to$. Namely, if $e_{i_1},e_{i_2},\ldots,e_{i_k}$ are the edges of type II on the cycle $\hat{C}_\to$ with $i_1<i_2<\ldots<i_k$, then we set $\hat{e}_l = e_{i_l}$. We will also let $\hat{e}_0$ be our special edge $\tilde{e}$ and, for the purposes of this definition, we consider it as an edge of type II. Each edge $\hat{e}_l$ will inherit its orientation from the one assigned to $\hat{C}$; we can write each edge with its orientation as $\hat{e}_l = \hat{v}_l \rightarrow \hat{w}_l$. 

We can now define the set of positive vertices, $B_{+}$, and negative vertices, $B_{-}$ 
\begin{align}
    &B_{+}:= \bigcup_{l= 0 \text{ mod } 2} \hat{v}_l \cup \bigcup_{l = 1 \text{ mod } 2} \hat{w}_l\\
    &B_{-}:= \bigcup_{l=0 \text{ mod } 2} \hat{w}_l \cup \bigcup_{l =1 \text{ mod } 2} \hat{v}_l
\end{align}
Moreover, we define $\tilde B_+$ as the vertices in the good cycle $\hat C_{\to}$ that lie in between the $B_+$-vertices, and analogously, we define $\tilde B_-$ as the vertices that lie in between the $B_-$ vertices.
\end{defn}

\begin{figure}
    \centering
    \includegraphics[scale=.35]{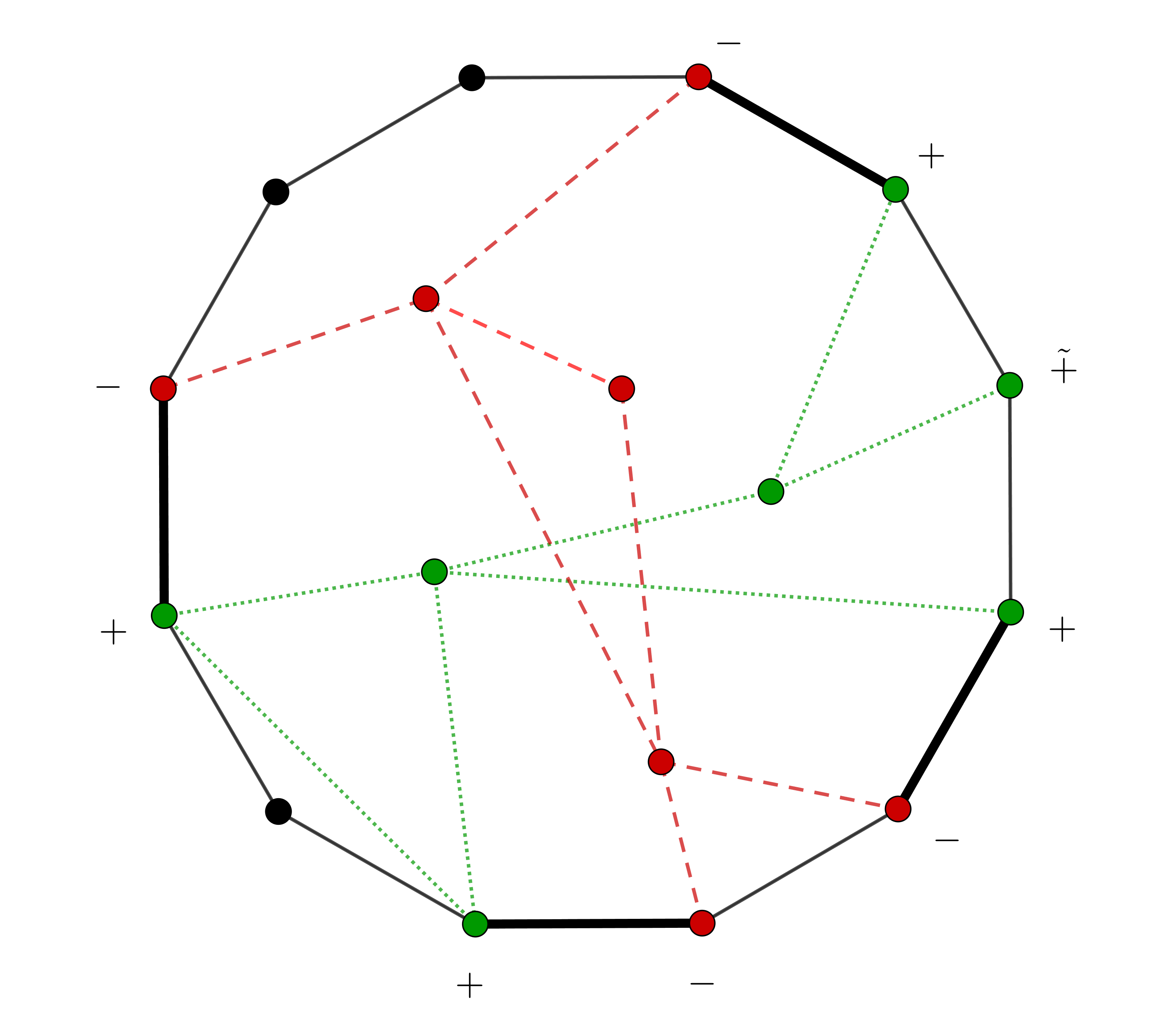}
    \caption{An example of the vertex decomposition into $+$ and $-$ vertices from Definition \ref{def:Sign}. The bold edges are the type $II$ edges in this cycle. The green $+$ vertices on the boundary form the set $B_+$. The graph $G_+$ is constructed from all the internal vertices connected to $B_+$ (and analogously for $B_-$ and $G_-$). There is a $\tilde +$ vertex, which also contributes to the set $G_+$.}
    \label{fig:PositiveNegVer}
\end{figure}

See Figure \ref{fig:PositiveNegVer} for an example of this definition. We remark that the definition uses the fact that we have an odd number of edges of type II. There will also be no problem with our assignment if edges of type II happen to be adjacent to each other.

We will now distinguish various cases concerning $B_\pm$. We make the disclaimer that we will use the symbol ``$\setminus$'' on graphs with two different meanings, either for the removal of vertices, or for the removal of edges, which we believe are clear from context. In particular, $G\setminus \hat{C}$ will mean the graph $G$ with the edges of $\hat{C}$ removed.\\

\textit{Case 1:}  $B_+$ \textit{and} $B_-$ \textit{are connected as subgraphs of } $G\setminus \hat{C}$. We will see that this case is very similar to the case in which we had an odd number of edges of type II. Namely, let $v_{+} \in B_{+}$ be connected to $v_{-} \in B_{-}$; let $p_{v_+,v_-}$ be the path in $G$ between $v_{+}$ and $v_{-}$ not using the edges of the cycle $\hat{C}$ and, for each $1\leq i\leq k$, let $p^{i}_{v_+,v_{-}}$ be the part of the cycle $\hat{C}$ between the vertices $v_{+}$ and $v_{-}$ that contains the edge $\hat{e}_i$.

For every edge $\hat{e}_i$ of type II, we will use this information to construct a cycle $\tilde{C}_{e_i}$ such that $\hat{C}\cap \tilde{C}_{e_i}=\{\hat{e}_i\}$. (This then leads to a contradiction in the same way as in the proof of Proposition \ref{prop:typeIIodd}.) We define the cycle $C^{i} = p_{v_+,v_-} \cup p^{i}_{v_+,v_-}$. We will apply the bypass lemma \ref{lem:bypass} to the union 
$$
\hat U_{e_i}=C^{i} \cup\bigcup_{\substack{e \in p^{i}_{v_+,v_-}\\ e\ne \tilde{e}, \hat{e}_i}} \hat C_{e},
$$
where, for type I edges $e$, we choose $\hat C_e$ from the definition, and for type II edges $e$, we choose it by Lemma \ref{lm:typeIIcycle}.

Notice that by Definition \ref{def:Sign} we must necessarily have an odd number of $\hat{e}_1,\ldots, \hat e_k$ in $p^{i}_{v_+, v_-}$. We apply the bypass lemma \ref{lem:bypass} and use each cycle $\hat C_{e}$ in $\hat U_{e_i}$ to bypass the edge $e\neq \tilde{e}, \hat{e}_i$. This results in a (possibly non-simple) cycle in which the edge $\tilde{e}$ appears an even number of times, which we can then reduce to a simple cycle, our desired $\tilde{C}_{e_i}$ by loop erasure such that $\hat{C}\cap \tilde{C}_{e_i}=\{\hat{e}_i\}$. (Here we use in particular that $\tilde e$ does not appear in $\tilde{C}_{e_i}$.) As mentioned above, since $\hat e_i$ was an arbitrary type II edge, this leads to a contradiction and finishes the case 1.
\\

\textit{Case 2:} $B_+$ and $B_-$ \textit{are disconnected as subgraphs of} $G\setminus \hat{C}$. 

\begin{defn}
We define $G_\pm$ as the subgraph of $G\setminus \hat C$ containing the vertices that are connected to $B_\pm$, respectively, including $B_\pm$ itself.
\end{defn}

See Figure \ref{fig:PositiveNegVer} for an example of how $G_\pm$ are constructed. A subtle point that we want to emphasize is that $G_+$ always contains $B_+$, by definition, but vertices in $\tilde B_+$ do not have to be contained in $G_+$. (More precisely, a vertex in $\tilde B_+$ only lies in $G_+$ if it is connected to $B_+$ via edges not in the good cycle.)

\begin{lm}\label{lm:connected}
The graphs $G_\pm$ are connected graphs. 
\end{lm}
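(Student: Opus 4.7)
The plan is to use the Case 2 disconnection hypothesis together with Lemma \ref{lm:typeIIcycle} to show that all of $B_+$ (resp.\ $B_-$) lies in a single connected component of $G\setminus\hat C$, which will force $G_\pm$ to coincide with that component and hence be connected. By symmetry, it suffices to treat $G_+$.

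First, I would fix an arbitrary type II edge $\hat e_l$ with $l\geq 1$ and invoke Lemma \ref{lm:typeIIcycle} to produce a simple cycle $\hat C_{\hat e_l}$ whose intersection with $\hat C$ is exactly $\{\tilde e,\hat e_l\}$. Deleting these two edges from the simple cycle decomposes it into two vertex-disjoint paths, all of whose edges lie in $G\setminus\hat C$, and which match up the endpoints of $\hat e_l$ with the endpoints of $\tilde e$. Up to relabeling there are exactly two possible pairings: either a path from $\hat v_l$ to $\tilde v$ together with a path from $\hat w_l$ to $\tilde w$, or a path from $\hat v_l$ to $\tilde w$ together with a path from $\hat w_l$ to $\tilde v$.

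Next, I would use the Case 2 hypothesis to eliminate one of these pairings. By Definition \ref{def:Sign} we have $\tilde v=\hat v_0\in B_+$ and $\tilde w=\hat w_0\in B_-$, while the parity of $l$ determines which endpoint of $\hat e_l$ lies in $B_+$ and which in $B_-$. In either parity, exactly one of the two pairings would yield a path in $G\setminus\hat C$ between a $B_+$-vertex and a $B_-$-vertex, which is forbidden by the hypothesis that $B_+$ and $B_-$ are disconnected as subgraphs of $G\setminus\hat C$. Hence the surviving pairing provides a path in $G\setminus\hat C$ connecting the $B_+$-endpoint of $\hat e_l$ to $\tilde v$ (and simultaneously the $B_-$-endpoint to $\tilde w$).

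Applying this for every type II edge $\hat e_l$ with $l\geq 1$ and noting that, by Definition \ref{def:Sign}, every vertex of $B_+$ is either $\tilde v$ itself or the $B_+$-endpoint of some such $\hat e_l$, I would conclude that all of $B_+$ lies in the connected component of $\tilde v$ in $G\setminus\hat C$. Since $G_+$ is by definition the subgraph on all vertices of $G\setminus\hat C$ reachable from $B_+$, it coincides with that single component and is therefore connected; the argument for $G_-$ is identical with $\tilde w$ in place of $\tilde v$. The only real work is the parity bookkeeping in the second step, which identifies the forbidden pairing; I do not expect any further obstacle.
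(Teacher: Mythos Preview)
Your proposal is correct and takes essentially the same approach as the paper: both arguments pick a vertex $v\in B_+$, invoke Lemma~\ref{lm:typeIIcycle} for the corresponding type~II edge to obtain a cycle meeting $\hat C$ only in $\{\tilde e,e\}$, and then use the Case~2 disconnection hypothesis to rule out the path that would land at $\tilde w\in B_-$, thereby connecting $v$ to $\hat v_0=\tilde v$ inside $G\setminus\hat C$. Your version makes the parity bookkeeping explicit where the paper simply states the conclusion in one sentence; one minor remark is that Lemma~\ref{lm:typeIIcycle} does not literally assert that $\hat C_{\hat e_l}$ is simple, so ``two vertex-disjoint paths'' is slightly stronger than what you get for free, but this is harmless since all you need is that the two resulting walks lie in $G\setminus\hat C$ and match endpoints as you describe.
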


\begin{proof}
Fix an arbitrary vertex $v\in B_+$. By definition, $v$ is the endpoint of a type II edge in $\hat C$; call it $e$. By Lemma \ref{lm:typeIIcycle}, there exists a cycle $\hat C_e$ such that $\hat C_e\cap \hat C=\{\tilde e, e\}$. Since we assume that $B_+$ and $B_-$ are disconnected upon removal of the cycle $\hat{C}$, we conclude that the path $\hat C_e$ connects the vertex $v\in B_+$ to the vertex $\hat v_0\in B_+$ along a path $p$ whose vertices are disjoint from $B_-$.
\end{proof}
 
 Notice that at least one of $B_+$ and $B_-$ contains at least two vertices. Without loss of generality, we assume that $B_+$ contains at least two vertices. Hence, by Lemma \ref{lm:connected}, $G_+$ is a connected graph on at least two vertices and each vertex has even degree. These facts ensure that the notions of preprocessing and good cycle are well-defined for $G_+$.
 
 \begin{lm}
 The graph $G_+$ is not fully preprocessed.
 \end{lm}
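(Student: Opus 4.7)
The plan is to argue by contradiction: assume $G_+$ is fully preprocessed, and derive from the minimality of the counterexample a good cycle in $G$ itself.

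First, I would verify that $G_+$ falls within the scope of the minimality hypothesis; that is, $G_+$ should be a non-trivial connected graph with every vertex of even degree. Connectedness is supplied by Lemma \ref{lm:connected}, and non-triviality follows from the standing assumption $|B_+|\geq 2$. Evenness of degrees holds because $G_+$ is a union of connected components of $G\setminus \hat{C}$: vertices off $\hat{C}$ keep their even degree from $G$, while each vertex of $\hat{C}$ loses exactly the two incident edges of the simple cycle $\hat{C}$, so its degree in $G_+$ stays even (and is at least $2$, since all degrees in $G$ are at least $4$).

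Next, I would show the strict vertex-count inequality $|V(G_+)|<|V(G)|$. The special edge $\tilde{e}=\hat e_0$ is treated as type II in Definition \ref{def:Sign}, so its endpoint $\hat w_0$ lies in $B_-\subseteq G_-$. Since Case 2 stipulates that $B_+$ and $B_-$ lie in disjoint components of $G\setminus \hat{C}$, the subgraphs $G_+$ and $G_-$ are vertex-disjoint; in particular, $\hat w_0\notin G_+$.

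With $G_+$ non-trivial, connected, fully preprocessed, and strictly smaller than $G$, the minimality of the counterexample $G$ forces $G_+$ to contain a good cycle $C'$. Since $G_+\subseteq G$, the cycle $C'$ together with each witnessing auxiliary cycle $C'_e\subseteq G_+$ are cycles in $G$, and the intersection $C'_e\cap C'=\{e\}$ is computed identically in either ambient graph. Hence $C'$ would be a good cycle in $G$, contradicting the defining property of the counterexample. The only real subtlety in this plan is confirming that $G_+$ consists of even-degree vertices so that the preprocessing machinery applies to it; this reduces to the simple observation that $\hat{C}$ is a simple cycle, so the ``removal of $\hat{C}$'' reduces each affected vertex's degree by exactly $2$.
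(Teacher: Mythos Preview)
Your proof is correct and follows essentially the same approach as the paper's own proof: assume $G_+$ is fully preprocessed, invoke minimality of $G$ to produce a good cycle in $G_+$, and observe that this cycle remains good when embedded in $G$, yielding a contradiction. You fill in more details than the paper does---the evenness of degrees, connectedness, non-triviality, and the strict vertex inequality---but these are either established in the text immediately preceding the lemma or left implicit in the paper's two-sentence argument.
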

 
 \begin{proof}
Assume for a contradiction that $G_+$ is fully preprocessed. Since $G$ is the smallest counterexample and $G_+$ has strictly fewer vertices than $G$, this implies that $G_+$ contains a good cycle $C_{\mathrm{good},+}$, which is good relative to $G_+$. When we embed $C_{\mathrm{good},+}$ into the larger graph $G$, then it must still be a good cycle, a contradiction to the choice of $G$.
 \end{proof}
 
 We now consider the effect of preprocessing the graph $G_+$. 
 Since we assumed that $G$ was the smallest counterexample, $G_{+}$ has a good cycle after undergoing preprocessing. (Notice that the graph $G_+$ has strictly fewer vertices than $G$.)

\begin{defn}
We introduce the sets
$$
\curly{B}_\pm:=B_\pm\cup (\tilde B_\pm\cap G_\pm),
$$
which we call the ``boundary vertices'' of $G_\pm$.
 \end{defn}
 
 We first note that the fact that $B_+$ and $B_-$ are disconnected, also implies that the two corresponding sets of boundary vertices are disconnected in $G\setminus \hat C$. This will allow us to focus on the $+$ case in the following.
 
  \begin{lm}\label{lm:disconnected}
The graphs $\curly{B}_+$ and  $\curly{B}_-$ are disconnected in $G\setminus \hat C$.
 \end{lm}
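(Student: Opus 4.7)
The plan is to argue by contradiction: suppose there exists a path $p$ in $G\setminus\hat{C}$ joining some $v_+\in\curly{B}_+$ to some $v_-\in\curly{B}_-$, and derive from this a path in $G\setminus\hat{C}$ between $B_+$ and $B_-$, which directly contradicts the defining hypothesis of Case~2.

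The mechanism is essentially transitivity of connectedness inside $G\setminus\hat{C}$, carried out in three steps. First, observe that $\curly{B}_\pm = B_\pm\cup(\tilde B_\pm\cap G_\pm)\subseteq G_\pm$: indeed $B_\pm\subseteq G_\pm$ by the very construction of $G_\pm$ (which includes $B_\pm$), and $\tilde B_\pm\cap G_\pm\subseteq G_\pm$ trivially. Second, invoke Lemma~\ref{lm:connected} to conclude that $G_+$ and $G_-$ are each connected subgraphs of $G\setminus\hat{C}$, so one can choose a path $q_+$ inside $G_+$ from $v_+$ to some $w_+\in B_+$, and a path $q_-$ inside $G_-$ from $v_-$ to some $w_-\in B_-$. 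Since $G_\pm$ is by definition a subgraph of $G\setminus\hat{C}$, neither $q_+$ nor $q_-$ employs any edge of $\hat{C}$.

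Finally, concatenate $q_+$, $p$, and $q_-$ to obtain a walk from $w_+\in B_+$ to $w_-\in B_-$ that avoids every edge of $\hat{C}$. This exhibits a connection between $B_+$ and $B_-$ in $G\setminus\hat{C}$, contradicting the Case~2 assumption and completing the proof.

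I do not anticipate any serious obstacle: the lemma is really a bookkeeping statement asserting that enlarging $B_\pm$ to $\curly{B}_\pm$ by adjoining the cycle vertices $\tilde B_\pm\cap G_\pm$ does not create any new off-cycle connectivity between the $+$ and $-$ sides, because each adjoined vertex was by construction already connected to $B_\pm$ via a path in $G\setminus\hat{C}$. The only subtle point is keeping straight that vertices of $\tilde B_\pm$ lie on $\hat{C}$, so connectivity claims involving them must be verified through $G_\pm\subseteq G\setminus\hat{C}$ rather than through the cycle itself; the containment $\curly{B}_\pm\subseteq G_\pm$ secures precisely this.
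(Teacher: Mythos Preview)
Your proof is correct and is actually more economical than the paper's. Both arguments are by contradiction and aim to promote a hypothetical path between $\curly{B}_+$ and $\curly{B}_-$ in $G\setminus\hat C$ to one between $B_+$ and $B_-$. You do this in one stroke by invoking Lemma~\ref{lm:connected}: since $\curly{B}_\pm\subseteq G_\pm$ and $G_\pm$ is connected inside $G\setminus\hat C$, any point of $\curly{B}_\pm$ is already joined to $B_\pm$ off the cycle, and concatenation finishes the job. The paper instead works locally: given $\tilde b_-\in\tilde B_-\cap G_-$, it walks along the segment of $\hat C$ from $\tilde b_-$ to the nearest $b_-\in B_-$, observes that every edge of that segment is type~I, and then replaces each such edge by its associated cycle $C_{f_i}$ via the bypass lemma (Lemma~\ref{lem:bypass}) to obtain an off-cycle path from $\tilde b_-$ to $b_-$. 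Your route is shorter because it reuses connectivity of $G_\pm$, which the paper had just established; the paper's route is more self-contained in that it rebuilds the needed path explicitly from the type~I/type~II edge structure, but at the cost of an extra invocation of the bypass machinery.
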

 
 \begin{proof}
 For a contradiction, suppose that there exists path in $G\setminus \hat C$ connecting two vertices $\tilde b_+\in \curly{B}_+$ and $\tilde b_-\in \curly{B}_-$. The idea is to use this path to construct a path connecting $B_+$ to $B_-$, which will contradict the assumption of case 2.
 
 If $\tilde b_-\in B_-$, then set $b_-=\tilde b_-$. Otherwise, let $b_-\in B_-$ be one of the two nearest vertices to $\tilde b_-$ in the good cycle $\hat C$. Let $f_1\to f_2\to \ldots \to f_r$ be the path in the good cycle connecting $b_-$ and $\tilde b_-$. Note that  $f_1,\ldots,f_r$ are necessarily type I edges; we let $C_{f_i}$ be their associated cycles. We can apply the bypass lemma \ref{lem:bypass} $r$ times to bypass each of the edges $f_i$ with the cycle $C_{f_i}$ (which we note avoids the good cycle and hence lies in $G\setminus \hat C$). The resulting path thus connects $b_-$ to $\tilde b_-$ in $G\setminus \hat C$. The same procedure yields a path from $\tilde b_+$ to a vertex $b_+\in B_+$ in $G\setminus \hat C$ (which is the trivial path if $\tilde b_+\in B_+$ already). We can then use the path that we assumed exists between $\tilde b_-$ and $\tilde b_+$ to construct a path connecting $b_-$ to $b_+$, a contradiction.
 \end{proof}

\begin{prop}\label{prop:prepcharac}
The only possible preprocessing steps that can occur for $G_+$ are the short-circuiting of a degree-2 boundary vertex $b\in \curly{B}_+$ that is connected to two distinct internal vertices in $G_+\setminus \curly{B}_+$.
\end{prop}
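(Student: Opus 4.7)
The plan is to exhaust all potential preprocessing steps and verify that only short-circuits of the specified form survive. I would first observe that no loop-removal step can occur in $G_+$: since $G$ is fully preprocessed it has no self-loops, and $G_+$ is a subgraph of $G$ so it has none either. Hence every preprocessing step on $G_+$ must be the short-circuit of a degree-$2$ vertex. Second, I would rule out short-circuiting at internal vertices: if $u \in G_+ \setminus \curly{B}_+$, then $u \notin \hat C$, so every edge of $u$ in $G$ lies in $G\setminus \hat C$; moreover, every neighbor of $u$ is connected to $B_+$ through $u$ itself and so lies in $G_+$. Thus all $G$-edges at $u$ are present in $G_+$, giving $\deg_{G_+}(u) = \deg_G(u) \geq 4$.

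It remains to show that for a degree-$2$ boundary vertex $b \in \curly{B}_+$, the two $G_+$-neighbors of $b$ are distinct internal vertices. I would argue this by contradiction against $G$ being the smallest counterexample, by ruling out the two configurations: (i) both neighbors coincide at a single internal vertex $w$ via a multi-edge $\{e_1, e_2\}$, and (ii) a neighbor of $b$ is itself a boundary vertex $b' \in \curly{B}_+$. In each case the goal is to exhibit a good cycle already in $G$.

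For (i), the candidate good cycle is the $2$-cycle $C = \{e_1, e_2\}$. To establish that $C$ is good, it suffices to produce a path from $w$ to $b$ in $G\setminus\{e_1, e_2\}$. Invoking the WLOG assumption $|B_+|\geq 2$, pick $b' \in B_+$ with $b' \neq b$; by Lemma \ref{lm:connected} there is a simple path $P$ from $w$ to $b'$ in $G_+$. Since the only $G_+$-edges at $b$ are $e_1$ and $e_2$ (both terminating at $w$), $P$ cannot traverse $b$ without repeating $w$, so $P$ avoids $e_1, e_2$ entirely; extending $P$ along a $\hat C$-arc from $b'$ to $b$ completes the required path, so $C$ is good.

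Sub-case (ii) is the main obstacle and needs the most delicate analysis. Here $e = (b, b')$ is a chord of $\hat C$ between two boundary vertices, and the two $\hat C$-arcs $A_1, A_2$ between $b$ and $b'$ yield two candidate cycles $\{e\}\cup A_i$. I would adapt the strategy of Case 1 in the proof of Proposition \ref{prop:typeIIeven}: choose the arc so as to enclose an appropriate (odd) parity of type II edges, then apply the bypass lemmas \ref{lem:bypass} and \ref{lem:EvBypass} to replace each type I edge $f$ in the arc by its associated cycle $\hat C_f$ and each type II edge $\hat e_l$ by the cycle from Lemma \ref{lm:typeIIcycle}, while tracking the parity of occurrences of $\tilde e$. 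The disconnectedness of $\curly{B}_+$ and $\curly{B}_-$ in $G\setminus \hat C$ (Lemma \ref{lm:disconnected}) ensures that the bypass cycles, which live inside $G_+$, do not reintroduce unwanted intersections with $\hat C$. The resulting walk, after loop-erasure, yields a good cycle in $G$, contradicting minimality and completing the proof.
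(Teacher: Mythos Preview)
Your treatment of loop-removal, of internal vertices having degree $\geq 4$, and of case (i) (the double edge to a single internal vertex $w$) is correct and matches the paper's argument (the latter is the paper's Lemma \ref{lm:distance2}(ii)).

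Sub-case (ii), however, misses the key simplification the paper uses and is not clearly worked out. The paper's argument (Lemma \ref{lm:distance2}(i)) is direct: given the chord $e_{1,2}=(b_1,b_2)$ between boundary vertices, take $p_{1,2}$ to be the arc of $\hat C$ between $b_1$ and $b_2$ that does \emph{not} contain $\tilde e$, and show that $p_{1,2}\cup e_{1,2}$ is a good cycle. The point is that since $\tilde e\notin p_{1,2}$, every edge $\hat e\in p_{1,2}$ --- whether type I or type II --- has an associated cycle $C_{\hat e}$ with $C_{\hat e}\cap p_{1,2}=\{\hat e\}$: for type II edges the extra intersection $\tilde e$ from Lemma \ref{lm:typeIIcycle} is harmless because it lies on the \emph{other} arc. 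One application of the bypass lemma with $X=(\hat C\setminus p_{1,2})\cup e_{1,2}$ removes any stray intersection with the chord $e_{1,2}$. No parity tracking or even-bypass lemma is needed at all.

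Your proposed route --- choosing an arc by parity of type II edges and invoking Lemma \ref{lem:EvBypass} --- is more complicated and has gaps: you do not argue that an arc of the required parity exists; the claim that the bypass cycles ``live inside $G_+$'' is unjustified (the cycles $\hat C_f$ from Lemma \ref{lm:typeIIcycle} may well pass through $G_-$); and your final sentence suggests you are producing a single walk whose loop-erasure is good, whereas showing $\{e\}\cup A_i$ is a good cycle requires exhibiting, for \emph{each} edge of that cycle, a separate witness cycle.

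A minor additional point: the proposition concerns the entire preprocessing sequence, not just the first step. The paper closes this by observing that after short-circuiting a degree-$2$ boundary vertex into an edge $(w_1,w_2)$ with $w_1\neq w_2$ both internal, no self-loop is created and $w_1,w_2$ retain degree $\geq 4$, so no new preprocessing is triggered.
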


The proof of the proposition uses the following lemma which characterizes what incidences can happen at boundary vertices $b\in \curly{B}_+$.

\begin{lm}\label{lm:distance2}
 Any boundary vertex $b\in \curly{B}_+$ cannot satisfy the following:
 \begin{enumerate}[label=(\roman*)]
       \item $b$ cannot be connected to another boundary vertex $b'\in \curly{B}_+$ by a single edge in $G_+$.
     \item Assume additionally that $b$ has degree $2$ in $G_+$. Then $b$ cannot be connected twice to the same internal vertex $w\in G_+\setminus \curly{B}_+$ by a single edge in $G_+$.
 \end{enumerate}
 \end{lm}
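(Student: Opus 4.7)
We prove both parts by contradiction, following the scheme of the Case 1 argument in Proposition \ref{prop:typeIIeven}. The goal in each case is to show that for every type II edge $\hat{e}_j$ of $\hat{C}$, there is a cycle in $G$ intersecting $\hat{C}$ exactly in $\{\hat{e}_j\}$. This would mean that every type II edge of $\hat{C}$ is actually type I, so $\hat{C}$ is a good cycle in $G$, contradicting the choice of $G$ as the smallest counterexample.

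For (i), suppose there is an edge $e=(b,b')$ in $G_+$ with $b,b'\in\curly{B}_+$. Since $G_+$ is connected (Lemma \ref{lm:connected}) and contains $B_+$, we may fix paths $p_b$ from $b$ and $p_{b'}$ from $b'$ to vertices $v_b,v_{b'}\in B_+$, both lying in $G\setminus\hat{C}$ (the trivial path is used if $b$ or $b'$ already lies in $B_+$). Let $\hat{e}_{l_1}$ be the type II edge whose $+$-endpoint is $v_b$, with $-$-endpoint $\hat{w}_{l_1}$. For each type II edge $\hat{e}_j$ of $\hat{C}$ we form the candidate cycle
\[
\tilde{C}_j := p_b \cup e \cup p_{b'}^{-1} \cup p^j \cup \hat{e}_{l_1},
\]
where $p^j$ denotes the direction of $\hat{C}$ from $v_{b'}$ to $\hat{w}_{l_1}$ that passes through $\hat{e}_j$. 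Applying the bypass lemma to $\tilde{C}_j$ replaces each type I edge of $\tilde{C}_j\cap\hat{C}$ by its associated cycle and bypasses every type II edge other than $\hat{e}_j$ using the cycle supplied by Lemma \ref{lm:typeIIcycle} (each such bypass introduces one copy of $\tilde{e}$). After loop erasure cancels repeated $\tilde{e}$'s and reduces the result to a simple cycle, we obtain a cycle meeting $\hat{C}$ only in $\{\hat{e}_j\}$, contradicting that $\hat{e}_j$ is type II.

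For (ii), the assumption gives a 2-cycle $e_1\cup e_2$ in $G\setminus\hat{C}$ between $b\in\curly{B}_+$ and the internal vertex $w$. This 2-cycle plays the role of the ``$+$-to-$+$'' shortcut $e$ from (i): combining it with a path $p_b$ from $b$ to $B_+$ and the type II edge $\hat{e}_{l_1}$ at $v_b$ yields analogous candidate cycles $\tilde{C}_j$, to which the same bypass-and-erasure argument applies, producing the same contradiction.

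The main technical obstacle is the parity bookkeeping. Since $v_b$ and $v_{b'}$ both lie in $B_+$, the $\hat{C}$-path between them carries an even number of \emph{broad} type II edges (type II edges together with $\tilde{e}$), whereas Case 1 of Proposition \ref{prop:typeIIeven} hinged on odd parity. The role of the extra $\hat{e}_{l_1}$ in the construction above is precisely to flip this parity, by extending the path to the $-$-side endpoint $\hat{w}_{l_1}$. Verifying that after bypass and loop erasure the final cycle meets $\hat{C}$ exactly at $\{\hat{e}_j\}$ requires a case analysis according to whether $\hat{e}_{l_1}=\tilde{e}$, whether $\tilde{e}$ appears in $p^j$, and whether $p^j$ is chosen as the forward or backward direction of $\hat{C}$; in each subcase one verifies that the count of $\tilde{e}$'s produced by the bypass operations has the correct parity to be cancelled by loop erasure.
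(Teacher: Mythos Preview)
Your strategy differs fundamentally from the paper's, and as written it has real gaps.

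\textbf{What the paper does.} For (i), the paper does not try to show that $\hat C$ is a good cycle. Instead it builds a \emph{new} candidate good cycle directly from the offending edge: take the arc $p_{1,2}$ of $\hat C$ between $b_1,b_2$ that avoids $\tilde e$, and set the candidate to be $p_{1,2}\cup e_{1,2}$. The edge $e_{1,2}$ is handled by $X:=(\hat C\setminus p_{1,2})\cup e_{1,2}$. Any edge $\hat e\in p_{1,2}$ (type I or type II) has its associated cycle $C_{\hat e}$ from Definition \ref{def:edgetyp} or Lemma \ref{lm:typeIIcycle}; since $p_{1,2}$ avoids $\tilde e$, $C_{\hat e}$ meets $p_{1,2}$ only at $\hat e$, and a single bypass with $X$ removes a possible incidence with $e_{1,2}$. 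No parity bookkeeping is needed. For (ii), the paper shows that the two-edge cycle $C_{\mathrm{triv}}$ between $b$ and $w$ is itself a good cycle: since $b$ has degree $2$ in $G_+$, a simple path $p_+$ from $w$ to some $b'\in B_+\setminus\{b\}$ avoids both edges of $C_{\mathrm{triv}}$, and each edge of $C_{\mathrm{triv}}$ is then witnessed by composing it with $p_+$ and an arc of $\hat C$.

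\textbf{Gaps in your approach.} First, the core of your argument for (i)---the parity case analysis---is only announced, not performed. You would need to verify, for each configuration (is $\hat e_{l_1}=\tilde e$? does $\tilde e$ lie on $p^j$? which arc is $p^j$?), that some choice of $p^j$ yields an even number of $\tilde e$-copies after bypassing. You also need to handle $j=l_1$ separately: in that case $p^j$ necessarily contains $\hat e_{l_1}=\hat e_j$, and your explicit extra copy of $\hat e_{l_1}$ cancels it, leaving a cycle that no longer contains $\hat e_j$ at all. Second, your sketch for (ii) does not work as stated. The two-cycle $e_1\cup e_2$ is a loop based at a single boundary vertex $b$; it does not connect two distinct elements of $\curly B_+$, so it is not a ``$+$-to-$+$ shortcut'' in the sense of (i). Your construction produced only one attachment point $v_b$ on $\hat C$, so there is no analogue of the arc $p^j$ from $v_{b'}$ to $\hat w_{l_1}$, and the claimed analogy with (i) collapses. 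The paper avoids this entirely by exhibiting $C_{\mathrm{triv}}$ as the good cycle.
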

 
 \begin{proof}[Proof of Lemma \ref{lm:distance2}]
 Proof of (i). For a contradiction, assume that there is an edge $e_{1,2}$ in $G_+$ between $b_1, b_2 \in \curly{B}_+$. Let $p_{1,2}$ be the part of the cycle $\hat{C}$ between the vertices $b_1$ and $b_2$ that does not contain the edge $\tilde{e}$. We claim that the union $p_{1,2} \cup e_{1,2}$ will be a good cycle in $G$.

First, we consider the edge $e_{1,2}$. Define the cycle $X: =(\hat{C} \setminus p_{1,2}) \cup e_{1,2}$, i.e., the cycle constructed using $e_{1,2}$ and the other part of the cycle $\hat{C}$ not involving $p_{1,2}$. Notice that $X\cap (p_{1,2} \cup e_{1,2})=\{e_{1,2}\}$.

    Next we consider an arbitrary edge $\hat{e}\in p_{1,2}$. There exists a cycle $C_{\hat{e}}$ that will not use any other edge of $p_{1,2}$. Indeed, this is true since we chose $p_{1,2}$ to not use the special edge $\tilde{e}$ and any edge $\hat e$ is either type I or type II. Notice, however, that $C_{\hat{e}}$ might use the edge $e_{1,2}$, a possibility we will now remedy via the bypass lemma. We thus apply the bypass lemma \ref{lem:bypass} to the cycles $X$ and $C_{\hat{e}}$ to construct a new cycle $X_{\hat{e}}$ that uses the edge $\hat{e}$ and satisfies $X_{\hat{e}}\cap (p_{1,2} \cup e_{1,2})=\{\hat{e}\}$. 
    
    This proves that $p_{1,2} \cup e_{1,2}$ is a good cycle in $G$, a contradiction to the choice of $G$. This proves statement (i).
    
    Proof of (ii). Let $b\in \curly{B}_+$ have degree $2$ in $G_+$. For a contradiction, assume that $b$ is connected twice to the same vertex $w$ in $G_{+}\setminus \curly{B}_+$ when restricting to edges in $G_+$. We will now check that the trivial cycle $C_{\mathrm{triv}}$ consisting of these two edges forms a good cycle in $G$ (which then contradicts the choice of $G$). Indeed, by Lemma \ref{lm:connected}, $b$ is connected to $B_+\setminus b$ via edges in $G_+$, and has degree $2$ in $G_+$. Hence, there exists a path $p_+$ in $G_+$ connecting $w$ to some $b'\in B_+\setminus b $. Let $p_{b'\to b}$ be a part of the original good cycle $\hat C$ that connects $b'$ to $b$. Then both edges of $C_{\mathrm{triv}}$ can be composed with $p_+$ followed by $p_{b'\to b}$ to each form a cycle in $G$. These two cycles verify that $C_{\mathrm{triv}}$ is a good cycle. This proves Lemma \ref{lm:distance2}.
 \end{proof}

We are now ready to prove the proposition.

\begin{proof}[Proof of Proposition \ref{prop:prepcharac}]
First, we note that the initial preprocessing step has to occur at one of the boundary vertices $\curly{B}_+$ which has degree less than $2$, since all internal vertices in $G_+\setminus \curly{B}_+$ have degree at least $4$ (because otherwise they could be preprocessed in $G$, which would contradict the choice of $G$). Initially there are no self-loops as $G$ is fully  preprocessed.

Let $b\in \curly{B}_+$ be a boundary vertex where the initial preprocessing step can occur, i.e., $b$ has degree $2$. By Lemma \ref{lm:distance2} (i), $b$ can only be connected to internal vertices in $G_+\setminus \curly{B}_+$ when restricting to edges in $G_+$. By Lemma \ref{lm:distance2} (ii), $b$ is connected to two different vertices $w_1,w_2\in G_+\setminus \curly{B}_+$. Preprocessing then short-circuits $b$, i.e., $b$ is replaced by an additional edge connecting $w_1$ and $w_2$. Afterwards, no further preprocessing steps are necessary at $w_1$ and $w_2$. Indeed, since $w_1$ and $w_2$ are distinct, short-circuiting  cannot create a self-loop and, since $w_1$ and $w_2$ are internal vertices, they have at least degree $4$ (as argued above), so they cannot be subsequently short-circuited. Repeating this procedure for all eligible (i.e., degree-$2$) boundary vertices $b$, we obtain a fully preprocessed graph. This proves Proposition \ref{prop:prepcharac}.
    \end{proof}
    
Now we are ready to give the proof of Proposition \ref{prop:typeIIeven} (and hence Theorem \ref{thm:good}).

\begin{proof}[Proof of Proposition \ref{prop:typeIIeven}]
Assume that there exists an odd number of type II edges. This assumption allows us to define positive and negative vertices and apply the results established in this section. Recall that the goal is now to derive a contradiction. 

Let $P(G_{+})$ be the fully preprocessed graph, where we short-circuit all of the possible boundary vertices as described in Proposition \ref{prop:prepcharac}. Notice that $P(G_{+})$ will necessarily have a good cycle, call it $C_{+}$, by the minimality of $G$. We now check what happens when we elevate this good cycle to the original graph $G$; we will call the lifted cycle $\hat{C}_{+}$. We distinguish the following cases.\\  

\textit{Case (i):} $\hat{C}_{+}$ contains no vertex that gets preprocessed in passing from $G_+$ to $P(G_+)$. Then the lifts of the cycles that establish the fact that $C_+$ is a good cycle in $P(G_+)$ also establish that $\hat C_+$ is a good cycle in $G$.\\

\textit{Case (ii):} $\hat{C}_{+}$ contains exactly one vertex that gets preprocessed in passing from $G_+$ to $P(G_+)$. By Proposition \ref{prop:prepcharac}, this vertex is necessarily a boundary vertex; call it $b\in \curly{B}_+$. Proposition \ref{prop:prepcharac} also implies that $b$ has degree $2$ in $G_+$ and is connected to exactly two vertices $v,w\in G_+\setminus \curly{B}_+$. Recall that the effect of preprocessing the vertex $b$  is to remove it and replace it with an edge $e=(v,w)$.

We claim that $\hat{C}_{+}$ is a good cycle in $G$. First, we consider any edge $e$ in $\hat{C}_{+}$ other than $(b,v)$ and $(b,w)$. The required cycle $C_e$ (intersecting $\hat{C}_{+}$ only at $e$) can be constructed as follows. Recall $C_+$ is a good cycle in $P(G_+)$, so there exists a cycle in $P(G_+)$ that intersects $C_+$ only in $e$. Then we define $C_e$ as the lift of this cycle to $G$, and note that the required condition is verified since the only preprocessing step affecting $\hat C_+$ occurred at $b,v,w$ by assumption.

It remains to find two cycles in $G$ whose intersection with $\hat C_+$ is given by the edge $(b,v)$, respectively $(b,w)$. By applying the bypass lemma \ref{lem:bypass} to the cycles $C_e$ associated to the other edges $e$ in $\hat C_+$ that we constructed above, it is sufficient to find a single cycle containing the edge $(b,v)$, but not $(b,w)$, or vice-versa. To this end, we use Lemma \ref{lm:connected} to find a path $p$ in $G_+$ connecting $b$ to another boundary vertex $b'\in B_+$. Without loss of generality (i.e., by loop erasure), $p$ is simple and thus contains only one of the edges $(b,v)$ or $(b,w)$. We then compose $p$ with a part of the cycle $\hat C$ that connects $b'$ back to $b$. This yields a cycle that contains exactly one of the edges $(b,v)$ or $(b,w)$ as desired. This finishes case (ii) and hence our proof of Proposition \ref{prop:typeIIodd}.\\

\textit{Case (iii):} $\hat{C}_{+}$ contains at least two vertices that get preprocessed in passing from $G_+$ to $P(G_+)$. By Proposition \ref{prop:prepcharac}, these vertices are necessarily boundary vertices in $\curly{B}_+$. We choose two of these, which we call $b_1,b_2\in \curly{B}_{+}$, such that $\hat C_+$ contains a path $p_+$ that connects $b_1$ and $b_2$ and does not visit any other boundary vertex. Moreover, let $p_{1,2}$ be the part of $\hat C_{good}$ that connects $b_1$ with $b_2$ and does not contain $\tilde e$. We claim that the union $p_+\cup p_{1,2}$ is a good cycle in $G$ (which will contradict the choice of $G$). The situation is depicted in Figure \ref{fig:technical}.

\begin{figure}[t]
\begin{center}
    \includegraphics[scale=0.5]{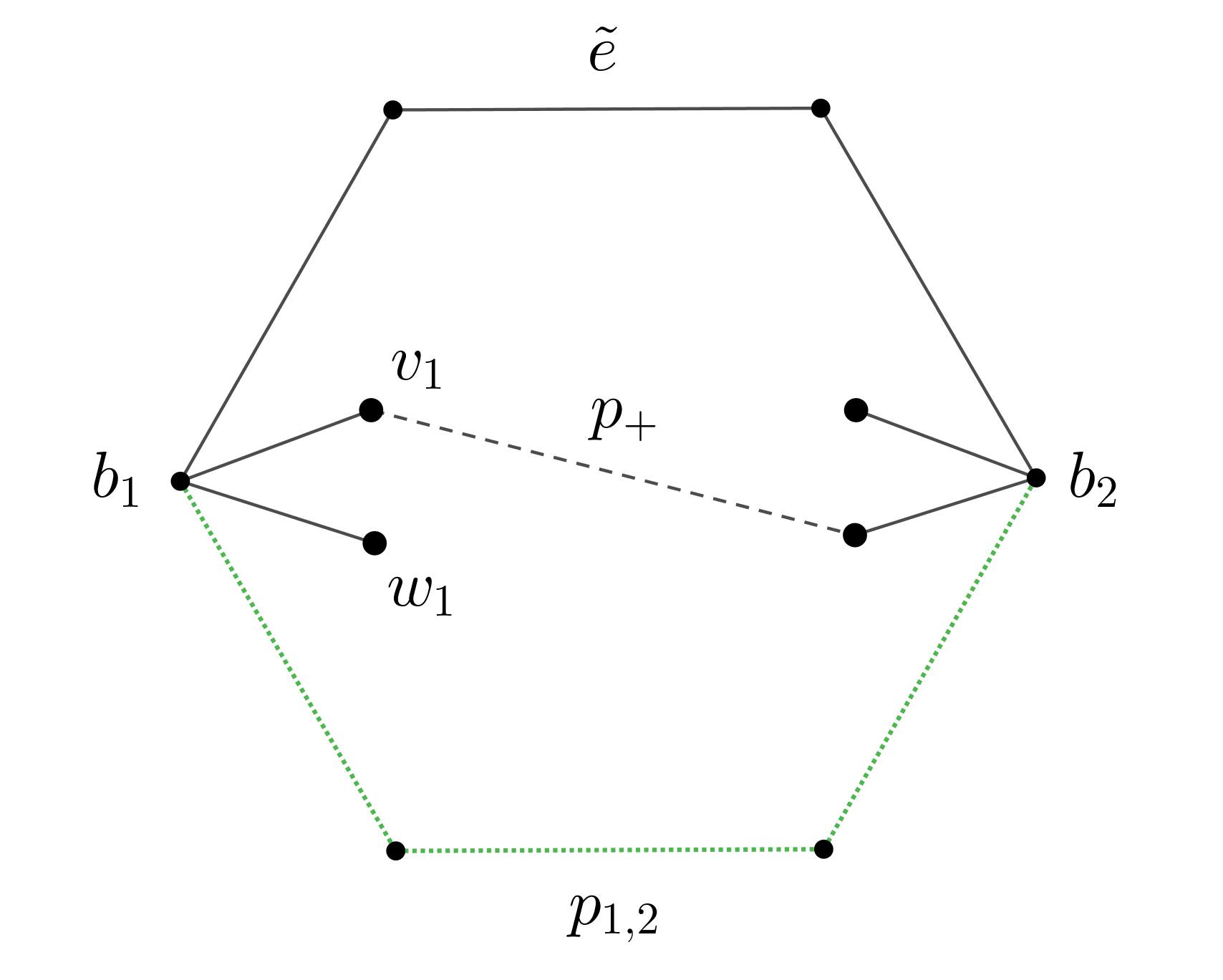}
\end{center}
\caption{A diagram illustrating the construction of the good cycle $p_+\cup p_{1,2}$ in case (iii).}
\label{fig:technical}
\end{figure}

First, consider an edge $e$ of $p_+$. We know that after preprocessing $G_+$, $p_+$ reduces to a part of a good cycle in $P(G_+)$. Hence, there exists a cycle $C_{e,+}$ in $G_+$ such that $C_{e,+}\cap p_+=\{e\}$, and these cycles $C_{e,+}$ embed trivially into $G$ and $C_{e,+}\cap (p_+\cup p_{1,2})=\{e\}$ as desired. (There is a technical point about preprocessing here: According to Proposition \ref{prop:prepcharac}, as, say, $b_1$ gets preprocessed in passing from $G_+$ to $P(G_+)$, it necessarily has degree $2$ in $G_+$ and exactly one of its incident edges is added to build $p_+$ from the good cycle in $P(G_+)$, and we now suppose that the edge $e$ consideration is of this kind. The point is then that also for this kind of edge $e$, there exists the cycle $C_{e,+}$ claimed above. Indeed, since the edge $(v,w)$ in $P(G_+)$ that arises from short-circuiting $b_1$ is necessarily part of the good cycle $C_+$, there exists a cycle $C_{(v,w),+}$ in $P(G_+)$ such that $C_{(v,w),+}\cap C_+=\{(v,w)\}$, and we can take $C_{e,+}$ to be the lift of $C_{(v,w),+}$ to $G_+$. The argument for $b_2$ is analogous. There are no other preprocessing steps necessary for $p_+$ because we chose it such that it does not visit any other boundary vertex in between $b_1$ and $b_2$, and preprocessing of $G_+$ can only occur at the boundary by Proposition \ref{prop:prepcharac}.)

Second, consider an edge $e\in p_{1,2}$. Recall that we write $\hat C_e$ for the cycle that exists either by the definition of type I edge or by Lemma \ref{lm:typeIIcycle}. Since $p_{1,2}$ does not contain $\tilde e$, we have $\hat C_e\cap p_{1,2}=\{e\}$ in either case. However, it is possible that the cycle $\hat C_e$ intersects the path $p_+$ in $G_+$. The solution is to modify $\hat C_e$ using the bypass lemma \ref{lem:bypass} as follows. Suppose that $\hat C_e$ intersects $p_+$ at the edges $e_1,e_2,\ldots,e_K$, which need not be connected. Recall that there exists a cycle $C_{e_1,+}$ in $G_+$ such that $C_{e_1,+}\cap p_+=\{e_1\}$, since $p_+$ is part of a good cycle in $G_+$. Now we apply the bypass lemma \ref{lem:bypass} and use $C_{e_1,+}$ to bypass the edge $e_1$, thereby obtaining a modified cycle $\hat C_e'$ which intersects $p_+\cup p_{1,2}$ at $e_2,\ldots,e_K$. Now we can repeat this procedure to conclude that for every $e\in p_{1,2}$, there also exists a cycle that intersects $p_+\cup p_{1,2}$ only at $e$. This proves that $p_+\cup p_{1,2}$ is a good cycle in $G$, the desired contradiction.
\end{proof}

\section{Quantitative control of subleading graphs}
\label{sect:control}
In this section, we will use our assumption that the frequency sequence $(\om_1,\om_2,\ldots)$ is $(\de,\rho)$-quasi-random with $\de>0$ to control the subleading graphs. Our main result in this section (Proposition \ref{prop:control}) says that the contribution from any graph that is not fully reducible (i.e., that cannot be preprocessed to a point; see Definition \ref{defn:preprocessing}) is subleading in $N$.

We recall that the assumption that $(\om_1,\om_2,\ldots)$ is $(\de,\rho)$-quasi-random means we have the following exponential sum estimate
\beq\label{eq:qrandomestimate}
\l|\frac{1}{N^5} \sum_{i_1,i_2=1}^{\floor{\rho N}} \sum^N_{\substack{j_1,j_2,j_3,j_4=1\\ j_1+j_3=j_2+j_4}} 
e\l[\frac{\om_{i_1}-\om_{i_2}}{2}(j_1^2-j_2^2+j_3^2-j_4^2)\r]\r|\leq N^{-\de}
\eeq

To phrase the main result of this section, we recall the graphical representation of the moment sum in Proposition \ref{prop:graphrep}, i.e.,
$$
\mathbb{E}_{\ul{y}}[\mu^{(2k)}_{M,N}]
=\sum_{G_L=(V,L)\in \curly{L}_k}  \Phi(G_L)
$$
Here we defined
\beq\label{eq:Phidefn}
\Phi(G_L):=\frac{1}{N^{1+k}}\sum_{\substack{\ul{i}=(i_1,\ldots,i_k)\\ L_{\ul{i}}\sim L}}\sum_{\substack{\ul{j}=(j_1,\ldots,j_k)\\ \ul{j}\textnormal{ is $L$-admissible}}} \prod_{r=1}^k K_{(i_r,i_{r+1})}(j_r)
\eeq
with the propagator
$$
K_{(i,i')}(j)= e\l[\frac{\om_i-\om_{i'}}{2}j^2\r].
$$
 The following result establishes that all graph that are not fully reducible are subleading in the moment sum.

%The main result of this section is summarized in the following proposition.

\begin{prop}
\label{prop:control}
Assume that \eqref{eq:qrandomestimate} holds. Let $G_L$ be an exploration graph that is not fully reducible. Then
$$
\Phi(G_L)=O(N^{-\de/16}).
$$
\end{prop}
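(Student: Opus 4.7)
The plan is to combine preprocessing (Definition \ref{defn:preprocessing}) with the good-cycle structure (Theorem \ref{thm:good}) and the quasi-random estimate \eqref{eq:qrandomestimate}. First, I would verify that each preprocessing step changes $|\Phi(G_L)|$ by at most a bounded multiplicative constant. Short-circuiting a degree-$2$ vertex $v$ with neighbours $w_1,w_2$ uses Kirchhoff's law at $v$---which forces the two adjacent momenta to coincide---together with the identity $K_{(w_1,v)}(j)\,K_{(v,w_2)}(j)=K_{(w_1,w_2)}(j)$ to produce exactly the propagator on the short-circuited edge; the free sum over $i_v\in\{1,\dots,M\}$ contributes a factor $M\le\rho N$ that matches the adjustment in the $N^{-(1+k)}$ prefactor. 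Loop removal is simpler, since $K_{(v,v)}(j)=1$ and the free momentum sum contributes the compensating $N$. Iterating yields $|\Phi(G_L)|\le C_k|\Phi(H)|$ with $H$ the fully preprocessed version of $G_L$. Since $G_L$ is not fully reducible, $H$ is not a single point, and Theorem \ref{thm:good} furnishes a good cycle $C=(e_1,\dots,e_m)$ in $H$ with associated cycles $C_{e_r}$ satisfying $C_{e_r}\cap C=\{e_r\}$.

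The core step is to extract a sum of the form \eqref{eq:qrandomestimate} from $\Phi(H)$. The value of the good-cycle structure is that the cycle space of $H$ admits a basis in which the current along each edge $e_r\in C$ can be adjusted independently: pushing a unit current around $C_{e_r}$ changes only the momentum of $e_r$ among the edges of $C$. After fixing all row indices $\ul{i}$ and all momenta outside a carefully chosen $4$-edge substructure inside the good cycle, I would apply the Cauchy--Schwarz inequality four successive times, each time squaring the sum and merging a pair of momentum variables, to reduce the inner sum to
\[
\frac{1}{N^5}\sum_{i_1,i_2=1}^{\floor{\rho N}}\sum_{\substack{j_1,\dots,j_4=1\\ j_1+j_3=j_2+j_4}}^N e\!\left[\tfrac{\om_{i_1}-\om_{i_2}}{2}(j_1^2-j_2^2+j_3^2-j_4^2)\right],
\]
which is bounded by $N^{-\de}$ via \eqref{eq:qrandomestimate}. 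Undoing the four Cauchy--Schwarz steps divides the exponent by $2^4=16$, so the inner sum is $O(N^{-\de/16})$. The remaining outer sums are estimated trivially (their range is bounded uniformly in $N$), yielding $|\Phi(H)|=O(N^{-\de/16})$ and hence the same bound for $\Phi(G_L)$ by the first step.

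The main obstacle is this central reduction: one must identify, inside the good cycle of an arbitrary fully preprocessed $H$, a $4$-edge substructure for which four successive Cauchy--Schwarz applications produce exactly the sum appearing in \eqref{eq:qrandomestimate}, losing at most a factor $N^{1/2}$ per application. The Kirchhoff degrees of freedom provided by the associated cycles $C_{e_r}$ are what make this possible, but each Cauchy--Schwarz step must be arranged so that the newly summed-out variables are absorbed into either Kirchhoff constraints or trivial combinatorial factors. Once the extraction is in place, the remaining steps are essentially bookkeeping.
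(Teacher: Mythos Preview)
Your preprocessing step and invocation of Theorem \ref{thm:good} match the paper exactly; Lemma \ref{lm:preproinv} carries out precisely the short-circuiting/loop-removal computation you describe.

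The core reduction, however, is where your proposal diverges from the paper and where there is a genuine gap. You propose four successive Cauchy--Schwarz applications to carve a ``4-edge substructure'' out of the good cycle and land on the melon sum \eqref{eq:qrandomestimate}. You yourself flag this as the main obstacle, and indeed it is not clear that such a reduction exists: the good cycle has some length $m\ge 2$ and involves $m$ distinct row-indices $i_1,\dots,i_m$, whereas the melon sum has exactly two row-indices. No amount of Cauchy--Schwarz in the momentum variables alone will collapse $m$ frequencies $\om_{i_1},\dots,\om_{i_m}$ down to two, and squaring in the row-indices would proliferate variables rather than reduce them. Your sentence ``the remaining outer sums are estimated trivially (their range is bounded uniformly in $N$)'' is also not right: those sums range over $\{1,\dots,N\}$ and only become $O(1)$ after the careful lattice-point counting of Lemma \ref{lem:latticepoints2}.

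The paper's mechanism is different and worth knowing. It introduces an intermediate statement $ESB_n(\kappa)$: the one-variable exponential sum $ES(\om_{i_1},\dots,\om_{i_n},\ul t,\ul\sigma)$ along a length-$n$ cycle is $\le N^{1-\kappa}$ except on a bad set of $(i_1,\dots,i_n,\ul t)$ of size $\le C_k N^{2n-1-\kappa}$. Two lemmas drive the argument: Lemma \ref{lem:PhitoESB} observes that $\Phi(\curly C_n)$ (the doubly-traversed $n$-cycle) is an exact sum of squares $\sum|ES|^2$, so a bound $|\Phi(\curly C_n)|\le N^{-\kappa}$ yields $ESB_n(\kappa/4)$ by pigeonhole; and Lemma \ref{lem:ESBtoPhi} shows that $ESB_n(\kappa)$ bounds $|\Phi(G_L)|\le C_k N^{-\kappa}$ whenever $G_L$ has a good cycle of length $n$, by isolating the net current $A$ around the good cycle and splitting into good and bad sets. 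The proof then bootstraps: the melon is $\curly C_2$, so \eqref{eq:qrandomestimate} gives $ESB_2(\de/4)$; every $\curly C_m$ has a length-$2$ good cycle, so $|\Phi(\curly C_m)|\le C_k N^{-\de/4}$; hence $ESB_m(\de/16)$ for every $m\le k$; and finally any fully preprocessed $G_L$ has a good cycle of some length $m$, giving $|\Phi(G_L)|\le C_k N^{-\de/16}$. The factor $16=4\times 4$ thus comes from two pigeonhole extractions, not four Cauchy--Schwarz steps. The essential idea you are missing is the passage through the auxiliary graphs $\curly C_m$, which always contain a length-$2$ good cycle and therefore let the melon estimate propagate to arbitrary cycle lengths.
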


\subsection{The fourth moment case $k=4$}

To clarify the connection between Assumption \eqref{eq:qrandomestimate} and estimates on $\Phi(G_L)$, we begin with the following observation: Assumption \ref{eq:qrandomestimate} is designed to verify Proposition \ref{prop:control} for the fourth moment case $k=4$. Note that for $k=4$ there is only one exploration graph on $4$ edges that is not fully reducible. It is induced by the exploration
\beq\label{eq:Lmelon}
L=((1,2),(2,1),(1,2),(2,1))
\eeq
and we call it the ``melon graph''; see Figure \ref{fig:melon}.
\begin{figure}
    \centering
    \includegraphics[scale=0.1]{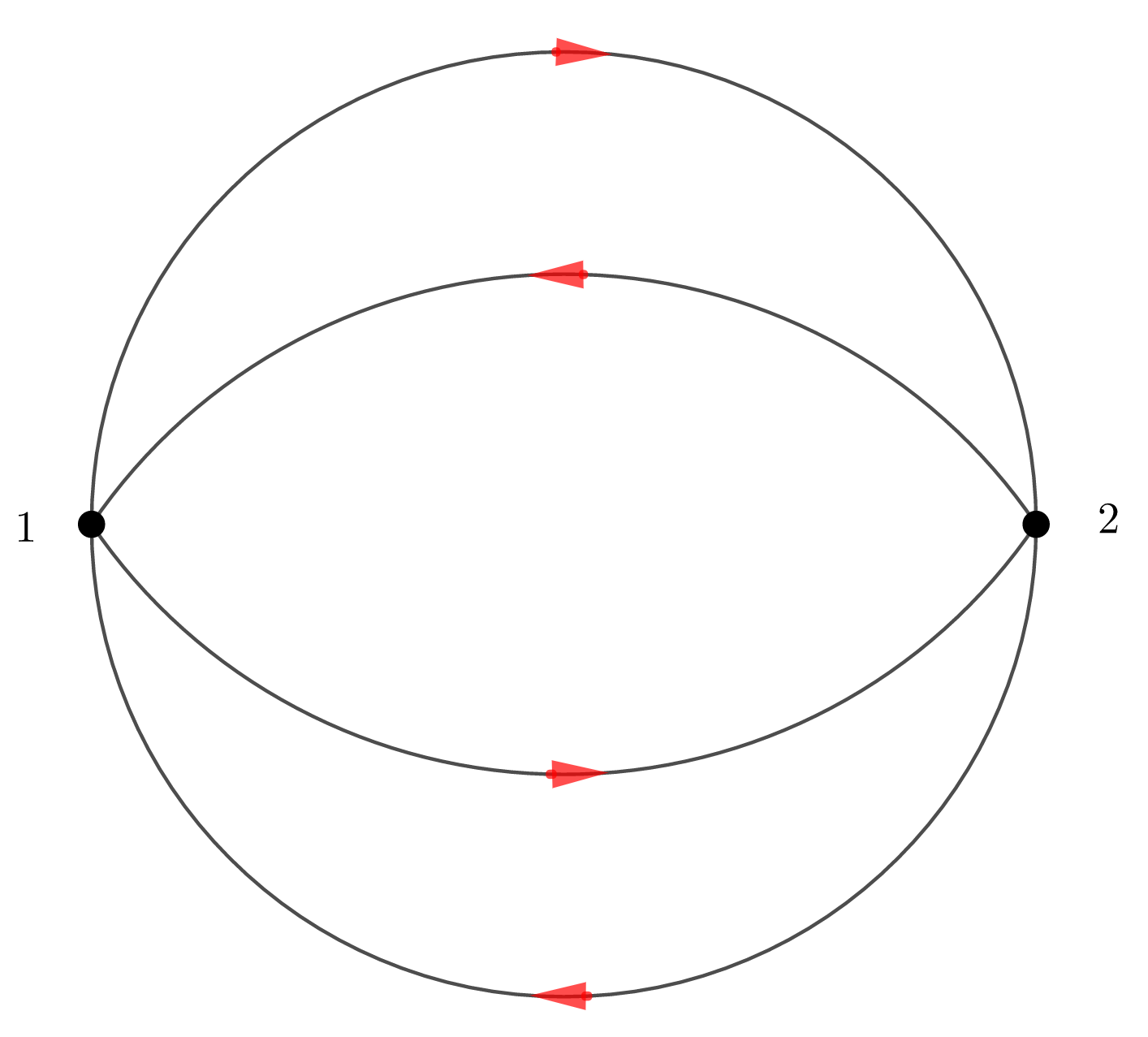}
    \caption{The melon graph defined by $L=((1,2),(2,1),(1,2),(2,1))$.} %Its contribution $\Phi(G_L)$ is exactly the left-hand side in \eqref{eq:qrandomestimate}.}
    \label{fig:melon}
\end{figure}

\begin{lm}\label{lem:melon}
The melon graph $G_L$ with $L$ given by \eqref{eq:Lmelon} is subleading, i.e.,
$$
|\Phi(G_L)|\leq N^{-\de}
$$
\end{lm}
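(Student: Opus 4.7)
The plan is to recognize $\Phi(G_L)$ for the melon exploration $L=((1,2),(2,1),(1,2),(2,1))$ as essentially the exponential sum that is directly controlled by the $(\de,\rho)$-quasi-random assumption \eqref{eq:qrandomestimate}.

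First I would unfold the formula \eqref{eq:Phidefn}. The condition $L_{\ul{i}}\sim L$ forces $i_1=i_3$ and $i_2=i_4$ with $i_1\ne i_2$, both ranging in $\{1,\ldots,\floor{\rho N}\}$. Kirchhoff's current law at either vertex reduces to the single linear constraint $j_1+j_3=j_2+j_4$ on $(j_1,j_2,j_3,j_4)\in\{1,\ldots,N\}^4$. Since the two edges $(1,2)$ contribute phases with coefficient $+\tfrac{\om_{i_1}-\om_{i_2}}{2}$ while the two edges $(2,1)$ contribute phases with the opposite sign, the product of the four effective propagators from \eqref{Kpropdefn} telescopes to
\beq
\prod_{r=1}^4 K_{(i_r,i_{r+1})}(j_r)=e\l[\tfrac{\om_{i_1}-\om_{i_2}}{2}(j_1^2-j_2^2+j_3^2-j_4^2)\r].
\eeq
Substituting this back, I obtain the explicit representation
\beq
\Phi(G_L)=\frac{1}{N^5}\sum_{\substack{i_1,i_2=1\\i_1\ne i_2}}^{\floor{\rho N}}\sum_{\substack{j_1,j_2,j_3,j_4=1\\j_1+j_3=j_2+j_4}}^N e\l[\tfrac{\om_{i_1}-\om_{i_2}}{2}(j_1^2-j_2^2+j_3^2-j_4^2)\r].
\eeq

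Second, I would compare this with the sum on the left-hand side of \eqref{eq:qrandomestimate}. The only discrepancy is the diagonal $i_1=i_2$, on which the phase is identically $1$ and whose normalized contribution is bounded by
\beq
\frac{\floor{\rho N}}{N^5}\cdot \#\setof{(j_1,j_2,j_3,j_4)\in\{1,\ldots,N\}^4}{j_1+j_3=j_2+j_4}=O(N^{-1}),
\eeq
using the elementary bound $O(N^3)$ for the number of solutions to the linear constraint (e.g., by fixing $j_1,j_2,j_3$ freely and determining $j_4$). Combining with \eqref{eq:qrandomestimate} yields $|\Phi(G_L)|\le N^{-\de}+O(N^{-1})$, which gives the claimed bound (with constants absorbed for $N$ large; the small-$N$ range is handled trivially).

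There is essentially no conceptual obstacle: the argument is a one-to-one identification of the melon contribution with the sum in \eqref{eq:qrandomestimate}, up to a harmless diagonal correction. The only point requiring mild care is verifying the alternating sign pattern $j_1^2-j_2^2+j_3^2-j_4^2$ in the accumulated phase, which is forced by the fact that consecutive edges of the melon alternate between the orientations $(1,2)$ and $(2,1)$ along the Eulerian circuit defining $L$. This is also the structural reason why Definition \ref{defn:qrandom} was designed with this particular exponential sum in mind.
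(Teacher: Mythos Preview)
Your proof is correct and follows the same approach as the paper: unwinding \eqref{eq:Phidefn} for the melon exploration and identifying the resulting expression with the exponential sum in the $(\de,\rho)$-quasi-random assumption \eqref{eq:qrandomestimate}. You are in fact slightly more careful than the paper, which writes $\Phi(G_L)$ as exactly the sum in \eqref{eq:qrandomestimate} without the constraint $i_1\ne i_2$; your explicit treatment of the diagonal contribution as $O(N^{-1})$ is the correct way to handle this minor discrepancy.
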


\be{proof}
By Definition \eqref{eq:Phidefn} of $\Phi(G_L)$, we have 
$$
\Phi(G_L)=\frac{1}{N^5} \sum_{i_1,i_2=1}^{\floor{\rho N}} \sum^N_{\substack{j_1,j_2,j_3,j_4=1\\ j_1+j_3=j_2+j_4}} 
e\l[\frac{\om_{i_1}-\om_{i_2}}{2}(j_1^2-j_2^2+j_3^2-j_4^2)\r].
$$
The bound thus holds by Assumption \eqref{eq:qrandomestimate}.
\end{proof}

In the remainder of this section, we show that control of the $k=4$ term via Assumption \eqref{eq:qrandomestimate} is in fact sufficient to control the subleading graphs for all moments. A crucial input to the result is the existence of a good cycle in any subleading graph, i.e., Theorem \ref{thm:good}. %The proof strategy is to perform an induction procedure which alternates between estimates on graphs and estimates on exponential sums.

\subsection{Invariance of the leading term under preprocessing}
In order to use the graph-theoretical results from the previous section, we first establish that preprocessing a graph affects $\Phi(G_L)$ in a simple way, up to error terms.
%We recall the notation from Definition \ref{defn:preprocessing}.

\begin{lm}\label{lm:preproinv}
Let $G_L\in \curly{L}_k$ be an exploration graph on $k$ edges and $l$ vertices. Then, we have
$$
\begin{aligned}
\Phi(G_L)=&\rho\Phi(\curly{S}(G_L,v))+O(N^{-1})\\ \Phi(G_L)=&\Phi(\curly{L}(G_L,l)).
\end{aligned}
$$
for any vertex $v\in G_L$ that can be short-circuited and any self-loop $l$ in $G_L$.
\end{lm}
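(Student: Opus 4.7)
The plan is to verify both identities by unfolding the definition \eqref{eq:Phidefn} of $\Phi$ and exploiting two algebraic features of the propagator $K_{(i,i')}(j) = e[(\om_i - \om_{i'})j^2/2]$: its triviality on self-loops, $K_{(i,i)}(j) = 1$, and its multiplicativity, $K_{(i,i')}(j) K_{(i', i'')}(j) = K_{(i, i'')}(j)$ at a common current $j$.

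For the self-loop identity, consider a self-loop $l = (v, v)$ with current variable $j$. Its contribution to the product of propagators is $K_{(i_v, i_v)}(j) = 1$. Moreover, since a self-loop contributes equally to the in-flow and out-flow at $v$, Kirchhoff's current law is independent of the value of $j$ on the loop, so $j$ can be summed freely over $\{1, \ldots, N\}$. This free sum produces a factor of $N$, which exactly cancels the change in normalization $N^{-(1+k)} \to N^{-k}$ upon passing from $k$ to $k-1$ edges. The vertex set and the $\underline{i}$-sum are unchanged, so the equality $\Phi(G_L) = \Phi(\curly{L}(G_L, l))$ holds exactly.

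For short-circuiting, let $v$ have in-degree and out-degree equal to $1$ in $G_L$ (which is automatic from the degree-$2$ and non-self-loop hypothesis), with incoming edge $(u, v)$ of current $j_1$ and outgoing edge $(v, w)$ of current $j_2$. Kirchhoff at $v$ forces $j_1 = j_2 =: j$, while all remaining Kirchhoff constraints become those of $\curly{S}(G_L, v)$ verbatim. The two propagators at $v$ multiply by concatenation to $K_{(i_u, i_v)}(j) K_{(i_v, i_w)}(j) = K_{(i_u, i_w)}(j)$, which is independent of $i_v$. Hence, after fixing an injective labeling of $V_L \setminus \{v\}$, the summation over $i_v$ (taken over $\{1, \ldots, M\}$ distinct from the $l - 1$ already-assigned labels) contributes a factor of $M - (l - 1) = \rho N + O(1)$. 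Combining this with the normalization change $N^{-(1+k)} \to N^{-k}$ yields
\begin{equation*}
\Phi(G_L) = \frac{M - l + 1}{N} \, \Phi(\curly{S}(G_L, v)) = \rho \, \Phi(\curly{S}(G_L, v)) + O(N^{-1}).
\end{equation*}

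The $O(N^{-1})$ error requires the uniform bound $|\Phi(\curly{S}(G_L, v))| \leq C_k$, which is immediate from $|K| \equiv 1$ and elementary counting: the $\underline{i}$-sum has at most $M^{l-1}$ terms, the Kirchhoff-constrained $\underline{j}$-sum has at most $N^{k - l + 1}$ terms (the cycle rank of a connected graph with $k-1$ edges and $l - 1$ vertices), and these combine with the normalization $N^{-k}$ to yield a bound depending only on $k$ and $\rho$. The main subtlety, which is bookkeeping rather than a genuine obstacle, is to carry the distinctness condition $L_{\underline{i}} \sim L$ through the removal of $v$ and to verify that the Kirchhoff-admissible current assignments on $G_L$ satisfying $j_1 = j_2$ at $v$ biject with those on $\curly{S}(G_L, v)$ via the common value $j$ on the new edge $(u, w)$.
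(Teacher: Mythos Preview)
Your proof is correct and follows essentially the same approach as the paper: both arguments exploit the triviality $K_{(i,i)}(j)=1$ on self-loops and the concatenation identity $K_{(i,i')}(j)K_{(i',i'')}(j)=K_{(i,i'')}(j)$ at a degree-$2$ vertex (where Kirchhoff forces a common current $j$), then account for the free sum over the removed vertex label to produce the factor $(M-l+1)/N=\rho+O(N^{-1})$. Your a priori bound $|\Phi(\curly{S}(G_L,v))|\leq C_k$ via the cycle-rank count of Kirchhoff-admissible currents is exactly what the paper obtains by invoking Corollary~\ref{cor:K}.
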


\begin{proof}
\textit{Case 1: Short-circuiting.} Let $v\in G_L$ be a vertex that can be short-circuited. For each $L_{\ul{i}}\sim L$, there is a unique $1\leq r\leq k$ such that $i_r=v$. By Kirchhoff's law, the current incoming to $v$ equals the current outgoing from $v$, i.e., $j_{r}=j_{r-1}$. Hence
\beq\label{eq:sccancel}
\begin{aligned}
K_{(i_{r-1},i_r)}(j_{r-1})K_{(i_r,i_{r+1})}(j_r)
=&e\l[(\om_{i_{r-1}}-\om_{i_{r}})\frac{j_{r}^2}{2}\r]
e\l[(\om_{i_{r}}-\om_{i_{r+1}})\frac{j_{r}^2}{2}\r]\\
=&e\l[(\om_{i_{r-1}}-\om_{i_{r+1}})\frac{j_{r}^2}{2}\r]\\
=& K_{i_{r-1},i_{r+1}}(j_r).
\end{aligned}
\eeq
We conclude that short-circuiting provides an $(M-l+1)$-fold mapping 
$$
\{(i_1,\ldots, i_k),(j_1,\ldots, j_k)\}
\to \{(i_1,\ldots,\hat i_r,\ldots i_k),(j_1,\ldots,\hat j_r,\ldots j_k)\}
$$
(where $\hat i_r$, $\hat j_r$ means that those indices are skipped), and the mapping preserves the admissibility (Kirchhoff current law) and the associated propagator product. Recall that $M=\floor{\rho N}$. By \eqref{eq:sccancel}, we conclude that
$$
\Phi(\curly{S}(G_L,v))=\frac{M-l+1}{N}\Phi(G_L)
=\rho \Phi(G_L)+\frac{-l+1+\floor{\rho N}-\rho N}{N}\Phi(G_L).
$$
To control the error term, we notice that we have the a priori bound
$$
\begin{aligned}
|\Phi(G_L)|
\leq& \frac{1}{N^{1+k}}\sum_{\substack{\ul{i}=(i_1,\ldots,i_k)\\ L_{\ul{i}}\sim L}}\sum_{\substack{\ul{j}=(j_1,\ldots,j_k)\\ \ul{j}\textnormal{ is $L$-admissible}}}1\\
=&\frac{1}{N^{1+k}}\frac{M!}{(M-l)!}
\sum_{\substack{\ul{j}=(j_1,\ldots,j_k)\\ \ul{j}\textnormal{ is $L$-admissible}}}1\\
%\leq &C_k\rho^{k+1-l} \frac{N^l}{N^{1+k}} N^{k+1-l}\\
\leq &C_k\rho^{l}. 
\end{aligned}
$$
In the last step, we used Corollary \ref{cor:K} on the number of $L$-admissible current assignments. We have thus shown that
$$
\Phi(\curly{S}(G_L,v))=\rho\Phi(G_L)+O(N^{-1})
$$
as claimed.\\

\textit{Case 2: Loop removal.} Let $l$ be a self-loop in $G_L$.
This means there exists an $1\leq r\leq k$ such that $i_{r}=i_{r+1}$ with associated current $j_r$. Notice that 
$$
K_{(i_{r},i_{r+1})}(j_r)= e\l[(\om_{i_r}-\om_{i_r})\frac{j_r^2}{2}\r]=1.
$$
We conclude that short-circuiting provides an $N$-fold mapping
$$
\ul{j}=(j_1,\ldots,j_k)\to (j_1,\ldots,\hat{j}_r,\ldots,j_k)
$$
and the mapping preserves the admissibility (Kirchhoff current law). Hence,
$$
\Phi(\curly{L}(G_L,l))=\Phi(G_L).
$$
This finishes the proof of Lemma \ref{lm:preproinv}.
\end{proof}

\subsection{Proof of Proposition \ref{prop:control}}

Thanks to Lemma \ref{lm:preproinv}, we may assume from now on that all graphs in the moment computations are fully preprocessed.

Let us summarize the proof strategy for Proposition \ref{prop:control}. The proof rests on two key Lemmas \ref{lem:PhitoESB} and \ref{lem:ESBtoPhi}. On the one hand, Lemma \ref{lem:PhitoESB} assumes that one knows $\Phi(\curly{C}_n)$ is subleading, where $\curly{C}_n$ is the doubly-traversed cycle on $n$ vertices,
$$
\curly{C}_n=((1,2),(2,3),\ldots,(n-1,n),(n,1),(1,n),(n,n-1),\ldots,(2,1)),
$$
and from this it derives a certain bound on exponential sums for most frequencies (via the pigeonhole principle). This exponential sum bound is called $ESB_n(\kappa)$ below. 

On the other hand, Lemma \ref{lem:ESBtoPhi} assumes $ESB_n(\kappa)$ and derives a bound on $\Phi(G_L)$ for any graph containing a good cycle of $n$ vertices. The key idea here is that the estimate $ESB_n$ is designed to exploit the oscillations in the exponential sum associated to the net current that runs through the good cycle; we call this current $A$ below.\\

We begin with preliminaries for phrasing the key lemmas mentioned above. As explained above imperative to reparametrize the current assignments $\ul{j}$ in \eqref{eq:Phidefn} satisfying the Kirchhoff current law such that $A$ is its own variable of summation. Notice that the exponential sum $ES$ define in \eqref{eq:ESdefn} below only involves oscillations from $A$. The set $T(\ul{t},\ul{\sigma})$ defined in \eqref{eq:Tdefn} below describes the constraints imposed on $A$ from the remaining graph data.

Throughout the argument, we write $C_k$ for a positive constant that may depend on $k$ and $\rho$, but not on $N$.

Given a vector of integers $\ul{t}=(t_1,\ldots, t_{n-1},0)\in\Z^{n-1}\times\{0\}$ and signs $\ul{\sigma}=(\sigma_1,\ldots,\sigma_n)\in \{\pm 1\}^n$, we define the set 
\beq\label{eq:Tdefn}
T(\ul{t},\ul{\sigma})
:=\setof{A\in \Z}{1\leq t_r+A\sigma_r\leq N,\quad\forall 1\leq r\leq n}.
\eeq
and the exponential sum
\beq\label{eq:ESdefn}
ES(\om_{i_1},\ldots,\om_{i_n},\ul{t},\ul{\sigma}):=
\l|\sum_{A\in T(\ul{t},\ul{\sigma})} e\l[\sum_{r=1}^n\sigma_r(t_r+A\sigma_r)^2\frac{\om_{i_{r+1}}-\om_{i_r}}{2}\r]\r|
\eeq

\begin{defn}
Let $n\geq 1$ and $\kappa>0$. We say that the statement $ESB_n(\kappa)$ holds, if, for all choices of signs $\ul{\sigma}\in \{\pm 1\}^n$ the ``bad set''
$$
\setof{(i_1,\ldots,i_n,\ul{t})\in \{1,\ldots,M\}^n\times \Z^{n-1}\times\{0\}}{ES(\om_{i_1},\ldots,\om_{i_n},\ul{t},\ul{\sigma})>N^{1-\kappa}}
$$
has cardinality bounded by $C_k N^{2n-1 - \kappa}$.
\end{defn}

The statement $ESB_n(\kappa)$ says that oscillations make the key exponential sum $ES$, which will be associated to the current along the good cycle, $\ll N$ for most frequencies and external currents.

\begin{lm}\label{lem:PhitoESB}
Let $n\geq 1$ and $\kappa>0$. If $|\Phi(\curly{C}_n)|\leq C_k N^{-\kappa}$, then $ESB_n(\kappa/4)$.
\end{lm}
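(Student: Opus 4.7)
The plan is to turn the averaged bound $|\Phi(\curly{C}_n)|\leq C_kN^{-\kappa}$ into a pointwise bound on $ES$ via an $L^2$/pigeonhole argument. First I would expand $\Phi(\curly{C}_n)$ using Proposition \ref{prop:graphrep}: solving Kirchhoff's law on the doubly-traversed $n$-cycle forces the forward and backward currents on each edge-pair to differ by a common constant $A$. Writing the backward current as $t_r$ and the forward one as $t_r+A$, using the telescope $\sum_r(\om_{i_r}-\om_{i_{r+1}})=0$ to cancel the $A^2$ contribution, and shifting $A\mapsto A-t_n$ to normalize $t_n=0$, one arrives at
$$
\Phi(\curly{C}_n)=\frac{1}{N^{1+2n}}\sum_{\ul{i}\,\text{inj.}}\sum_{\ul{t}\in\Z^{n-1}\times\{0\}}\Sigma(\ul{i},\ul{t}),
$$
where $\Sigma(\ul i,\ul t):=\sum_{A\in T(\ul t,\mathbf 1)}e\bigl[A\sum_{r=1}^n(\om_{i_r}-\om_{i_{r+1}})t_r\bigr]$ and a direct comparison with the definition yields $|\Sigma(\ul i,\ul t)|=ES(\om_{i_1},\ldots,\om_{i_n},\ul t,\mathbf 1)$. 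So the hypothesis says that an average of the $\Sigma(\ul i,\ul t)$ is of size at most $N^{-\kappa}$.

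Second, I would bound the second moment $S:=\sum_{\ul i,\ul t}|\Sigma|^2$. Expanding $|\Sigma|^2=\Sigma\overline{\Sigma}$ and summing in $\ul t$ first produces Dirichlet kernels,
$$
\sum_{\ul t}|\Sigma(\ul i,\ul t)|^2=\sum_{A,A'\in T}\prod_{r=1}^{n-1}D_N\bigl((A-A')(\om_{i_r}-\om_{i_{r+1}})\bigr),\qquad D_N(x):=\sum_{s=1}^Ne[xs].
$$
The diagonal $A=A'$ contributes $\lesssim N\cdot N^{n-1}$ per $\ul i$, hence $\lesssim N^{2n}$ in total. The off-diagonal I would handle by observing that the $\ul i$-summed weighted-frequency sum rearranges into a linear combination of $\Phi(\curly{C}_n)$ and of $\Phi$ of a few closely related graphs (obtained by minor index re-labelings of $\curly{C}_n$), so that the hypothesis feeds back to give $S\leq C_kN^{2n+1-c\kappa}$ for some constant $c\in(0,1]$.

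A Markov-type pigeonhole then extracts the pointwise bound: the bad set $B_{\mathbf 1}:=\{(\ul i,\ul t):ES(\ul i,\ul t,\mathbf 1)>N^{1-\kappa/4}\}$ contributes at least $|B_{\mathbf 1}|N^{2-\kappa/2}$ to $S$, so $|B_{\mathbf 1}|\leq C_kN^{2n-1-\kappa/4}$ provided $c\geq 3/4$, which is exactly $ESB_n(\kappa/4)$ for $\ul\sigma=\mathbf 1$. For the other sign patterns $\ul\sigma\in\{\pm 1\}^n$, the $ES$-phase gains a genuine quadratic-in-$A$ term with coefficient $c_{\ul\sigma}(\ul i):=\sum_r\sigma_r(\om_{i_{r+1}}-\om_{i_r})/2$; whenever $c_{\ul\sigma}\neq 0$, classical quadratic Gauss-sum / van der Corput estimates give the unconditional bound $ES\leq C\sqrt N$, which is already below $N^{1-\kappa/4}$ once $\kappa\leq 2$, while the degenerate cases (notably $\ul\sigma=\pm\mathbf 1$) reduce to the argument above.

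The main obstacle is the off-diagonal control in the second-moment step: naive Cauchy--Schwarz gives $|\Phi|^2\leq(\#\text{terms})\cdot S/N^{2(1+2n)}$, i.e., $S\geq |\Phi|^2N^{2n+2}$, which is a lower bound on $S$ and thus of no use here. To instead obtain an \emph{upper} bound on $S$ from the hypothesis, one must exploit the linearity of the phase $f(\ul i,\ul t)=\sum_r(\om_{i_r}-\om_{i_{r+1}})t_r$ in $\ul t$: summing in $\ul t$ produces Dirichlet-kernel spikes that localize $A-A'$ near resonances of the frequency differences, and matching those resonances to variants of $\Phi(\curly{C}_n)$---rather than bounding by volume alone---is what produces the needed improvement over the trivial $L^2$ estimate.
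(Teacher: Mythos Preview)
Your parametrization of Kirchhoff's law on $\curly{C}_n$ is off by one free variable, and this causes you to miss the whole point of the lemma. The doubly-traversed $n$-cycle has $2n$ edges and $n$ vertices, so by Theorem~\ref{thm:K} the space of admissible currents has dimension $n+1$, not $n$. Concretely, with backward currents $t_r$ and forward currents $t_r+A$ you have the $n+1$ free variables $(t_1,\ldots,t_n,A)$; the shift ``$A\mapsto A-t_n$'' does not eliminate $t_n$, since $t_n$ still sits in the backward current $j_{\tilde e_n}=t_n$. The correct change of variables is $t_r':=t_r-t_n$ for $r<n$, $A':=A+t_n$, $B':=t_n$, which reveals two \emph{independent} loop currents $A',B'$ running around the two copies of the $n$-cycle. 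The propagator product then factors over $A'$ and $B'$, and since $A',B'$ range over the \emph{same} set $T(\ul t,\ul\sigma)$ one obtains directly
\[
\Phi(\curly{C}_n)=\frac{1}{N^{1+2n}}\sum_{i_1,\ldots,i_n}\sum_{\ul t\in\Z^{n-1}\times\{0\}}ES(\om_{i_1},\ldots,\om_{i_n},\ul t,\ul\sigma)^2.
\]
So $\Phi(\curly{C}_n)$ is \emph{already} an $L^2$ quantity, and the hypothesis $|\Phi(\curly{C}_n)|\leq C_kN^{-\kappa}$ immediately gives $\sum_{\ul i,\ul t}ES^2\leq C_kN^{2n+1-\kappa}$; a single Markov/pigeonhole step then bounds the bad set. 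Your entire second-moment machinery (Dirichlet kernels, feedback of $\Phi$ into the off-diagonal) is unnecessary, and indeed your formula $\Phi=\frac{1}{N^{1+2n}}\sum_{\ul i}\sum_{\ul t}\Sigma$ is dimensionally wrong (it has only $2n$ summation variables against a normalization of $N^{2n+1}$).

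Your treatment of general $\ul\sigma$ is also both unnecessary and incorrect. The paper handles all sign patterns uniformly: for each $\ul\sigma\in\{\pm1\}^n$ one splits the $2n$ edges of $\curly{C}_n$ into two sub-cycles $\curly{C}_{\ul\sigma}$ and $\tilde{\curly{C}}_{\ul\sigma}$ according to $\ul\sigma$, and the resulting identity is again $\Phi(\curly{C}_n)=\sum ES(\ldots,\ul\sigma)^2$ for that particular $\ul\sigma$. Thus the single hypothesis on $\Phi(\curly{C}_n)$ controls every sign pattern. Your alternative van der Corput route fails as stated: the quadratic coefficient $c_{\ul\sigma}(\ul i)=\sum_r\sigma_r(\om_{i_{r+1}}-\om_{i_r})/2$ can be arbitrarily close to an integer (nothing in the hypotheses prevents this), in which case the Gauss-sum bound $O(\sqrt N)$ is not available.
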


\begin{lm}\label{lem:ESBtoPhi}
Suppose that $ESB_n(\kappa)$ holds. Let $L$ be an exploration on $k$ edges and $l$ vertices that is fully preprocessed and has a good cycle of length $n$. Then
$$
|\Phi(G_L)|\leq C_k N^{-\kappa}.
$$
\end{lm}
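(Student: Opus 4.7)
The plan is to use the good cycle to isolate, among the summation variables defining $\Phi(G_L)$, a single ``circulation'' variable $A$ around the cycle, so that the remaining sum over $A$ matches exactly the exponential sum $ES$ appearing in the hypothesis $ESB_n(\kappa)$.

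Concretely, let $C = (e_1, \ldots, e_n)$ denote the good cycle endowed with a chosen orientation, and let $\ul{\sigma} = (\sigma_1, \ldots, \sigma_n) \in \{\pm 1\}^n$ record whether each $e_r$ is traversed by the exploration in that orientation. I would parametrize the $L$-admissible currents $\ul{j}$ by letting $A$ be defined by $j_{e_n} = \sigma_n A$ (this is the gauge choice that forces $t_n = 0$), with the remaining Kirchhoff degrees of freedom (call them ``other'') fixing the currents on all $k-1$ other edges. The cycle edges then carry currents $j_{e_r} = \sigma_r A + t_r$ for $1 \le r \le n$, with $t_r$ a linear function of ``other'' alone. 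The crucial structural input---this is precisely where the good cycle hypothesis from Definition \ref{defn:good} enters---is that the loop current $B_r$ associated to the auxiliary cycle $C_{e_r}$ shifts $t_r$ by $\pm 1$ while leaving every other $t_{r'}$ untouched, because $C_{e_r} \cap C = \{e_r\}$. Hence the linear map $\mathrm{other} \to (t_1, \ldots, t_{n-1})$ has maximal rank $n-1$, so each $\ul{t}$ has at most $O(N^{d-n})$ preimages in ``other'', where $d = k - l + 1$ is the dimension of the cycle space.

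Because every propagator has modulus one and the product over non-cycle edges is independent of $A$, moving absolute values inside and performing the $A$-sum first gives, after matching signs,
\begin{equation*}
|\Phi(G_L)| \le \frac{1}{N^{1+k}} \sum_{\ul{i}:\, L_{\ul{i}} \sim L} \sum_{\mathrm{other}} ES\bigl(\om_{i_1}, \ldots, \om_{i_n}, \ul{t}, \ul{\sigma}\bigr),
\end{equation*}
where $i_1, \ldots, i_n$ are the row indices at the cycle vertices. The admissibility constraint $1 \le j_{e_r} \le N$ on cycle edges is precisely the condition $A \in T(\ul{t}, \ul{\sigma})$, which is the range of summation built into the definition of $ES$.

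To conclude, I would split the sum over $(i_1, \ldots, i_n, \ul{t})$ into the ``good'' region where $ES \le N^{1-\kappa}$ and the ``bad'' region of cardinality $\le C_k N^{2n-1-\kappa}$ guaranteed by $ESB_n(\kappa)$. On the good region, bounding the remaining sums by $M^l$ (for $\ul{i}$) and $O(N^{d-1})$ (for ``other'') cancels the prefactor $N^{-(1+k)}$ exactly, leaving a net $O(N^{-\kappa})$. On the bad region, the trivial bound $ES \le N$ combined with the rank-$(n-1)$ fiber estimate $O(N^{d-n})$ and the $M^{l-n}$ choices for the off-cycle row indices again yields $O(N^{-\kappa})$. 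The principal obstacle is the rank claim in the second paragraph: without the good cycle property, the map $\mathrm{other} \to \ul{t}$ could have strictly smaller rank, bad values of $\ul{t}$ could admit far more preimages in ``other'', and the bad-region estimate would no longer close.
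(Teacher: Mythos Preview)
Your proposal is correct and follows essentially the same route as the paper. Both arguments isolate the circulation $A$ around the good cycle, observe that the inner $A$-sum is exactly $ES(\om_{i_1},\ldots,\om_{i_n},\ul{t},\ul{\sigma})$, and then split according to whether $(i_1,\ldots,i_n,\ul{t})$ lies in the bad set of $ESB_n(\kappa)$. The only cosmetic difference is that the paper makes the linear algebra explicit by extending $\{\delta_{C_g},\delta_{C_{e_1}},\ldots,\delta_{C_{e_{n-1}}}\}$ to a basis of the cycle space and then invokes a lattice-point lemma (their Lemma~\ref{lem:latticepoints2}) to bound $|\curly{J}_{\curly{B}}|\le C_k N^{k-l+1-n}$ and $|\curly{T}(j_{\curly{B}})|\le C_k N^{n-1}$, whereas you phrase the same counts as a rank statement about the map $\mathrm{other}\mapsto\ul{t}$ and a fiber bound $O(N^{d-n})$. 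Your rank argument is exactly dual to their linear-independence check for $\{\delta_{C_g},\delta_{C_{e_1}},\ldots,\delta_{C_{e_{n-1}}}\}$, and in both cases the good cycle condition $C_{e_r}\cap C_g=\{e_r\}$ is precisely what makes it work. The one place you should be slightly careful is that passing from ``rank $n-1$'' to ``fibers of size $O(N^{d-n})$'' is a lattice-point count (the fiber is an affine subspace intersected with a box-type constraint), and the paper handles this via an explicit lemma rather than leaving it implicit; but this is a matter of rigor, not strategy.
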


We postpone the proof of these lemmas for now. %(The relevant term that is large in $\Phi(G_L)$, is the term where $A=A'$ after multiplying out the square.)

\begin{proof}[Proof of Proposition \ref{prop:control}]
To start the proof, notice that the melon
$$
L=((1,2),(2,3),(3,4),(4,1))=\curly{C}_2,
$$
is the doubly traversed cycle on $2$ vertices. Hence, Lemma \ref{lem:melon} establishes
$$
|\Phi(\curly{C}_2)|\leq N^{-\de}.
$$
Now we apply Lemma \ref{lem:PhitoESB} with $\kappa=\de$ and obtain
$ESB_2(\de/4)$. Let $m\leq k$ be an arbitrary integer. Observe that $\curly{C}_m$ has a good cycle of length $2$ (in fact, many of them). Hence, we can apply Lemma \ref{lem:ESBtoPhi} with $n=2$ to find
$$
|\Phi(\curly{C}_m)|\leq C_kN^{-\de/4}.
$$
By Lemma \ref{lem:PhitoESB}, we obtain the exponential sum bound $ESB_m\l(\frac{\de}{16}\r)$ for any integer $1\leq m\leq k$.

Finally, let $G_L$ be an arbitrary exploration graph in $\curly{L}_k$ that is not fully reducible. By Theorem \ref{thm:good}, $G_L$ contains a good cycle. Let $m$ be the length of a good cycle in $G_L$. By the exponential sum bound $ESB_m\l(\frac{\de}{16}\r)$ established above and Lemma \ref{lem:ESBtoPhi}, we conclude
$$
|\Phi(G_L)|\leq C_k N^{- \frac{\de}{16}}
$$
This proves Proposition \ref{prop:control}.
\end{proof}

\be{rmk}
The formula for $\Phi(\curly{C}_m)$ contains a term which is proportional $N^{-1}$, which indicates that the decay rate of $\Phi(G_L)$ does not generally improve with the length of the good cycle. 
\e{rmk}

\subsection{Proofs of Lemma \ref{lem:PhitoESB}}
We start from the cycle $\curly{C}_n$ and a choice of $\ul{\sigma}=(\sigma_1,\ldots,\sigma_n)\in{\pm 1}$. We then decompose $\curly{C}_n$ into two subcycles $\curly{C}_{\ul{\sigma}}=(e_1,\ldots,e_n)$ and $\tilde{\curly{C}}_{\ul{\sigma}}=(\tilde e_1,\ldots,\tilde e_n)$ as follows:
$$
\begin{aligned}
\sigma_l=1\,&\Longleftrightarrow\, e_l:=(l,l+1),\textnormal{ and }\, \tilde e_l:=(l+1,l), \\
\sigma_l=-1\,&\Longleftrightarrow\, e_l:=(l+1,l),\textnormal{ and }\, \tilde e_l:=(l,l+1),
\end{aligned}
$$
with the convention that $n+1=1$. 

Recall that $\Phi(\curly{C}_n)$ involves a choice of currents $\ul{j}=(j_{e_1}\ldots,j_{e_n},j_{\tilde e_1}\ldots,j_{\tilde e_n})\in \{1,\ldots,N\}^{2n}$ satisfying the Kirchhoff law. We parametrize these by
$$
\begin{aligned}
j_{e_r}=&t_r+A\sigma_r,\qquad \forall 1\leq r\leq n-1,\\
j_{\tilde e_r}=&t_r+B\sigma_r,\qquad \forall 1\leq r\leq n-1\\
j_{e_n}=&A\sigma_n,\qquad j_{\tilde e_n}=B\sigma_n.
\end{aligned}
$$
The variable $t_r$ represents the net current running around the short cycle $e_r\to \tilde e_{r}\to e_r$, while $A$ (respectively $B$) is the net current running around $C_{\ul{\sig}}$ (respectively $\tilde{C}_{\ul{\sig}}$). Implementing this parametrization, we see that
$$
\Phi(\curly{C}_n)=\frac{1}{N^{1+2n}}\sum_{i_1,\ldots,i_n=1}^{\floor{\rho N}}\sum_{t_1,\ldots,t_{n-1}\in \Z}
ES(\om_{i_1},\ldots,\om_{i_n},\ul{t},\ul{\sigma})^2.
$$
Here we used that $A\in T(\ul{t},\ul{\sigma})\Leftrightarrow B\in T(\ul{t},\ul{\sigma})$ to complete a square. By Assumption, $|\Phi(\curly{C}_n)|\leq C_n N^{-\kappa}.$ Notice that the constant $C_n$ is independent of the choice of signs $\ul{\sigma}=(\sigma_1,\ldots,\sigma_n)\in\{\pm 1\}^n$. The pigeonhole principle then implies $ESB_n(\kappa/4)$.
\qed

\subsection{Proof of Lemma \ref{lem:ESBtoPhi}}
Let $C_g=(e_1,\ldots,e_{n})$ be a good cycle in $G_L$ of length $n$. For each $1\leq i\leq n$, let $C_{e_i}$ be a cycle such that $C_{e_i}\cap C_g=\{e_i\}$; these cycles exist by Definition \ref{defn:good} of a good cycle. To compute $\Phi(G_L)$, we need to select currents $\ul{j}\in \{1,\ldots,N\}^k$. By Corollary \ref{cor:K}, there are effectively $k-l+1$ free current variables. Following Appendix A, we first consider the set of solutions to Kirchhoff's law as an $\R$-vector space. (For now, we ignore the constraint that the current along each edge should be in $\{1,\ldots,N\}$.) It is a subspace of $\R^k$, and each vector component represent an assignment of $\R$-valued current through a corresponding edge. Using Definition \ref{defn:curlyc}, we first select the linearly independent collection of our internal cycles
$$
\{\de_{C_g},\de_{C_{e_1}},\ldots,\de_{C_{e_{n-1}}}\}\subset \tilde{\curly{C}},
$$
where the orientation of $\de_C$ is inherited from the orientation of the edges of each cycle $C$ as described in Definition \ref{defn:curlyc}. Next, we extend this set to a basis of $\tilde{\curly{C}}$ by adding a basis $\tilde{\curly{B}}$ associated to all the remaining (``external'') cycles. By Theorem \ref{thm:K}, $\tilde{\curly{B}}$ has $m=k-l+1-n$ elements. Finally, we use the isomorphism between $\tilde{\curly{C}}$ and the $\R$-vector space of solutions of Kirchhoff's current law, $\curly{C}$, that was established in Lemma \ref{lm:iso}. This isomorphism maps our basis $\tilde{\curly{C}}$ to a basis of current assignments that automatically satisfy Kirchhoff's current law with $k-l+1$ parameters:
$$
\curly{B}\cup \{b_g,b_{e_1},\ldots,b_{e_{n-1}}\}\subset \curly{C}\subset \R^k,
\qquad \curly{B}=\{b_1,\ldots,b_{m}\}.
$$

Since we want to extract cancellations from the current along the good cycle, the idea is to use Fubini to take the sum over the external currents outside. We call the collection of external currents $j_{\curly{B}}$. 

Having constructed the bases, we now have to implement the constraint that the net current crossing each edge should be $\in\{1,\ldots,N\}$ by formula \eqref{eq:Phidefn} for $\Phi(G_L)$.

A choice of external currents $j_{\curly{B}}$ is admissible if and only if there is a corresponding choice of internal currents such that collectively, they satisfy that constraint. (Note that they automatically satisfy the Kirchhoff current law by design.) We recall that, by definition, the $k$ components of a vector in $\curly{C}$ describe the current assigned to the $k$ edges of the graph. Hence, we have the formal constraint
\beq\label{eq:JBdefn}
\begin{aligned}
j_\curly{B}\in \curly{J}_{\curly{B}}:=&\bigg\{(a_1,\ldots,a_m)\in \R^m\,:\,
\exists (a_g,a_{e_1},\ldots,a_{e_{n-1}})\in \R^n\,:\,\\
&\quad\sum_{i=1}^m a_i b_i+a_g b_g+a_{e_1}b_{e_1}+\ldots+a_{e_{n-1}}b_{e_{n-1}}\in \{1,\ldots,N\}^{k-l+1}\bigg\}.
\end{aligned}
\eeq
Given a $j_\curly{B}\in \curly{J}_{\curly{B}}$, we can assign the $n$ currents for the internal cycles. We parametrize these similarly as in the proof of Lemma \ref{lem:PhitoESB}, i.e., we set
$$
\begin{aligned}
j_{e_r}=&t_r+A\sigma_r,\qquad \forall 1\leq r\leq n-1,\\
j_{e_n}=&A\sigma_n,
\end{aligned}
$$
where $\sigma_r$ encodes the orientation of the edge $e_r$. See Figure \ref{fig:internalexternal} for an example.

\begin{figure}[t]
    \centering
    \includegraphics[scale=0.2]{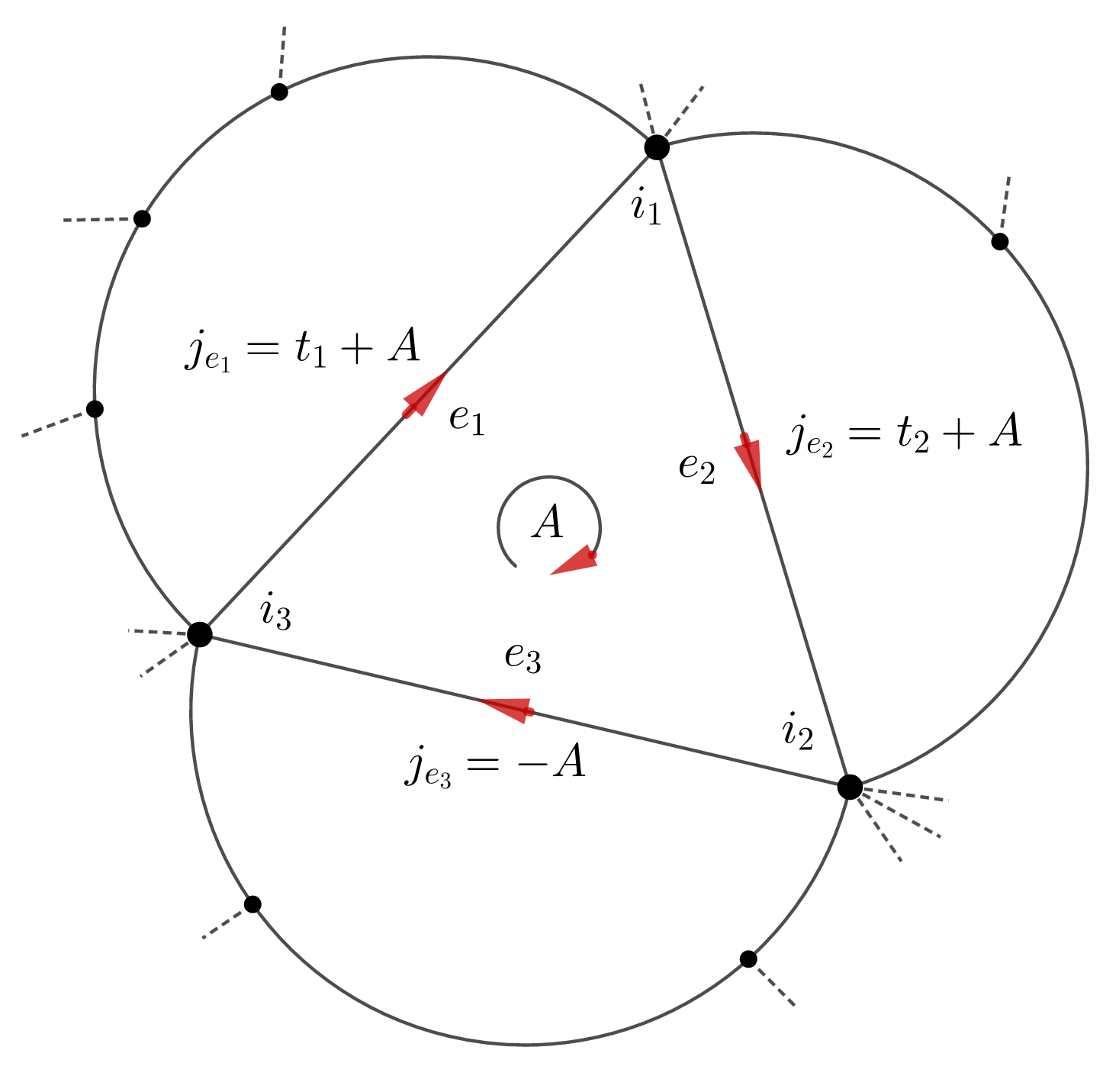}
    \caption{An example for the set of internal edges and vertices; the dashed lines connect to the external edges and vertices. The good cycle is $C_g=e_1\to e_2\to e_3\to e_1$, and the semicircles represent the cycles $C_{e_i}$ which intersect $C_g$ only at $e_i$. The key variable from which we extract cancellations is the net current $A$ running through the good cycle $C_g$.}
    \label{fig:internalexternal}
\end{figure}

The idea behind this parametrization is that the external currents $j_\curly{B}$ can be seen as effectively injecting a certain outside current into each edge $e_r$. Modulo a shift that removes this injected current, we can think of $t_r$ as the net current running along the cycle $C_{e_r}$ as before. Similarly, modulo an injection at edge $e_n$, $A$ represents the net current running along the cycle $C_g$. Since the shift depends on $j_\curly{B}$, so does the set of admissible choices of $\ul{t}=(t_1,\ldots,t_{n-1},0)$, and we call this set $\curly{T}(j_\curly{B})$. The current $A$ is chosen last and is constrained in the same way as in Lemma \ref{lem:PhitoESB}, i.e., $A\in T(\ul{t},\ul{\sigma})$.

Finally, we decompose the vertices into external and internal ones in a similar fashion. We write $\{i_1,\ldots,i_n\}$ for the vertices visited by $C_g$ (``internal'' vertices) and $\curly{I}$ for the remaining $l-n$ vertices in the graph $G_L$ (``external'' vertices). Moreover, we use the shorthand for the external propagator
$$
K_{\curly{I}}(j_{\curly{B}})
$$
which is the product of propagators $K_{(i,i')}(j_e)$ for all choices of external edges $e$ (so at least one of the vertices $i,i'$ is external as well).

By Fubini, we can now move the sum over the external variables outside, and the sum over $A$ inside, and express $\Phi(G_L)$ as 
$$
\begin{aligned}
&\Phi(G_L)=\frac{1}{N^{k+1}}\\
&\times
\underbrace{\sum_{\substack{\curly{I}\subset\{1,\ldots,M\}:\\ |\curly{I}|=l-n}}\sum_{j_\curly{B}\in \curly{J}_{\curly{B}}} K_{\curly{I}}(j_{\curly{B}})}_{\textnormal{external sum}}
\underbrace{\sum'_{\substack{i_1,\ldots,i_n\\ \in \{1,\ldots,M\}\setminus \curly{I}}}\sum_{\ul{t}\in \curly{T}(j_\curly{B})}\sum_{A\in T(\ul{t},\ul{\sigma})} \prod_{r=1}^n
K_{(i_r,i_{r+1})}(t_r+A\sigma_r)}_{\textnormal{internal sum}}.
\end{aligned}
$$

Here $\sum'$ means that the sum is taken over distinct indices $i_1,\ldots,i_n$. By the triangle inequality and the fact that all external propagators are bounded in modulus by $1$, we obtain
\begin{align}
|\Phi(G_L)|\leq& \frac{1}{N^{k+1}}
\sum_{\substack{\curly{I}\subset\{1,\ldots,M\}:\\ |\curly{I}|=l-n}}
\sum_{j_\curly{B}\in \curly{J}_{\curly{B}}}\sum'_{\substack{i_1,\ldots,i_n\\ \in \{1,\ldots,M\}\setminus \curly{I}}}\sum_{\ul{t}\in \curly{T}(j_\curly{B})}\l|\sum_{A\in T(\ul{t},\ul{\sigma})} \prod_{r=1}^n
K_{(i_r,i_{r+1})}(t_r+A\sigma_r)\r|\\
=& \frac{1}{N^{k+1}} \sum_{j_\curly{B}\in \curly{J}_{\curly{B}}}\sum_{\ul{t}\in \curly{T}(j_\curly{B})}
\sum'_{\substack{i_1,\ldots,i_n\\ \in \{1,\ldots,M\}}}  \binom{M-n}{l-n}\l|\sum_{A\in T(\ul{t},\ul{\sigma})} \prod_{r=1}^n
K_{(i_r,i_{r+1})}(t_r+A\sigma_r)\r|
\end{align}
In the second step, we first merged the sums over the vertices and then performed the sum over the external vertices given the internal ones. Now we recall our assumption $ESB_n(\kappa)$ which says that
$$
\l|\sum_{A\in T(\ul{t},\ul{\sigma})} \prod_{r=1}^n
K_{(i_r,i_{r+1})}(t_r+A\sigma_r)\r|\leq N^{1-\kappa}
$$
for all collections of internal vertices and currents $(i_1,\ldots,i_n,\ul{t})\in \{1,\ldots,M\}^n\times T(\ul{t},\ul{\sigma})$, excluding a bad set of cardinality bounded by $C_kN^{2n-1-\kappa}$. On the bad set, we can use the trivial bound 
$$
\l|\sum_{A\in T(\ul{t},\ul{\sigma})} \prod_{r=1}^n
K_{(i_r,i_{r+1})}(t_r+A\sigma_r)\r|\leq |T(\ul{t},\ul{\sigma})|\leq N
$$
Decomposing the sums over the good and bad sets, we obtain
\beq\label{eq:Phiestimate}
\begin{aligned}
|\Phi(G_L)|\leq& \frac{1}{N^{k+1}}\binom{M-n}{l-n}\sum_{j_\curly{B}\in \curly{J}_{\curly{B}}}
\l(\sum_{\ul{t}\in \curly{T}(j_\curly{B})} \sum'_{\substack{i_1,\ldots,i_n\\ \in \{1,\ldots,M\}}} N^{1-\kappa}+C_k N^{2n-\kappa}\r)\\
=&\frac{1}{N^{k+1}}\binom{M-n}{l-n}\l(\binom{M}{n} N^{1-\kappa} \sum_{j_\curly{B}\in \curly{J}_{\curly{B}}}
|\curly{T}(j_\curly{B})| 
+  |\curly{J}_{\curly{B}}| C_k N^{2n-\kappa}\r)
\end{aligned}
\eeq
It remains to control the cardinalities $|\curly{T}(j_\curly{B})|$ and $|\curly{J}_{\curly{B}}|$ which count the number of admissible assignments of external currents. For this we use Lemma \ref{lem:latticepoints2}. Indeed, since Definition \eqref{eq:JBdefn} is exactly of the required form \eqref{eq:latticepointform} with $m'=k-l+1$ and $m=k-l+1-n$ as before, Lemma \ref{lem:latticepoints2} gives the bound
$$
|\curly{J}_{\curly{B}}|\leq C_{m'} N^{m}\leq  C_k N^{k-l+1-n}.
$$
Similarly, we can bound $|\curly{T}(j_\curly{B})|$ by using Lemma \ref{lem:latticepoints2} with the choices $m'=n$, $m=n-1$ and shift vector determined by $j_{\curly{B}}$. This gives the bound
$$
|\curly{T}(j_\curly{B})|\leq C_{n} N^{n-1}\leq C_k N^{n-1},
$$
Here and above, we took $C_k$ to be the maximum over the finitely many constants $C_n$, with $1\leq n\leq k$. We apply these cardinality bounds to \eqref{eq:Phiestimate} and use
$$
\binom{M}{n}\leq C_k M^n=C_kN^n,\qquad \binom{M-n}{l-n}\leq C_k M^{l-n}= C_k N^{l-n}.
$$
(Recall that $M=\floor{\rho N}$.) This yields
$$
\begin{aligned}
|\Phi(G_L)|\leq C_k\frac{N^{l-n}}{N^{k+1}}\l(N^n N^{1-\kappa} N^{k-l+1-n}
N^{n-1}
+   N^{k-l+1-n} N^{2n-\kappa}\r)= C_k N^{-\kappa}.
\end{aligned}
$$
This proves Lemma \ref{lem:ESBtoPhi}.
\qed

\section{Identifying the limiting distribution}
In this section, we will count the contribution to the moments of leading graphs. We will do this by deriving a recursion for the weight of graphs that are leading.

In this section, we consider exploration graphs as in Definition \ref{defn:exploration}. We note that by forgetting the direction of edges, any exploration $L$ can be viewed as a graph $G_L=(V,L)$. In this way, we can define preprocessing on exploration graphs and a fully preprocessed exploration graph $G_L$ is one to which no further preprocessing steps can be applied. 

\begin{defn}[Fully reducible exploration graphs]
Let $G_L=(V,L)$ be an exploration graph. We say $G_L$ is ``fully reducible'', if $G_L$ can be preprocessed to a point.
\end{defn}

We need the following lemma on fully reducible exploration graphs.

\begin{lm} \label{l:edgdisjoint}
Let $G_L=(V,L)$ be an exploration graph. 
If there exist two vertices $v_1,v_2 \in V$ such that there are 4 edge-disjoint paths $E_1$, $E_2$, $E_3$, $E_4$ in between them, then $L$ is not fully reducible. 
\end{lm}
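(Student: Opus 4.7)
The plan is to argue by contradiction. Assume $L$ is fully reducible, meaning there is a sequence of preprocessing steps reducing $G_L$ to a single point. I will show that the existence of $4$ edge-disjoint paths between two distinct vertices is invariant under each preprocessing step, which contradicts the fact that a single-vertex graph admits no such pair.

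The first key observation is that any vertex $v$ with $4$ edge-disjoint paths to some other vertex must have degree at least $4$, since each path contributes a distinct edge incident to $v$. In particular, both $v_1$ and $v_2$ have degree $\geq 4$, so they are never short-circuited. Hence throughout the preprocessing, two distinct vertices (the images of $v_1$ and $v_2$) always remain in the graph.

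Next I would verify the invariant step by step. Loop removal is essentially trivial, since self-loops cannot lie on any path between distinct endpoints. For short-circuiting a degree-$2$ vertex $w$ (necessarily not $v_1$ or $v_2$) with neighbors $a$ and $b$ (possibly $a=b$), the key point is that at most one of the four edge-disjoint paths can pass through $w$; if two did, they would collectively use at least four edges at $w$, contradicting $\deg(w)=2$. Paths not through $w$ are unaffected. The single path through $w$, if any, must traverse both of $w$'s edges and can be rerouted through the newly added merged edge $(a,b)$ (or the self-loop at $a$ if $a=b$) to produce a walk from $v_1$ to $v_2$. From these four edge-disjoint walks in the new graph, one can extract four edge-disjoint simple paths by loop erasure, maintaining the invariant.

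The main subtlety I anticipate is the self-loop case $a=b$, where the modified walk includes a self-loop that is not part of a simple path. I would handle this by passing to edge-disjoint walks as an intermediate notion and using the standard fact that a collection of edge-disjoint walks between two fixed vertices gives rise to an equal-size set of edge-disjoint simple paths via loop erasure. An alternative formulation via Menger's theorem, phrasing the invariant as ``the minimum edge cut separating $v_1$ and $v_2$ is at least $4$,'' would also work and might be cleaner to verify because edge cuts manifestly respect both short-circuiting and loop removal.
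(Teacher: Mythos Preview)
Your proposal is correct and follows essentially the same route as the paper: both argue by contradiction and show that the property ``there exist two distinct vertices with four edge-disjoint paths between them'' is preserved under each preprocessing step, using that $v_1,v_2$ have degree $\geq 4$ (so are never short-circuited) and that a degree-$2$ vertex can lie on at most one of the four paths. Your worry about the case $a=b$ is unnecessary: if the two edges at the degree-$2$ vertex $w$ share the same other endpoint $a$, then no \emph{simple} path between distinct vertices $v_1,v_2\neq w$ can pass through $w$, so that case is vacuous and the Menger/walk detour is not needed.
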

\begin{proof}
Consider two vertices $v_1^0$ and $v_2^0$ that have four edge-disjoint paths called $E_1^0, E_2^0, E_3^0, E_4^0$ in between them. We will show by induction that after each preprocessing step, which will be indexed by time $t$, we can still find two vertices $v_1^t$ and $v_2^t$ such that there exist four edge disjoint paths $E_1^t$, $E_2^t$, $E_3^t$ and $E_4^t$ in between $v_1^t$ and $v_2^t$ in the $t$ step preprocessed graph $G^t$.

If we do not short-circuit a vertex that is in one of the paths $E_1^t$, $E_2^t$, $E_3^t$ or $E_4^t$, then we may set $E_j^{t+1} = E_j^{t}$ and $v_{1,2}^{t+1} = v_{1,2}^t $. Similarly, removing a loop cannot affect the existence of the edge disjoint paths.

The remaining possibility is that short-circuiting removed a vertex inside one of the $E_i^t$. Consider the case that said vertex is not $v_1^t$ or $v_2^t$. Then this vertex cannot belong to two of the edge disjoint paths $E_{j_1,j_2}^t$; a vertex belong to two such edge disjoint paths must have degree at least four. Thus, it would not be chosen for preprocessing. Said vertex would belong to exactly one of the $E_j^t$; in this case short-circuiting would reduce the size of the path, but not completely remove it. The only remaining case is that preprocessing short-circuits either $v_1^t$ or $v_2^t$. However, for this to be possible, either $v_1^t$ or $v_2^t$ has degree $2$. This is clearly not possible when there are four edge disjoint paths in between $v_1^t$ and $v_2^t$. 

This shows that at all times $t$, we have two points $v_1^t$ and $v_2^t$ such that there are $4$ edge disjoint paths in between $v_1^t$ and $v_2^t$. 
This graph cannot be preprocessed into a point.
\end{proof}

\begin{defn}[Total weights of reducible graphs]
Let $\curly{R}_k$ denote the set of fully reducible exploration graphs with $k$ edges. Given an exploration graph $G_L=(V,L)$ in $\curly{R}_k$, we define the weight function $W[G_L] = \rho^{|V|}$ where $\rho$ is the constant from Theorem \ref{thm:keyrho}. We define
\begin{equation}
\tilde\mu_k:= \sum_{G_L\in\curly{R}_k} W[G_L]
\end{equation}
Notice that $\tilde\mu_0 =1$.
\end{defn}

\begin{thm}[Recursion for the limiting moments]
\label{thm:recursion}
The $(\tilde\mu_k)_{k\geq 1}$ satisfy the following recurrence relation.
\begin{equation}
\label{eq:recursion}
    \tilde\mu_k =  \tilde\mu_{k-1} + \sum_{n=2}^{k}  \rho \tilde\mu_{n-1} \tilde\mu_{k- n},\qquad \forall k\geq 2,
\end{equation}
with initial values $\tilde\mu_0 =1$ and $\tilde\mu_1=\rho$. 
% Equivalently, $\mu_k$ are given by the formula
% $$
% \textcolor{red}{\mu_k=\sum_{r=0}^{k-1}\frac{\rho^{k-r}}{r+1} \binom{k}{r}\binom{k-1}{r}.}
% $$
\end{thm}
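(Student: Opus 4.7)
The plan is to prove the recursion by induction on $k$, decomposing each fully reducible exploration graph $G_L \in \curly{R}_k$ according to the first edge $(\nu_1,\nu_2) = (1,\nu_2)$ of $L$. The first-occurrence condition in Definition~\ref{defn:exploration} forces $\nu_2 \in \{1,2\}$. If $\nu_2 = 1$ (Case~1), the first edge is a self-loop at vertex~$1$; deleting it is a weight-preserving bijection onto $\curly{R}_{k-1}$, since loop removal is an allowed preprocessing step (Definition~\ref{defn:preprocessing}) and leaves the vertex set intact. This case contributes the term $\tilde\mu_{k-1}$ in the recursion.

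If $\nu_2 = 2$ (Case~2), a fresh vertex is introduced, and I would decompose $L$ at the first return to vertex~$1$. Let $r \in \{3,4,\dots,k+1\}$ be the smallest index with $\nu_r = 1$ (taking $r = k+1$ if the walk first revisits $1$ only at the cyclic closure). This splits the edges of $L$ into an ``arch'' of $n := r-1$ edges and a ``remainder'' of $k-n$ edges, each forming a closed walk at vertex~$1$. The crux of the analysis is Lemma~\ref{l:edgdisjoint}: if the arch and remainder shared any vertex $v \neq 1$, one could extract two edge-disjoint paths from $1$ to $v$ inside each sub-walk, yielding four edge-disjoint paths between $1$ and $v$ in $G_L$, contradicting full reducibility. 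Hence the vertex sets of the two sub-explorations intersect exactly in $\{1\}$, and each sub-exploration is itself fully reducible since preprocessing sequences localize to each piece.

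To produce the summand $\rho\,\tilde\mu_{n-1}\tilde\mu_{k-n}$, I would encode the arch as the pair consisting of (i) a closed walk at vertex~$2$ of length $n-1$, obtained by collapsing the entry edge $(1,2)$ and the closing edge $(\nu_{r-1},1)$ into a single edge at $2$ (which remains Eulerian and fully reducible), contributing $\tilde\mu_{n-1}$ after relabeling $2$ as the new root; and (ii) an explicit factor $\rho$ for vertex~$1$, which has been stripped from the merged arch by this collapse. The remainder is a fully reducible exploration at vertex~$1$ of length $k-n$, contributing $\tilde\mu_{k-n}$. Summing over $n \in \{2,\dots,k\}$, with the endpoint $n = k$ (the arch equals the whole exploration and the remainder is empty) absorbed by the convention $\tilde\mu_0 = 1$, then yields $\sum_{n=2}^{k}\rho\,\tilde\mu_{n-1}\tilde\mu_{k-n}$. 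The initial values are immediate: $\tilde\mu_0 = 1$ since the empty exploration has no vertices, and $\tilde\mu_1 = \rho$ since the only length-one fully reducible graph is a single self-loop at $1$.

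The main obstacle I anticipate is the careful weight bookkeeping in Case~2: one must confirm that the collapse operation defines a genuine bijection between arches of length $n$ and the elements of $\curly{R}_{n-1}$ viewed as rooted at vertex~$2$, and that the extra factor $\rho$ for vertex~$1$ combines with $\tilde\mu_{k-n}$ exactly as claimed, without double-counting the shared root. Lemma~\ref{l:edgdisjoint} supplies the structural disjointness on which the entire decomposition rests; translating this disjointness into the precise polynomial identity requires handling the boundary case $n = k$ (the non-decomposable Case~2, where the walk only returns to $1$ at the cyclic closure) consistently with the convention $\tilde\mu_0 = 1$.
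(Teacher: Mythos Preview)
Your approach is exactly the paper's: decompose at the first return to the root vertex, use Lemma~\ref{l:edgdisjoint} to force the two pieces to be vertex-disjoint away from the root, then short-circuit the root inside the arch to get an element of $\curly{R}_{n-1}$. The paper's proof does precisely this with $C_1$ (your arch) and $C_2$ (your remainder).

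However, the bookkeeping you propose in Case~2 double-counts the root when $2\le n\le k-1$, and this is a genuine error rather than a detail to be filled in. The remainder is a closed walk \emph{based at vertex~$1$}, so vertex~$1$ is already included in $|V_{\text{rem}}|$ and hence in the weight $\rho^{|V_{\text{rem}}|}$ that sums to $\tilde\mu_{k-n}$. Your extra factor~$\rho$ ``for vertex~$1$'' therefore overcounts: since $|V_G|=|V_{\text{arch}}|+|V_{\text{rem}}|-1$, the correct splitting is $W[G]=\rho^{|V_{\text{arch}}|-1}\cdot\rho^{|V_{\text{rem}}|}$, giving contribution $\tilde\mu_{n-1}\tilde\mu_{k-n}$ with \emph{no} extra~$\rho$ for $2\le n\le k-1$. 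Only at the boundary $n=k$ (remainder empty, convention $\tilde\mu_0=1$) does the root need its own factor, yielding $\rho\,\tilde\mu_{k-1}$. The recursion one actually obtains is
\[
\tilde\mu_k=\tilde\mu_{k-1}+\sum_{n=2}^{k-1}\tilde\mu_{n-1}\tilde\mu_{k-n}+\rho\,\tilde\mu_{k-1}
=\rho\,\tilde\mu_{k-1}+\sum_{n=2}^{k}\tilde\mu_{n-1}\tilde\mu_{k-n},
\]
which is the formula displayed at the \emph{end} of the paper's proof. The formula in the theorem statement has the $\rho$ in the wrong slot; a direct check at $k=3$ gives $\tilde\mu_3=\rho+3\rho^2+\rho^3$ (five explorations, one on each of $1,2,2,2,3$ vertices), which matches the displayed formula but not the stated one. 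So your anticipated obstacle is real, your resolution of it is off by exactly this shared-root factor, and the statement you are aiming for is itself misprinted.
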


\begin{proof}[Proof of Theorem \ref{thm:recursion}]
We first need to find a natural way to decompose a reducible graph into component parts which we are also able to recognize as reducible.

We start with a reducible graph $G$ with initial starting vertex $v$. We will define the subgraph $C_1$ to be the subgraph of $G$ in between $v$ and the first return to $v$. The subgraph $C_2$ will be the subgraph of $v$ between the first return to $v$ and the final return to $v$, or alternatively all the way to the end. We claim that the graphs $C_1$ and $C_2$ are vertex-disjoint aside from the initial vertex $v$.

Assume for contradiction that there is another vertex $w$ in common between $C_2$ and $C_1$. Consider $C_1^{v \rightarrow w}$ to be the part of $C_1$ traversed starting from $v$ to the first appearance of $w$ and let $C_1^{w \rightarrow v}$ be the part of $C_1$ from the first appearance of $w$ to the end of $C_1$ when it returns to $v$. Define $C_2^{v \rightarrow w}$ and $C_2^{w \rightarrow v}$ accordingly. Then $C_1^{w \rightarrow v}$, $C_1^{v \rightarrow w}$, $C_2^{w \rightarrow v}$ and $C_2^{v \rightarrow w}$ form 4 edge disjoint paths in between the vertices $v$ and $w$. This contradicts lemma \ref{l:edgdisjoint} since we assumed $G$ is fully reducible. Thus $C_1$ and $C_2$ are vertex-disjoint graphs excluding $v$. Additionally, this shows that the subgraphs $C_1$ and $C_2$ are themselves reducible graphs.

Similar to how one proves the Catalan recurrence, we see that one now has a product structure for reducible graphs. Namely, we see that $\tilde\mu_k$ should be the sum over all $n$ of the product of total weight of all possibilities for $C_1$ when $C_1$ has $n$ edges and the total weight of all possibilities of $C_2$ when $C_2$ has $k-n$ edges. The total weight for all possibilities for $C_2$ when $C_2$ has $k-n$ edges is merely $\tilde\mu_{k-n}$ as $C_2$ can vary over all all reducible graphs of size $k-n$. There is slightly more difficulty in counting the total weight of all possibilities of $C_1$ since we cannot return to the first vertex $v$ until the end. However, notice that this implies that the vertex $v$ can be short-circuited. This implies that the graphs forming $C_1$ are in bijection with all reducible graphs with $n-1$ edges. 

If $n$ is greater than $1$, this means that preprocessing removes a vertex and the total weight of all graphs forming $C_1$ is $\rho \tilde\mu_{n-1}$. If $n=1$, then preprocessing removes a loop and we do not need to add a weight factor. We thus get
\begin{equation}
    \tilde\mu_k = \rho\tilde\mu_{k-1} + \sum_{n=2}^{k}  \tilde\mu_{n-1} \tilde\mu_{k-n}
\end{equation}
as claimed.
\end{proof}

We recall that $\mu_k$ are the moments of the
rescaled Marchenko-Pastur law:
\beq\label{eq:fdefn}
f(t)=\rho^{-2} f^{MP}_{\rho^{-1}}\l(\frac{t}{\rho}\r),
\eeq 
where $f^{MP}_{\rho^{-1}}$ is the probability density function of the Marchenko-Pastur law with parameter $\rho^{-1}$.

\be{lm}\label{lm:solve}
We have $\tilde \mu_k=\mu_k$ for all $k\geq 0$.
\e{lm}

\begin{proof}[Proof of Lemma \ref{lm:solve}]
First, we note that $\tilde\mu_0=\mu_0=1$ The moments $\mu_k^{MP}$ of the Marchenko-Pastur law $f^{MP}_{\rho}$ satisfy the recursion relation
\beq\label{eq:mprec}
\mu_k^{MP}=\rho \mu_{k-1}^{MP}+\sum_{n=2}^k \mu_{n-1}^{MP}\mu_{k-n}^{MP}.
\eeq
By a change of variables, we see that
$$
\mu_k=\int_\R t^k f(t)\d t = \rho^{-2} \int_\R t^k f^{MP}_{\rho^{-1}}\l(\frac{t}{\rho}\r)\d t
=\rho^{k-1} \int_\R t^k f^{MP}_{\rho^{-1}}(t)\d t=\rho\mu_k^{MP}.
$$
Combining this fact with \eqref{eq:mprec}, we see that the $\mu_k$ solve the recursion relation \eqref{eq:recursion} in Theorem \ref{thm:recursion}, with the same initial condition.
\end{proof}

\section{Verifying the quasi-random condition on frequencies}\label{sect:ntheory}
In this section, we use techniques from analytic number theory (estimates on exponential sums) to verify the quasi-randomness assumption for several classes of examples. These examples all have in common that sufficient irrationality prevents the frequencies from colluding to produce large exponential sums.

Throughout this section, we fix a choice of $\rho>0$. We recall the relevant exponential sum.
\beq\label{eq:ESNdefn}
ES_{N}(\ul{\omega}):=\frac{1}{N^5} \sum_{i_1,i_2=1}^{\floor{\rho N}} \sum^N_{\substack{j_1,j_2,j_3,j_4=1\\ j_1+j_3=j_2+j_4}} 
e\l[\frac{\om_{i_1}-\om_{i_2}}{2}(j_1^2-j_2^2+j_3^2-j_4^2)\r]
\eeq

We recall Definition \ref{defn:qrandom} of a quasi-random sequence. Let $\de,\rho>0$. The sequence of frequencies $\ul{\omega}=(\omega_1,\omega_2,\ldots)$ with $\omega_i\in [0,1]$ is $(\de,\rho)$-quasi-random, if there exists a constant $C>0$ such that we have the exponential sum bound
$$
|ES_N(\ul{\omega})|\leq C N^{-\de},\qquad \forall N\geq 1.
$$

\subsection{Preliminary analysis of $ES_N(\ul{\omega})$}
It will be convenient to reparametrize $ES_N(\ul{\omega})$ because this allows us to complete a square to reduce to a geometric series. This observation is closely related to a method known as Weyl differencing in analytic number theory \cite{Mont,W1,W2}; one point here is that the square is already present in the original exponential sum $ES_N$.  

\begin{lm}\label{lm:complete}
We have the identity
\beq\label{eq:complete}
ES_N(\ul{\omega})=\frac{1}{N^5} \sum_{i_1,i_2=1}^{\floor{\rho N}} \sum_{t=1-N}^{N-1}\l|\sum_{a=\max\{1,1-t\}}^{\min\{N,N-t\}} e\l[(\om_{i_1}-\om_{i_2})at\r]\r|^2.
\eeq
\end{lm}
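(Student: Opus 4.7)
The identity should follow from a direct change of variables in the inner sum, exploiting the linear constraint $j_1+j_3=j_2+j_4$ to complete a square via the difference-of-squares factorization. The plan is as follows.

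First, I would parametrize the constraint set. Observe that $j_1+j_3=j_2+j_4$ is equivalent to $j_1-j_2=j_4-j_3$, so I set $t:=j_1-j_2=j_4-j_3$ and use $(t,j_2,j_3)$ as the new free variables. The constraint $j_1,j_4\in\{1,\ldots,N\}$ translates to $\max\{1,1-t\}\leq j_2\leq\min\{N,N-t\}$ and analogously for $j_3$, while $t$ ranges over $\{1-N,\ldots,N-1\}$. This is a bijection between the constraint set and the new index set, with no Jacobian.

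Next, I would simplify the exponent. Using $j_1^2-j_2^2=(j_1-j_2)(j_1+j_2)=t(2j_2+t)$ and $j_3^2-j_4^2=-t(2j_3+t)$, the sum of the four squared terms collapses to $j_1^2-j_2^2+j_3^2-j_4^2=2t(j_2-j_3)$. Hence the phase becomes
\[
\frac{\om_{i_1}-\om_{i_2}}{2}(j_1^2-j_2^2+j_3^2-j_4^2)=(\om_{i_1}-\om_{i_2})\,t\,(j_2-j_3).
\]

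Finally, I would factor the inner double sum as a modulus squared. Since $j_2$ and $j_3$ vary independently over the same range $[\max\{1,1-t\},\min\{N,N-t\}]$, and since the phase is a difference in $j_2$ and $j_3$, we have
\[
\sum_{j_2,j_3} e[(\om_{i_1}-\om_{i_2})t(j_2-j_3)]=\Bigl|\sum_{a=\max\{1,1-t\}}^{\min\{N,N-t\}} e[(\om_{i_1}-\om_{i_2})at]\Bigr|^2,
\]
where I relabeled $j_2\mapsto a$. Substituting back into the expression for $ES_N(\ul\omega)$ yields \eqref{eq:complete}.

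\textbf{Anticipated difficulty.} There is essentially no obstacle; the only detail requiring care is bookkeeping of the summation ranges after the change of variables, in particular verifying that the ranges for $j_2$ and $j_3$ coincide (so that the sum factors as a modulus squared rather than a product of two distinct sums). Everything else is a straightforward algebraic manipulation, motivated by the Weyl differencing heuristic noted in the paragraph preceding the lemma.
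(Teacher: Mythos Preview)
Your proposal is correct and follows essentially the same approach as the paper: both parametrize the constraint $j_1+j_3=j_2+j_4$ by three free variables (a shift $t$ and two independent summation indices ranging over the same interval), reduce the phase to $(\om_{i_1}-\om_{i_2})t$ times a difference of the two indices, and then factor the resulting double sum as a modulus squared. The only cosmetic difference is that the paper sets $t=j_2-j_1$ and uses $(j_1,j_4)$ as the free indices, while you set $t=j_1-j_2$ and use $(j_2,j_3)$; since the $t$-range is symmetric about zero this is immaterial.
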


We remark that this identity shows in particular that $ES_N(\ul{\omega})\geq 0$.

\begin{proof}
We change summation variables from $(j_1,j_2,j_3,j_4)$ subject to $j_1+j_3=j_2+j_4$ to the three independent variables $a,b,t$ defined by the relations
$$
j_1=a,\quad j_2=a+t,\quad j_3=b+t,\quad j_4=b.
$$
These variables allow us to complete a square:
$$
\begin{aligned}
ES_N(\ul{\omega})
=&\frac{1}{N^5} \sum_{i_1,i_2=1}^{\floor{\rho N}} \sum_{a=1}^N \sum_{t=\max\{1-a,1-b\}}^{\min\{N-a,N-b\}}
e\l[(\om_{i_1}-\om_{i_2})(b-a)t\r]\\
=&\frac{1}{N^5} \sum_{i_1,i_2=1}^{\floor{\rho N}} \sum_{t=1-N}^{N-1}\l|\sum_{a=\max\{1,1-t\}}^{\min\{N,N-t\}} e\l[(\om_{i_1}-\om_{i_2})at\r]\r|^2
\end{aligned}
$$
and this proves the lemma.
\end{proof}

Given Lemma \ref{lm:complete}, it is natural to perform the geometric series in $a$.

\be{cor}\label{cor:geometric} We have
\beq\label{eq:geometric}
ES_N(\ul{\omega})\leq \frac{1}{4N^5} \sum_{i_1,i_2=1}^{\floor{\rho N}} \sum_{t=1-N}^{N-1}\min\l\{N^2,\frac{1}{\|(\om_{i_1}-\om_{i_2})t\|^2_{\mathbb T}}\r\}.
\eeq
\e{cor}

\be{proof}
Consider the geometric sum in $a$ in \eqref{eq:complete}. It can be trivially bounded by $N$, or we can perform the sum. This yields
$$
\begin{aligned}
ES_N(\ul{\omega})=&\frac{1}{N^5} \sum_{i_1,i_2=1}^{\floor{\rho N}} \sum_{t=1-N}^{N-1}\l|\sum_{a=\max\{1,1-t\}}^{\min\{N,N-t\}} e\l[(\om_{i_1}-\om_{i_2})at\r]\r|^2\\
\leq& \frac{1}{4N^5} \sum_{i_1,i_2=1}^{\floor{\rho N}} \sum_{t=1-N}^{N-1}\min\l\{N^2,\frac{1}{\|(\om_{i_1}-\om_{i_2})t\|^2_{\mathbb T}}\r\}.
\end{aligned}
$$
Here we used that $\frac{1}{2}|\sin(\pi x)|\geq \|x\|_{\mathbb T}$, the distance of $x$ to the nearest integer. This proves Corollary \ref{cor:geometric}.
\end{proof}

We now verify the bound $|ES_N(\ul{\omega})|\leq CN^{-\de}$ in explicit examples.

\subsection{Frequencies generated by irrational circle rotation.}
In this section, we study the case $\om_i=i \al$ with $\al$ Diophantine, i.e., the frequencies are generated by an irrational circle rotation. We will control small divisors to prove the following proposition (which verifies the $(\de,\rho)$-quasirandom definition for this choice of $\om_i$).

\begin{prop}\label{prop:ialpha}
Let $\om_i=i \al$ with $\al\in (0,1)$ satisfying the Diophantine assumption 
\beq\label{eq:diophantine}
\|n\al\|_{\mathbb T}\geq\frac{C_\al}{n^p},\qquad \forall n\geq 1,
\eeq
for a power $p\geq 1$. Then, there exists a constant $C_\al'>0$ such that
$$
|ES_N(\ul{\omega})|\leq C'_{\al}N^{-1/(2p)}.
$$
\end{prop}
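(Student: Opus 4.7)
The plan is to start from Corollary \ref{cor:geometric}, which reduces the task to bounding
\begin{equation*}
S := \sum_{i_1,i_2=1}^{\floor{\rho N}}\sum_{t=1-N}^{N-1}\min\l\{N^2,\frac{1}{\|(i_1-i_2)t\alpha\|_{\T}^2}\r\}
\end{equation*}
by a constant multiple of $N^{5-1/(2p)}$, since $ES_N(\ul{\omega})\leq S/(4N^5)$. The degenerate triples with $(i_1-i_2)t=0$ contribute only $O(N^4)$ and can be discarded.

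The key idea is to reparametrize the sum by $m:=(i_1-i_2)t$ rather than by $i_1-i_2$ separately. For fixed $m\neq 0$, every divisor $s$ of $m$ with $1\leq |s|\leq\floor{\rho N}$ and $|m/s|\leq N-1$ produces at most $\floor{\rho N}$ admissible pairs $(i_1,i_2)$ with $i_1-i_2=s$; hence the number of triples with product $m$ is bounded by a constant times $N\,d(|m|)$, where $d(\cdot)$ is the divisor function. Since $|m|\leq\rho N^2$, the standard bound $d(m)=O_\epsilon(m^\epsilon)$ uniformly gives a count of $O_\epsilon(N^{1+2\epsilon})$ for each $m$. This reduces matters to controlling the one-dimensional sum
\begin{equation*}
\Sigma_M := \sum_{m=1}^M \min\l\{N^2,\frac{1}{\|m\alpha\|_\T^2}\r\},\qquad M=\floor{\rho N^2}.
\end{equation*}

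For this, the Diophantine hypothesis \eqref{eq:diophantine} directly yields a counting estimate. If $m_1<m_2$ both satisfy $\|m_j\alpha\|_\T\leq r$, then the triangle inequality gives $\|(m_2-m_1)\alpha\|_\T\leq 2r$, and \eqref{eq:diophantine} forces $m_2-m_1\geq(C_\alpha/(2r))^{1/p}$. A pigeonhole argument then yields
\begin{equation*}
\#\{m\in[1,M]:\|m\alpha\|_\T\leq r\}\leq M(2r/C_\alpha)^{1/p}+1.
\end{equation*}
A layer-cake decomposition together with the elementary integral $\int_0^{N^2}\lambda^{-1/(2p)}\d\lambda\leq C_p N^{2-1/p}$ (which converges since $p\geq 1$) then gives $\Sigma_M\leq C_{\alpha,p}MN^{2-1/p}\leq C'_{\alpha,p}N^{4-1/p}$.

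Combining the two bounds yields $S\leq C_\alpha N^{5-1/p+2\epsilon}$, i.e., $ES_N(\ul{\omega})\leq C'_\alpha N^{-1/p+2\epsilon}$ for every $\epsilon>0$, and selecting $\epsilon=1/(4p)$ produces the claimed $N^{-1/(2p)}$ decay. The main obstacle is that the naive approach of summing over $s=i_1-i_2$ first and then applying the counting estimate to $\beta_s=s\alpha$ introduces a Diophantine constant that worsens by $s^p$, leading to a useless bound of order $N^{1-1/p}$. Routing the estimate through the variable $m$ trades this damaging $s^p$ loss for the much gentler divisor factor $N^{2\epsilon}$, which is the decisive improvement.
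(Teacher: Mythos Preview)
Your proof is correct and follows essentially the same route as the paper's: both start from Corollary~\ref{cor:geometric}, pass to the product variable $m=(i_1-i_2)t$ at the cost of a divisor factor $d(|m|)=O(N^{2\epsilon})$, then use the Diophantine condition to control the one-dimensional sum over $m$, and finally set $\epsilon=1/(4p)$. The only cosmetic difference is that the paper carries out the last step via a dyadic decomposition of the level sets $A_l=\{n:\|n\alpha\|_\T^{-1}\sim 2^l\}$, whereas you use the equivalent layer-cake formulation; the resulting bounds and the final exponent $N^{-1/(2p)}$ are identical.
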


\be{rmk}
The set of admissible $\al$ in this proposition has full Lebesgue measure.
\e{rmk}

\be{proof}[Proof of Proposition \ref{prop:ialpha}]
For simplicity, we set $\rho=1$. The case of general $\rho$ follows by the same argument. We start from the bound in Corollary \ref{cor:geometric} and change variables to $r=i_1-i_2$ and $s=i_1+i_2$, and we note that for every choice of $r$, there exist at most $2N$ viable choices of $s$. This gives
$$
\begin{aligned}
ES_N(\ul{\omega})\leq& \frac{1}{4N^5} \sum_{i_1,i_2=1}^{\floor{\rho N}} \sum_{t=1-N}^{N-1}\min\l\{N^2,\frac{1}{\|(i_1-i_2)t\al\|^2_{\mathbb T}}\r\}\\
\leq & \frac{1}{2N^4} \sum_{r=1-N}^{N-1}  \sum_{t=1-N}^{N-1}\min\l\{N^2,\frac{1}{\|rt\al\|^2_{\mathbb T}}\r\}\\
\leq & \frac{2}{N^4} \sum_{r=0}^{N-1}  \sum_{t=0}^{N-1}\min\l\{N^2,\frac{1}{\|rt\al\|^2_{\mathbb T}}\r\}.
\end{aligned}
$$
In the last step, we reflection symmetry of the summand in $r$ and $t$ (and doubled up the contributions from $r=0$ and $t=0$).

Notice that due to the product structure, the combined sum over $r$ and $t$ and can be expressed in terms of the divisor function, defined by 
$$
\tau(n):=\sum_{d\vert n}1
$$
for every integer $n$. We will use the following bound on $\tau$ \cite{Apostol}.
$$
\forall \eps>0:\qquad \tau(n)=o(n^{\eps}).
$$
Changing variables to $n=rt$ then gives
\beq\label{eq:nalpha1}
\begin{aligned}
ES_N(\ul{\omega})
\leq & \frac{2}{N^4} \sum_{n=0}^{(N-1)^2}  \tau(n)\min\l\{N^2,\frac{1}{\|n\al\|^2_{\mathbb T}}\r\}\\
\leq&\frac{2C_\eps N^{2\eps}}{N^4} \sum_{n=0}^{(N-1)^2}  \min\l\{N^2,\frac{1}{\|n\al\|^2_{\mathbb T}}\r\}.
\end{aligned}
\eeq
where the constant $C_\eps$ comes from the divisor bound.

Next, we use the arithmetic structure of the set where $\frac{1}{\|n\al\|^2_{\mathbb T}}$ is large. We perform a dyadic decomposition in $n$. Given an integer $l\geq 0$ and a real number $x$, we write $x\sim 2^l$ for $2^l\leq x<2^{l+1}$. Define the set
$$
A_l:=\setof{0\leq n\leq (N-1)^2}{\frac{1}{\|n\al\|_{\mathbb T}}\sim 2^l}
$$
We write $|A_l|$ for the cardinality of $A_l$. 

We note that there exists a constant $C>0$ depending on $\al$ such that  $|A_l|\leq C\frac{N^2}{2^{l/p}}$. Indeed, by the Diophantine assumption \eqref{eq:diophantine}, we have
$$
n\in A_l\,\,\Rightarrow\,\, n\geq \l(\frac{C_\al}{2}\r)^{1/p} 2^{l/p}.
$$
and, by the triangle inequality,
$$
n,n'\in A_l\textnormal{ and } n<n'\,\,\Rightarrow\,\, n'-n\geq \l(\frac{C_\al}{4}\r)^{1/p} 2^{l/p}.
$$

We return to \eqref{eq:nalpha1} and define $L:=\log_2\left\lceil\l(\frac{(N-1)^2}{C_\al}\r)^p\right\rceil$. By $|A_l|\leq C\frac{N^2}{2^{l/p}}$, we have
$$
\begin{aligned}
ES_N(\ul{\omega})
\leq&\frac{8C_\eps N^{2\eps}}{N^4} \sum_{l=0}^{L} \l(\min\{N,2^{l}\}\r)^2 |A_l|\\
\leq&\frac{C_{\eps,\al}N^{2\eps}}{N^2} \sum_{l=0}^{L} \l(\min\l\{N,2^{l}\r\}\r)^2 2^{-l/p}\\
\leq &\frac{C_{\eps,\al}N^{2\eps}}{N^2} \l(\sum_{l=0}^{\log_2 N}  2^{2l-l/p}+\sum_{l=\log_2 N}^{\infty}  N^2 2^{-l/p}\r)\\
=&C_{\eps,\al} N^{2\eps-1/p}.
\end{aligned}
$$
Finally, we choose $\eps=1/(4p)$ in the divisor bound. This proves Proposition \ref{prop:ialpha}.
\e{proof}

\subsection{Fractional power frequencies}
We study frequencies $\omega_i$ which are obtained by taking fractional powers of $i$ modulo $1$. The tool we use here are the $k$-th derivative van der Corput estimates \cite{Mont,Titchmarsh} for exponential sums. A key difference to the previous example is that we now extract oscillations from the $i_1,i_2$ summations in \eqref{eq:ESNdefn}.

We also include the case where the powers are rescaled so that the frequencies are small, since it follows by the same proof method. In this case, the intuition is that small numbers $\ll N$ effectively look irrational on the scale of the matrix.

\begin{prop}[Power law frequencies]
\label{prop:vdc}
Let $\al,\beta\in \R\setminus\Z$ with $\al>\beta-2$. Define 
\beq
\om_i=\frac{i^\al}{N^\beta}.
\eeq
Then, we have $|ES_N(\ul{\om})|\leq CN^{-\de}$ for some $\de$ depending on $\al,\beta$.
\end{prop}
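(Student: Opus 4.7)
The plan is to factorize the $(i_1,i_2)$ sum and then apply the $k$-th derivative van der Corput test to an exponential sum in the single index $i$. Starting from the reparametrization in Lemma \ref{lm:complete}, I expand the modulus squared inside as $|\sum_a e[(\om_{i_1}-\om_{i_2})at]|^2 = \sum_{a,a'} e[(\om_{i_1}-\om_{i_2})(a-a')t]$ and interchange the $(i_1,i_2)$ sum with the outer sums to obtain
$$
ES_N(\ul{\om}) \;=\; \frac{1}{N^5}\sum_{t,a,a'} |S(A)|^2,\qquad S(A):=\sum_{i=1}^{\lfloor \rho N\rfloor} e[A\,i^\alpha/N^\beta],\quad A:=(a-a')t,
$$
where the total number of triples $(t,a,a')$ is $O(N^3)$ and $|A|\leq N^2$. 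The contribution from $A=0$ is trivially $O(N^{-1})$: the number of such triples is $O(N^2)$ (either $t=0$ or $a=a'$), and $|S(0)|^2 \leq CN^2$.

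For $A\neq 0$, the idea is to apply van der Corput's $k$-th derivative test \cite{Titchmarsh} to the phase $f_A(x)=Ax^\alpha/N^\beta$. Since $\alpha\notin\Z$, the constant $\alpha(\alpha-1)\cdots(\alpha-k+1)$ is nonzero for every $k\geq 1$, and therefore $|f_A^{(k)}(x)|\asymp|A|N^{\alpha-k-\beta}$ uniformly on $x\in[1,\lfloor\rho N\rfloor]$ with constants depending on $k,\alpha$, and with $f_A^{(k)}$ monotone and of bounded ratio. The standard estimate then gives
$$
|S(A)| \;\lesssim_k\; N\bigl(|A|N^{\alpha-k-\beta}\bigr)^{1/(2^k-2)} + N^{1-2^{2-k}}\bigl(|A|N^{\alpha-k-\beta}\bigr)^{-1/(2^k-2)}.
$$
Using $|A|\leq N^2$ in the first term and $|A|\geq 1$ in the second term reduces both to a uniform power of $N$. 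A short calculation shows that both exponents of $N$ fall strictly below $1$ as soon as an integer $k$ can be chosen satisfying (roughly) $\alpha-\beta+2 < k < \alpha-\beta+4$. The hypothesis $\alpha>\beta-2$ guarantees the left inequality is compatible with $k\geq 1$, and the non-integrality of $\alpha$ and $\beta$ allows an integer $k$ in the admissible range to be selected (a dyadic split of $|A|\in[1,N^2]$ can be used to handle borderline cases or to optimize $k$ per scale). We thereby obtain $|S(A)|\leq CN^{1-\eta}$ uniformly in $A\neq 0$, for some $\eta=\eta(\alpha,\beta)>0$.

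Combining the two cases yields
$$
ES_N(\ul{\om}) \;\leq\;\frac{1}{N^5}\cdot O(N^3)\cdot CN^{2-2\eta} + O(N^{-1}) \;=\; O(N^{-\delta})
$$
with $\delta=\min(2\eta,1)>0$, which is the desired bound.

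The main technical obstacle lies in the careful verification of the van der Corput hypotheses (monotonicity of $f_A^{(k)}$, uniform control of the ratio of its extrema on $[1,\lfloor\rho N\rfloor]$) and in the explicit selection of the optimal order $k$ as a function of $\alpha,\beta$. A dyadic decomposition of the range of $|A|$ may be required to extract the sharpest exponent, and the non-integrality hypotheses on $\alpha$ and $\beta$ are precisely what make the $k$-th derivative test applicable and allow $k$ to be chosen in the admissible range.
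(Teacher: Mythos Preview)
Your approach is the same as the paper's in spirit --- factor the $(i_1,i_2)$-sum to reduce to $|S(A)|^2$ with $A=(a-a')t$, then apply a van der Corput derivative test to $S(A)$. However, the execution has a genuine gap.

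The claim that $|S(A)|\leq CN^{1-\eta}$ uniformly for all $A\neq 0$ is false in general. Take for instance $\alpha=1/2$, $\beta=3/2$ (so $\alpha-\beta=-1$, which is allowed). For $|A|=1$ the phase $f_A(x)=x^{1/2}/N^{3/2}$ has total variation $f_A(N)-f_A(1)\asymp N^{-1}$, so $|S(1)|\sim N$ and no derivative test can improve on the trivial bound. More systematically, your two conditions on $k$ (using $|A|\leq N^2$ in the first term and $|A|\geq 1$ in the second) do not combine to the interval $(\alpha-\beta+2,\alpha-\beta+4)$: the second-term condition actually reads $k+2^{3-k}<\alpha-\beta+4$, which for $k=2$ forces $\alpha-\beta>0$ and for larger $k$ gives a window of length $<1$. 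For $\alpha-\beta\leq 0$ no admissible integer $k$ exists, and the vague appeal to a dyadic split in $|A|$ does not rescue the small-$|A|$ range.

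The paper's fix is precisely to excise the small-$|A|$ region before applying van der Corput: one restricts to $r,t\geq N^{1-\eps}$ (so $A=rt\geq N^{2-2\eps}$) and absorbs the complement by the trivial bound, which costs only $O(N^{-\eps})$. With $|A|\gtrsim N^{2-2\eps}$ in the second van der Corput term, the admissible window for $k$ widens enough to contain an integer for every $\alpha-\beta>-2$. A second, smaller issue: your assertion $|f_A^{(k)}(x)|\asymp |A|N^{\alpha-k-\beta}$ uniformly on $[1,\lfloor\rho N\rfloor]$ is incorrect, since $x^{\alpha-k}$ varies by a factor $N^{|\alpha-k|}$ over that interval; the paper handles this by also restricting the summation to $i\geq N^{1-\eps}$ (a dyadic decomposition in $i$ would work too, but you do not mention it). Finally, note that $\alpha,\beta\notin\Z$ does not prevent $\alpha-\beta\in\Z$, so the non-integrality argument for selecting $k$ is not valid as stated.
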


For instance, the exponential sum is subleading for the choices $\om_i=\sqrt{i}$ and $\om_i=\frac{1}{\sqrt{i}}$; see Figure \ref{fig:3/2}.

%Let $\om_i=\sqrt{p_i}$ where $p_i$ is the $i$th prime number. Then, there exists $\de>0$ such that $ES_N(\ul{\om})\leq N^{-\de}$.

\begin{figure}[t]
\begin{center}
    \includegraphics[scale=.2]{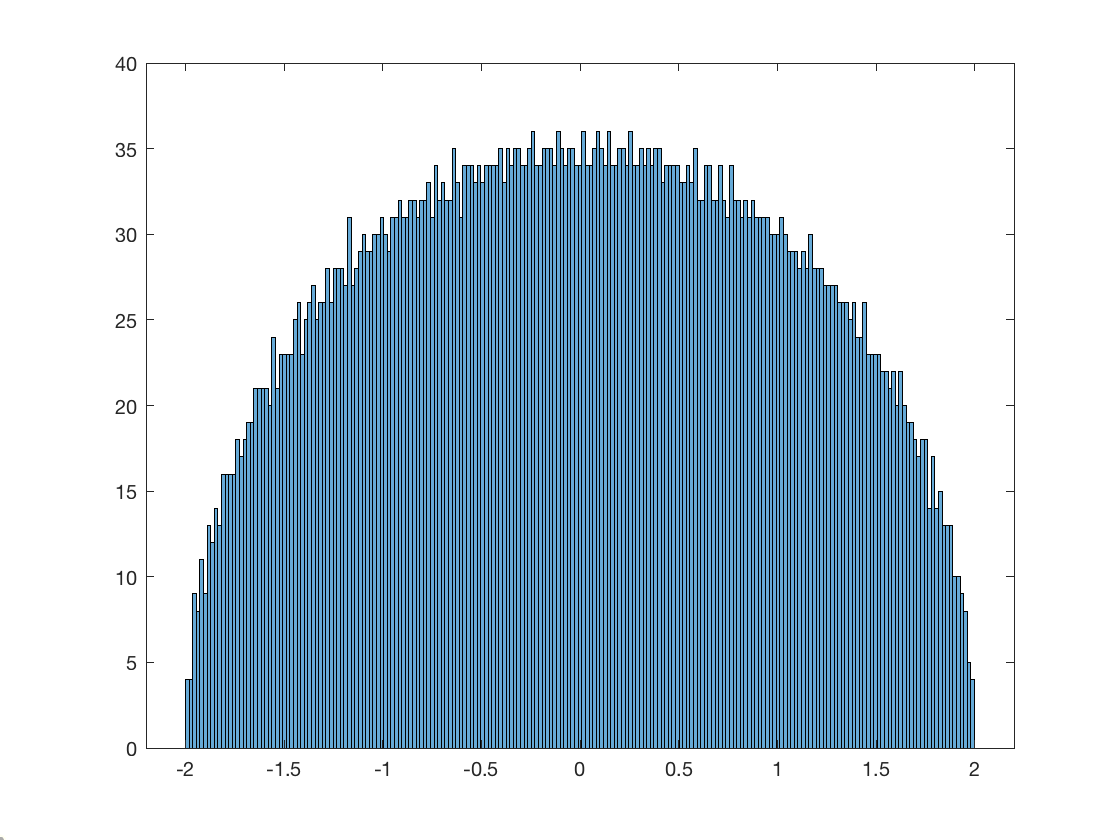}
    \caption{The empirical eigenvalue distribution $\frac{1}{N}\sum_{j=1}^N \delta_{\lam_j}$ of a $6000\times 6000$ matrix $H$ defined via \eqref{eq:Xdefn} with $\om_i=\sqrt{i}$ and $x_i=0$ for $i\geq 1$.}
    \label{fig:3/2}
\end{center}
\end{figure}

\begin{proof}[Proof of Proposition \ref{prop:vdc}]
For simplicity, we give the proof with $\rho=1$. The case of general $\rho$ follows by the same argument.
\textit{Step 1:}
We rewrite the exponential sum in Lemma \ref{lm:complete}, placing the $i_1$ and $i_2$ summations inside, because we now extract oscillations from these. Changing variables to $a-a'=r$ and $a+a'=s$, and performing the sum over $s$, we obtain
$$
\begin{aligned}
ES_N(\ul{\omega})
=&\frac{1}{N^5} \sum_{i_1,i_2=1}^{N} \sum_{t=1-N}^{N-1}\l|\sum_{a=\max\{1,1-t\}}^{\min\{N,N-t\}} e\l[(\om_{i_1}-\om_{i_2})at\r]\r|^2\\
\leq& \frac{C}{N^4} 
\sum_{t,r=-N}^N \l|\sum_{i=1}^{N}e\l[\om_{i}rt\r]\r|^2
\leq \frac{C}{N^4} 
\sum_{t,r=0}^N \l|\sum_{i=1}^{N}e\l[\om_{i}rt\r]\r|^2.\\
\end{aligned}
$$
Let $\eps=\de/2$ with $\de>0$ determined in step 2 below. We decompose the last sum as follows:
$$
\begin{aligned}
\frac{16}{N^4} 
\sum_{t,r=0}^N \l|\sum_{i=1}^{N}e\l[\om_{i}rt\r]\r|^2
\leq& \frac{CN^{2(1-\eps)}}{N^2} +\frac{C}{N^4}
\sum_{t,r=\floor{N^{1-\eps}}}^N \l|\sum_{i=1}^{N}e\l[\om_{i}rt\r]\r|^2\\
\leq& CN^{-\de}+\frac{C}{N^4}
\sum_{t,r=\floor{N^{1-\eps}}}^N \l|\sum_{i=\floor{N^{1-\eps}}}^{N}e\l[\frac{i^\al rt}{N^\beta}\r]\r|^2
\end{aligned}
$$
In the first step, we used the trivial bound. It remains to analyze the second term.\\

\textit{Step 2:} We first consider the case where $\al-\beta\in (-2,-1)$, since it involves the first-derivative van der Corput estimate, which is slightly different from the higher-order ones. The relevant phase function is $f(x):=\mathrm{sgn}(\al)\frac{x^\al rt}{N^\beta}$ with $N^{1-\eps}\leq r,t\leq  N$, and so $f'(x)=|\al|\frac{x^{\al-1} rt}{N^\beta}>0$ satisfies the estimate
$$
\frac{|\al|}{2} N^{(1-\eps)(\al+1)-\beta} <f'(x)<2|\al| N^{(\al+1)-\beta},
$$
(Note that for the lower bound we used that $\al\neq 1$, since $\al$ is not an integer.)
Notice that this implies in particular $f'(x)\ll 1$ by $\al+1<\beta$. Hence, we have the van der Corput estimate \cite{Mont,Titchmarsh}
$$
\begin{aligned}
ES_N(\ul{\omega})
\leq& CN^{-\de}+\frac{16}{N^4}
\sum_{t,r=\floor{N^{1-\eps}}}^N \l|\sum_{i=\floor{N^{1-\eps}}}^{N}e\l[\frac{i^\al rt}{N^\beta}\r]\r|^2\\
\leq& CN^{-\de}+CN^{-2-2(1-\eps)(\al+1)+2\beta}. 
\end{aligned}
$$
Recall that $\eps=\de/2$. We define $\de$ as the solution to $\de=2+2(1-\de/2)(\al+1)-2\beta$, that is,
$$
\de=2\l(1-\frac{\beta}{2+\al}\r)
$$
which is strictly positive for $\al-\beta\in (-2,-1)$. This proves $ES_N(\ul{\omega})\leq CN^{-\de}$ for $\al-\beta\in (-2,-1)$.\\

\textit{Step 3:} We consider the case where $\al-\beta\geq -1$. We set $k:=\floor{\al-\beta+3}$ and note that $k\geq 2$. The phase function $f(x)=\mathrm{sgn}(\al)\frac{x^\al rt}{N^\beta}$ derived in step 1 satisfies the derivative bound
$$
C_k N^{(1-\eps)(\al+2-k)-\beta} <f^{(k)}(x)<C_k' N^{\al+2-k-\beta}
$$
for constants $C_k,C_k'>0$ depending on $\rho,k,\alpha,\beta$, but not on $N$. We can use this to apply the $k$-derivative van der Corput estimate \cite{Mont,Titchmarsh} to our above bound on $ES_N(\ul{\omega})$. This gives
$$
\begin{aligned}
&\frac{16}{N^4}
\sum_{t,r=\floor{N^{1-\eps}}}^N \l|\sum_{i=\floor{N^{1-\eps}}}^{N}e\l[\frac{i^\al rt}{N^\beta}\r]\r|^2\\
&\leq C\frac{1}{N^2}\l(N^{1+2^{2-k}\eps(\al+2-k)} N^{\frac{(1-\eps)(\al+2-k)-\beta}{2^k-2}}
+CN^{1-2^{2-k}} N^{-\frac{(1-\eps)(\al+2-k)-\beta}{2^k-2}}\r)^2\\
&\leq C N^{2\l(2^{2-k}\eps(\al+2-k)+\frac{(1-\eps)(\al+2-k)-\beta}{2^k-2}\r)}
+CN^{2\l(-2^{2-k}-\frac{(1-\eps)(\al+2-k)-\beta}{2^k-2}\r)}.
\end{aligned}
$$
We recall that $\eps=\de/2$ and $k=\floor{\al-\beta+3}$, and so
$$
\al+2-k-\beta=\al-\beta+3-\floor{\al-\beta+3}-1\in [-1,0).
$$
This fact implies that at $\eps=0$ the exponents above are strictly negative, i.e.,
$$
\min\l\{\frac{(\al+2-k)-\beta}{2^k-2},-2^{2-k}-\frac{(\al+2-k)-\beta}{2^k-2}\r\}<0.
$$
(For the second exponent, this uses $\frac{1}{2^k-2}\leq 2^{2-k}$ which is equivalent to $k\geq 2$, and this is the reason why we treated the $k=1$ case separately in step 2 above.) Hence, the intermediate value theorem implies that we can choose $\de>0$ such that $ES_N(\ul{\omega})\leq CN^{-\de}$. This proves Proposition \ref{prop:vdc}.
\end{proof}

\subsection{Generic frequencies.}
In this section, we prove the following proposition.

\begin{prop}[Generic frequencies]\label{prop:generic}
Fix two integers $N$ and $M=\floor{\rho N}$. Then, a frequency vector $(\om_1,\ldots,\om_M)\in [0,2]^M$ is $(\de,\rho)$ quasi-random with high probability with respect to Lebesgue measure, that is,
$$
\mathbb P\l(\setof{(\om_1,\ldots,\om_M)\in [0,2]^M}
{(\om_1,\ldots,\om_M)\textnormal{ is not $(\de,\rho)$-quasi-random}}\r)< \frac{1}{N^{1-\de}}.
$$
\end{prop}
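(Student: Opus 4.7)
My plan is to prove Proposition \ref{prop:generic} by a direct first-moment estimate: compute $\mathbb{E}_{\ul{\omega}}[ES_N(\ul{\omega})]$ under the uniform measure on $[0,2]^M$, show it is $O(1/N)$, and then apply Markov's inequality. Since $ES_N(\ul{\omega}) \geq 0$ by Lemma \ref{lm:complete}, Markov gives
$$
\mathbb{P}\!\left(ES_N(\ul{\omega}) > N^{-\delta}\right) \leq N^{\delta}\, \mathbb{E}_{\ul{\omega}}[ES_N(\ul{\omega})],
$$
so the desired bound $N^{\delta-1}$ reduces to showing $\mathbb{E}_{\ul{\omega}}[ES_N(\ul{\omega})] \leq C/N$.

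To compute the expectation, I would start from the non-negative representation in Lemma \ref{lm:complete} and expand the square, writing
$$
ES_N(\ul{\omega}) = \frac{1}{N^5} \sum_{i_1,i_2=1}^{\floor{\rho N}} \sum_{t=1-N}^{N-1} \sum_{a,a'} e\!\left[(\omega_{i_1}-\omega_{i_2})(a-a')t\right],
$$
where the $a,a'$ sums run over the appropriate overlap interval. The choice of the range $[0,2]$ for the $\omega_i$ is precisely what makes the orthogonality relation $\int_0^{2} e[\omega m]\,\frac{d\omega}{2} = \mathbbm{1}_{m=0}$ hold for every integer $m$. Since the $\omega_i$ are independent, the expectation of the summand vanishes unless either $i_1=i_2$ or the integer $m:=(a-a')t$ equals $0$.

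Step 3 is then a routine counting of the surviving terms. The diagonal $i_1=i_2$ contributes $\floor{\rho N}\cdot \sum_{t=1-N}^{N-1}(N-|t|)^2 = O(N^4)$ inside the sum, giving a contribution $O(1/N)$ after the $N^{-5}$ normalization. The off-diagonal terms with $t=0$ contribute $\rho^2 N^2 \cdot N^2 = O(N^4)$, again $O(1/N)$. Finally, the off-diagonal terms with $a=a'$ and $t\neq 0$ contribute $\rho^2 N^2 \cdot \sum_{t}(N-|t|) = O(N^4)$, once more $O(1/N)$. Summing the three types gives $\mathbb{E}_{\ul{\omega}}[ES_N(\ul{\omega})] \leq C_\rho/N$, and Markov closes the argument.

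There is really no serious obstacle here; the whole proof is essentially a one-step Cauchy/orthogonality argument with Markov. The only thing to be slightly careful about is that the $a,a'$ summation range depends on $t$, so the $t\neq 0$, $a=a'$ contribution uses $\sum_{t}(N-|t|)$ rather than $\sum_t N$, but this does not affect the $O(1/N)$ order. I would also note at the end that the proof gives, more quantitatively, an explicit constant depending on $\rho$ in front of $N^{\delta-1}$, which can be absorbed by taking $N$ large enough or writing the bound for all $N\geq 1$ by adjusting the constant.
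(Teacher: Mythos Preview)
Your proposal is correct and follows essentially the same approach as the paper: compute $\mathbb{E}_{\ul{\omega}}[ES_N(\ul{\omega})]$ using orthogonality over $[0,2]$, show it is $O(1/N)$, and apply Markov's inequality together with the nonnegativity of $ES_N$. The only cosmetic difference is that the paper works directly from the definition \eqref{eq:ESNdefn} and phrases the counting of surviving terms as solving the Diophantine system $j_1+j_3=j_2+j_4$, $j_1^2+j_3^2=j_2^2+j_4^2$ (whose only solutions are $\{j_1,j_3\}=\{j_2,j_4\}$), whereas you use the reparametrized form from Lemma~\ref{lm:complete}; these are the same computation in different coordinates.
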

 
 This estimate can be improved by studying higher moments (here we just bound the first moment).
 
\be{rmk}
The choice of sampling $(\om_1,\ldots,\om_M)$ over  $[0,2]^M$ instead of $[0,1]^M$ is an artifact of our derivation of $ES_N(\ul{\om})$ by averaging the skew-shift. We emphasize that it is inconsequential for the purpose of constructing the matrices $H_{M,N}$ in our main result, since each matrix entry $X_{i,j}=e\l[\binom{j}{2}\om_i+jy_i+x_i\r]$ does not change under shifting $\om_i$ by $1$.
\e{rmk}

\begin{proof}
We integrate $(\om_1,\ldots,\om_M)$ over $[0,2]^M$ in Definition \eqref{eq:ESNdefn}. By orthonormality of the family $\{e[j\cdot]\}_{j\in\Z}$, we obtain
$$
\int_0^2\ldots \int_0^2 ES_N(\om_1,\ldots,\om_M)\d \om_1 \ldots\d\om_M 
\leq 
\frac{4M^2}{N^5} \sum_{\substack{1\leq j_1,j_2,j_3,j_4\leq N\\ j_1+j_3=j_2+j_4\\ j_1^2+j_3^2=j_2^2+j_4^2}} 1. 
$$
We thus have to count the number of solutions to the simple system of Diophantine equations
$$
j_1+j_3=j_2+j_4 \,\textnormal{ and }\, j_1^2+j_3^2=j_2^2+j_4^2.
$$
By subtracting the first equation from the second, we find $j_1-j_3=\pm(j_2-j_4)$. Supposing without loss of generality that $j_2\geq j_4$, and adding this identity to the first equation, we find $j_1=j_2$ and $j_3=j_4$. We conclude that
$$
\int_0^2\ldots \int_0^2 ES_N(\om_1,\ldots,\om_M)\d \om_1 \ldots\d\om_M 
\leq 
\frac{16M^2}{N^3} \leq \frac{16\rho^2}{N}. 
$$
By Markov's inequality, this proves Proposition \ref{prop:generic}.
\end{proof}

\section{Outlook: Deterministic matrices}\label{sect:deterministic}
A natural follow-up question to our main result is to ask for a completely deterministic matrix whose global eigenvalue distribution is semicircular. Our model presented in the main results contains $N$ random variables $y_1,\ldots,y_N$ chosen uniformly and independently from $[0,1]$. Their presence is instrumental for our proof, since it ensures the Kirchhoff circuit law. This reduces the main technical step to verifying some cancellation in the relevant exponential sums, as opposed to having to study their precise asymptotics in $N$. By contrast, for completely deterministic matrices, the Kirchhoff current law is no longer available and consequently the situation is much more delicate. 

In this section, we present some preliminary findings regarding the eigenvalues of certain fully deterministic matrices generated whose entries are generated by the toroidal shift or skew-shift. The main take-away from these examples is that the semicircle law can no longer be expected in general for deterministic matrices, and if it arises, it is accompanied by heavy tails which render the moment method ineffective.

\subsection{Overview of deterministic models}
In this section, we present our findings towards completely deterministic matrices. We define $3$ deterministic models, called A, B, and C, which are natural variations of the skew-shift models considered in this paper. We first refer the reader to Table \ref{table}, where models A, B, and C are defined and our numerical findings are summarized. Note that model C is a shift, not a skew-shift model.

The third column of the table shows the moments. In the deterministic setting, these are defined as
$$
\mu^{(2k)}_{N}=\frac{1}{2N}\Tr H^{(2k)}
$$
where $H_{N,N}$ is defined by \eqref{eq:Hdefn}. Our comments on the findings in Table \ref{table} are as follows.

\newcolumntype{Y}{>{\centering\arraybackslash}X}
\newcolumntype{s}{>{\hsize=.3\hsize}Y}
\captionsetup{width=12cm}

\begin{table}[t]
\begin{center}
\begin{tabularx}{\textwidth}{ c | Y | s }
 Model  & Empirical spectral measure (normalized) & Moments \\
 \hline
 \textbf{A: } $j^2 \sqrt{i}$ 
        &  \begin{minipage}{.8\textwidth}
      \includegraphics[scale=0.19]{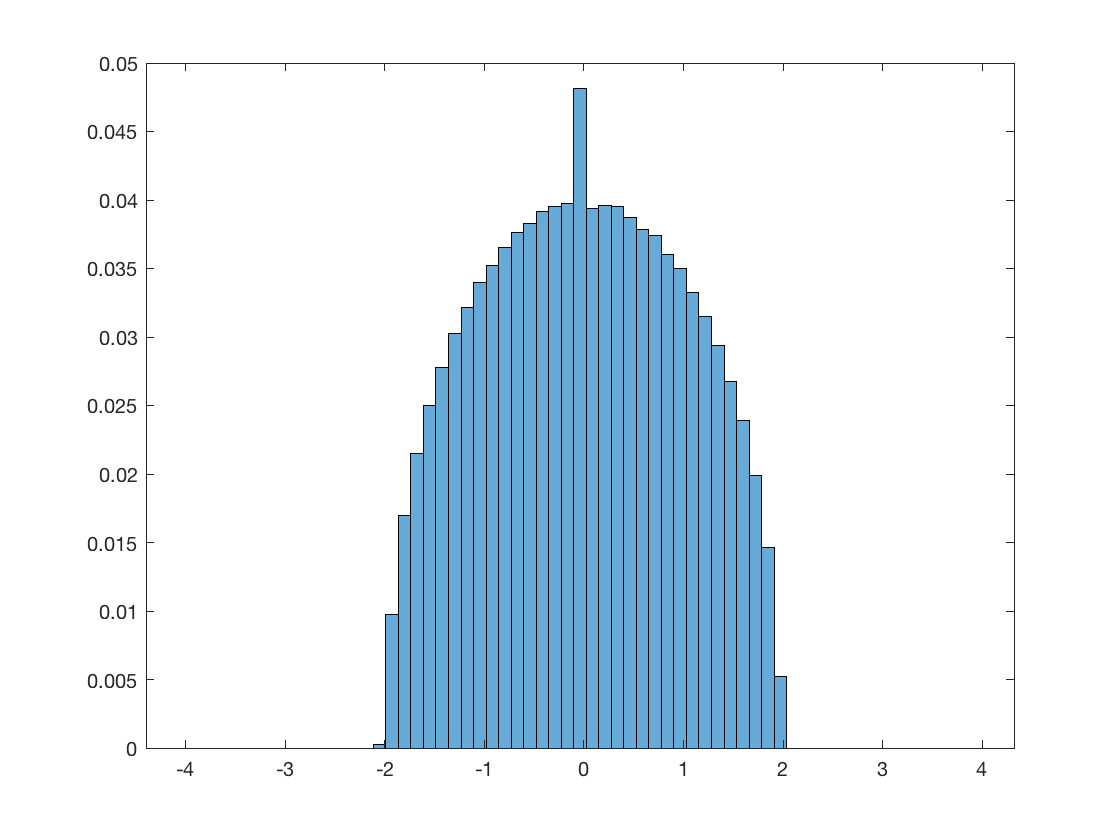}
    \end{minipage}
     &
    \vspace{-1cm}
    $\mu_{N}^{(2)}= 1\newline \mu_{N}^{(4)}\approx 3\newline \mu_{N}^{(6)}\approx 70
    \newline \mu_{N}^{(8)}\approx 4,000$
       \\
       \hline
   \textbf{B: } $j^2 i\sqrt{2}$
        &  \begin{minipage}{.8\textwidth}
      \includegraphics[scale=0.19]{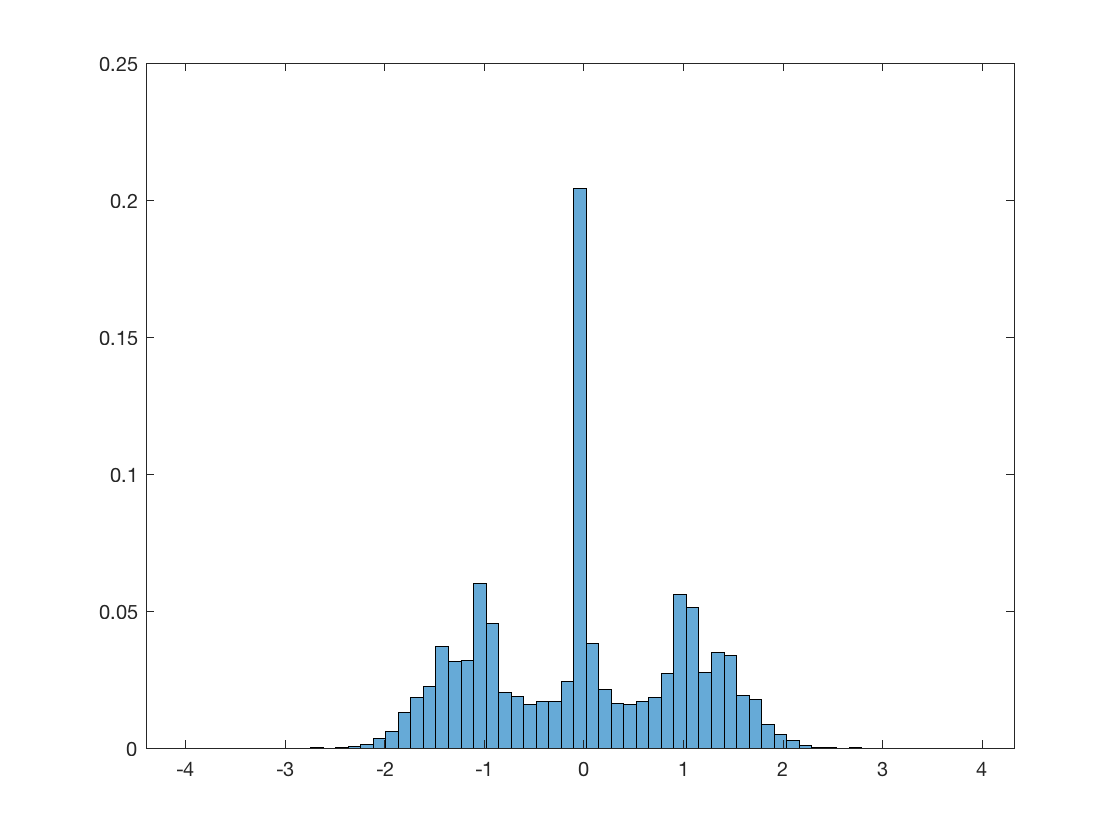}
    \end{minipage}
    &
    \vspace{-1cm}
    $\mu_{N}^{(2)}= 1\newline \mu_{N}^{(4)}\approx 2\newline \mu_{N}^{(6)}\approx 5
    \newline \mu_{N}^{(6)}\approx 16$
    \\
    \hline
 \textbf{C: } $j\sqrt{i}$ 
      &  \begin{minipage}{.8\textwidth}
      \includegraphics[scale=0.19]{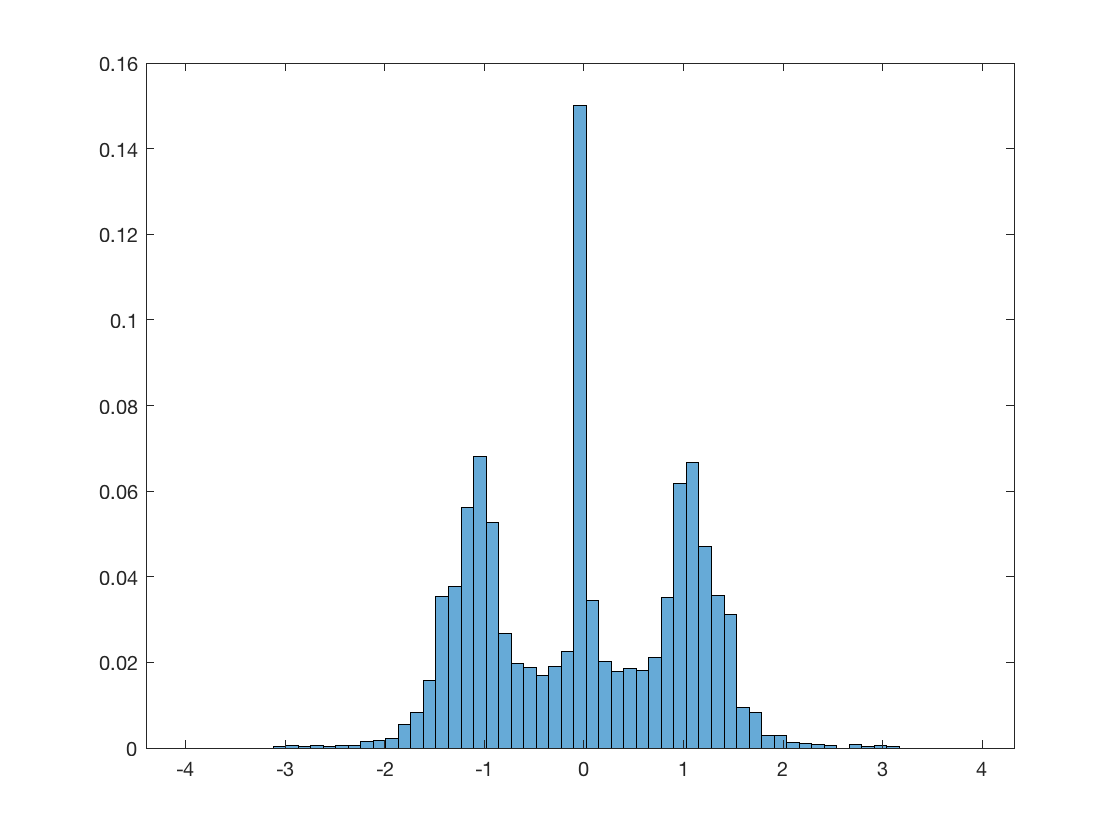}
    \end{minipage}
    &
    \vspace{-1cm}
    $\mu_{N}^{(2)}= 1\newline \mu_{N}^{(4)}\approx 3\newline \;\mu_{N}^{(6)}\approx 70
    \newline \mu_{N}^{(8)}\approx 4,000$
\end{tabularx}
\caption{\label{table}
The models A, B, and C are deterministic matrices $H_{N,N}$ whose entries are defined as $X_{i,j}=e[\cdot]$ with $[\cdot]$ given by the expressions above. The plots show the empirical spectral measure of $8000 \times 8000$ matrices $H_{N,N}$.}
\end{center}
\end{table}

\be{itemize}
\item Remarkably, model A still displays a semicircle law, but with heavy tails which make the moments different from the Catalan numbers and render the moment method ineffective. One may conjecture that a local semicircle law holds for model A, but the required estimates for exponential sums would be delicate. (In particular, the exact order of fluctuations would need be analyzed precisely.) 
\item For models B and C, we observe that the empirical spectral measure follows a novel bimodal distribution and thus differs significantly from a semicircle law. The bimodal distributions of models B and C are similar, but distinct. It is unclear at this stage if there is a universal bimodal distribution that arises as the limiting distribution of a variety of models. Understanding the limiting distribution more precisely is an interesting open problem and presumably involves good understanding of small denominators.
\item Models A and C both display rather large extreme eigenvalues ($\approx 7$ at the considered matrix size of $8000 \times 8000$). These heavy tails are matched by the significant size of their moments. Surprisingly, the first few moments of models A and C do not differ by very much.
\item For model B, the numerical moments appear to be very close to the semicircle law, but below we prove that this is spurious (see Theorem \ref{thm:modelb}).
\e{itemize}

Here we focused on the case of square matrices for simplicity. The models and the results can be generalized to the rectangular case $M=\floor{\rho N}$; see also the remark after Theorem \ref{thm:modelb}. The number $\sqrt{2}$ in the definition of model B can be replaced by any irrational number satisfying a Diophantine condition without changing the qualitative results.\\

We also consider the empirical eigenvalue spacing distribution numerically. We observe numerically that model A exhibits the level spacing distribution of GUE matrices. 

Fix an energy $E\in(-2,2)$ and a cutoff parameter $t<1$ with $Nt\to \infty$. For a model of the form \eqref{eq:Hdefn} whose spectral distribution follows the Wigner semicircle law, we then have the following definition of the empirical cumulative distribution function of the level spacing near $E$:
$$
\Lam_N(s):=\frac{1}{4Nt\rho_{sc}(E)}
\l|\setof{1\leq j\leq N-1}{\lam_{j+1}-\lam_j\leq \frac{s}{2N\rho_{sc}(E)},\; |\lam_j-E|\leq t}\r|.
$$
The level spacing distribution of a GUE matrix is approximately given by the Wigner surmise function $W(s)=32 \pi^{-2}s^2 e^{-4s^2/\pi}$. The level spacing of GUE matrices is known to be universal among a large class of Hermitian random matrices. Numerically, it appears that model A belongs to this universality class as shown in Figure \ref{fig:spacing}. This remains true for some natural variants of model A, for instance, if one replaces $\sqrt{i}$ in the definition of the matrix by other factional powers of $i$, for example $i^{1/3}$.\\

\begin{figure}[t]
    \centering
    \includegraphics[scale=0.2]{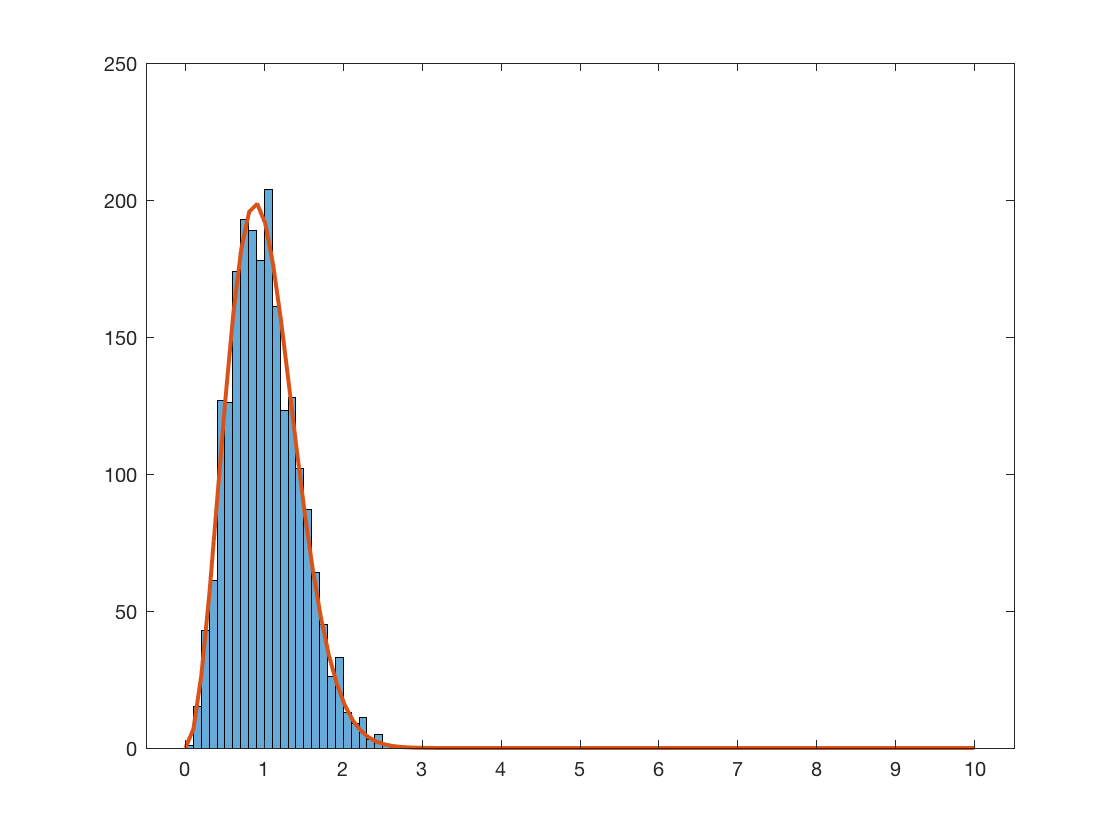}
    \caption{The histogram gives the empirical eigenvalue spacing distribution for model A with $N=4,000$ and cutoff parameter $t=N^{-1/10}$. The red curve is the Wigner surmise function $W(s)=32 \pi^{-2}s^2 e^{-4s^2/\pi}$.}
    \label{fig:spacing}
\end{figure}

We can summarize our numerical findings presented in this section as follows: The dynamics underlying model A (and its variants) are sufficiently quasi-random that the resulting Hermitian matrices still display some of the spectral features of GUE matrices. This is remarkable insofar as these matrices are fully deterministic. On the other hand, models B and C do not appear to be sufficiently quasi-random for this to be the case. We intuitively ascribe this difference to the presence of a linear term in models B and C, which corresponds to a more regular generating dynamics. Understanding these connections between dynamics and spectral theory rigorously is an open problem.

In the following section, we answer a question that is raised by the data in Table \ref{table} by analytical methods.

\subsection{Analysis of the deterministic models}

We can analyze models A-C with another graphical representation formula for the moments, which is a deterministic analog of Proposition \ref{prop:graphrep}. It holds for any deterministic matrix model $X_{i,j}$ with the deterministic propagator
$$
\curly{K}_{(i,i')}(j):=N X_{i,j}X_{j,i'}^*.
$$

\be{prop}[Deterministic graphical representation formula]\label{prop:graphrepdet}
We have
\beq
\mu^{(2k)}_{N,N}
=\frac{1}{N^{1+k}}\sum_{G_L=(V,L)\in \curly{L}_k}  \sum_{\substack{\ul{i}=(i_1,\ldots,i_k)\\ L_{\ul{i}}\sim L}}\sum_{\ul{j}=(j_1,\ldots,j_k)} \prod_{r=1}^k \curly{K}_{(i_r,i_{r+1})}(j_r).
\eeq
\e{prop}

Notice that the $L$-admissible condition for the currents $\ul{j}$ is dropped now, this means that the Kirchhoff circuit law is no longer enforced.

\begin{proof}
The claim follows in the same way as formula \eqref{eq:graphical0} in Proposition \ref{prop:graphrepdet}. There is no averaging step to consider afterwards.
\end{proof}

Notice that the first few moments for model B are close to the first few Catalan numbers. We can show that the fourth moment is, however, strictly distinct from $c_2=2$.

\begin{thm}
\label{thm:modelb}
For model B, there exists a constant $\eps>0$ such that
$$
\mu^{(4)}_{N,N}\geq 2+\eps,
$$
along a subsequence of $N\to \infty$.
\end{thm}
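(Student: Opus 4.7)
The plan is to apply the deterministic graphical representation (Proposition~\ref{prop:graphrepdet}) for $k=2$. There are exactly two exploration graphs in $\curly{L}_2$: the single-vertex graph with a double self-loop, and the two-vertex melon. For the single-vertex graph one has $i_1=i_2$, so $\curly{K}_{(i_1,i_1)}(j)\equiv 1$, yielding a contribution of exactly $1$. Consequently,
$$
\mu^{(4)}_{N,N} = 1 + \Phi_{2V},\qquad \Phi_{2V} := \frac{1}{N^3}\sum_{0<|r|<N}(N-|r|)\,|T(r)|^2,\qquad T(r):=\sum_{j=1}^N e\bigl[j^2 r\sqrt{2}\bigr],
$$
and the theorem reduces to showing $\Phi_{2V}\geq 1+\eps$ along a subsequence $N_k\to\infty$.

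Using the Fej\'er kernel identity $F_N(\theta) = |\sin(\pi N\theta)/\sin(\pi\theta)|^2 = \sum_{|r|<N}(N-|r|)e[r\theta]$ together with $|T(r)|^2 = \sum_{j_1,j_2}e[r(j_1^2-j_2^2)\sqrt{2}]$, I would rewrite this as the dual sum
$$
\Phi_{2V} = \frac{1}{N^3}\sum_{\substack{1\leq j_1,j_2\leq N\\ j_1\neq j_2}} F_N\bigl((j_1^2-j_2^2)\sqrt{2}\bigr).
$$
Since $F_N\geq 0$ with a sharp peak of order $N^2$ when $\|\cdot\|_{\mathbb T}\lesssim 1/N$, it suffices to produce enough ``resonant'' pairs $(j_1,j_2)\in[1,N]^2$ for which $\|(j_1^2-j_2^2)\sqrt{2}\|_{\mathbb T}$ is anomalously small; the ``generic'' contribution from equidistribution of $\{(j_1^2-j_2^2)\sqrt{2}\bmod 1\}$ produces exactly the baseline $\Phi_{2V}\approx 1$ which gives $\mu^{(4)}_{N,N}\approx 2$.

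The natural subsequence is built from the continued fraction $\sqrt{2}=[1;2,2,2,\ldots]$, whose convergents $p_n/q_n$ satisfy $\|q_n\sqrt{2}\|_{\mathbb T}<1/q_{n+1}$. Choosing $N$ comparable to some $q_n$, for each smaller convergent denominator $q_k$ (with $k<n$) and each integer $\ell$ small enough that $\ell\|q_k\sqrt{2}\|_{\mathbb T} < 1/N$, the integer $m=\ell q_k$ satisfies $\|m\sqrt{2}\|_{\mathbb T}\lesssim 1/N$. Every same-parity factorization $m=(j_1-j_2)(j_1+j_2)$ with $1\leq j_1,j_2\leq N$ and $j_1+j_2\leq 2N$ then produces a pair contributing $F_N(m\sqrt{2})\gtrsim N^2$ to the sum.

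The hardest step will be quantitative: to show that the total contribution of these resonances exceeds the equidistribution baseline of $1$ by a definite amount $\eps>0$, one needs $\sum_{k,\ell} r(\ell q_k)\gtrsim N$ along the subsequence, where $r(m)$ counts factorizations of $m$ as a difference of two squares with both roots in $[1,N]$. The recursion $q_{n+1}=2q_n+q_{n-1}$ for $\sqrt{2}$ should allow one to select $n$'s along which $q_n$ has a favorable divisor structure, ensuring enough representations. An alternative route is a direct discrepancy argument showing that the empirical distribution of $\{(j_1^2-j_2^2)\sqrt{2}\bmod 1\}$ carries an $\eps$-definite amount of extra mass near $0$ along the subsequence, which after integration against the concentrated kernel $F_N$ yields the same conclusion.
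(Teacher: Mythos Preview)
Your reduction to the Fej\'er-kernel sum
\[
\Phi_{2V}=\frac{1}{N^3}\sum_{j_1\neq j_2}F_N\bigl((j_1^2-j_2^2)\sqrt{2}\bigr)
\]
is correct and is exactly the expression the paper reaches (they write it as $|\sin(\pi N\theta)/\sin(\pi\theta)|^2$ and then bound it below by $(\|N\theta\|_\T/\|\theta\|_\T)^2$). The divergence comes after this point, and your proposal has a genuine gap: you do not carry out the quantitative step, and neither of your two sketched routes is close to complete. The continued-fraction route requires that, along your subsequence, the integers $m=\ell q_k$ with $\|m\sqrt{2}\|_\T<c/N$ collectively admit $\gtrsim N$ same-parity factorizations $m=(j_1-j_2)(j_1+j_2)$ with $j_1,j_2\in[1,N]$. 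Your assertion that the recursion $q_{n+1}=2q_n+q_{n-1}$ ``should allow one to select $n$'s along which $q_n$ has a favorable divisor structure'' is not substantiated, and in fact the Pell denominators $q_n$ need not have many small divisors. You also do not address the fact that $F_N(\theta)$ can be \emph{small} when $\|\theta\|_\T$ is close to $k/N$ for a nonzero integer $k$; merely having $\|m\sqrt{2}\|_\T<3/N$ is not enough.

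The paper sidesteps all of this with an elementary pigeonhole argument that you should consider instead. After the change of variables $s=j_1+j_2$, $r=j_1-j_2$, one fixes $s\in[1,\lfloor N/2\rfloor]$ and observes that among the $\lfloor N/2\rfloor$ points $\{4\sqrt{2}rs\bmod 1:1\le r\le \lfloor N/2\rfloor\}$ two must lie within $3/N$ of each other; their difference $\tilde r$ then satisfies $\|4\sqrt{2}\tilde r s\|_\T<3/N$. This produces at least $\lfloor N/2\rfloor$ pairs with small denominator, with no divisor counting at all. The near-integer issue is then handled by a case split: if at most $N/4$ of these pairs have $\|4\sqrt{2}rs\|_\T$ within $\eps/N$ of some $k/N$ ($k=1,2,3$), the remaining $\ge N/4$ pairs give the bound at $N$; otherwise, those same ``bad'' pairs become good when $N$ is replaced by $\lfloor 8N/7\rfloor$, which is what forces the subsequence in the statement. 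This is both simpler and complete, whereas your outline leaves the main difficulty unresolved.
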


\be{rmk}
The same proof works if $\sqrt{2}$ is replaced by any Diophantine number $\al$. The proof also works for rectangular matrices where $M=\floor{\rho N}$ with any $\rho>0$, and in that case it shows that the second moment is strictly larger than the second of the Marchenko-Pastur distribution. Moreover, when $\rho<1/2$, the argument can be strengthened to apply for all $N$ large enough.
\e{rmk}

\begin{proof}
Let $X_{i,j}=N^{-1/2}e[\sqrt{2} ij^2]$, so that
$$
\curly{K}_{(i,i')}(j)=e[\sqrt{2} (i-i') j^2].
$$
We use Proposition \ref{prop:graphrepdet} with $k=2$. There are exactly two distinct explorations on $2$ edges:
$$
L_1=((1,2),(2,1)),\qquad L_2=((1,1),(1,1)).
$$
Since $\curly{K}_{i\to i}(j)=1$, the contribution from $L_2$ is $1$. It remains to consider the contribution from $L_1$, which is
$$
\begin{aligned}
\Phi(L_1):=&\frac{1}{N^3} \sum_{i_1,i_2=1}^{N} \sum_{j_1,j_2=1}^N e[\sqrt{2} (i_1-i_2) (j_1^2-j_2^2)]\\
=&1+\frac{1}{N^3}  \sum_{\substack{j_1,j_2=1\\ j_1\neq j_2}}^N \l|\sum_{i=1}^N e[\sqrt{2} i (j_1^2-j_2^2)]\r|^2\\
=&1+\frac{1}{N^3}  \sum_{\substack{j_1,j_2=1\\ j_1\neq j_2}}^N \l|\frac{\sin(\pi N \sqrt{2} (j_1^2-j_2^2))}{\sin(\pi \sqrt{2} (j_1^2-j_2^2))}\r|^2\\
\geq& 1+\frac{4}{\pi^2 N^3}  \sum_{\substack{j_1,j_2=1\\ j_1\neq j_2}}^N \l(\frac{\| N \sqrt{2} (j_1^2-j_2^2)\|_\T}{\| \sqrt{2} (j_1^2-j_2^2)\|_\T}\r)^2
\end{aligned}
$$
In the last step, we used that $\|x\|_\T\leq \frac{1}{2}|\sin(\pi x)|\leq \frac{\pi}{2}\|x\|_\T$ for all real numbers $x$. 

In view of the claim, it suffices to show that the last sum is bounded below along a subsequence of $N\to\infty$.

\begin{lm}\label{lm:N'}
There exists $\de'>0$ such that for $N'\in \{N,\floor{8N/7}\}$, it holds that
$$
\frac{4}{\pi^2 (N')^3} \sum_{\substack{j_1,j_2=1\\ j_1\neq j_2}}^{N'} \l(\frac{\| 4N' \sqrt{2} (j_1^2-j_2^2\|_\T}{\| 4\sqrt{2} (j_1^2-j_2^2)\|_\T}\r)^2\geq \de'>0
$$
\end{lm}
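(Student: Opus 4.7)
My plan is to reduce the stated $\|\cdot\|_\T$-ratio sum to a weighted $L^2$-type sum of the quadratic Weyl sum $T(\beta;N'):=\sum_{j=1}^{N'}e[\beta j^2]$ and then to extract the required positive lower bound by combining Parseval's identity with equidistribution of $\{4k\sqrt{2}\bmod 1\}$.

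To begin, I would use the two-sided sandwich $\tfrac{2}{\pi}|\sin(\pi x)|\leq \|x\|_\T\leq \tfrac{1}{2}|\sin(\pi x)|$ to bound each summand below by $\tfrac{4}{\pi^2}F_{N'}(4\sqrt{2}(j_1^2-j_2^2))$, where
$$
F_{N'}(\beta):=\sin^2(\pi N'\beta)/\sin^2(\pi \beta)=\Bigl|\sum_{j=0}^{N'-1}e[\beta j]\Bigr|^2
$$
is the Fej\'er kernel of length $N'$. Thus it suffices to establish $\sum_{j_1\neq j_2}F_{N'}(4\sqrt{2}(j_1^2-j_2^2))\geq c(N')^3$ for at least one of $N'\in\{N,\floor{8N/7}\}$. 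Expanding $F_{N'}(\beta)=\sum_{|k|<N'}(N'-|k|)e[k\beta]$, interchanging sums, and observing that the $k=0$ contribution to $\sum_{j_1,j_2}F_{N'}$ exactly equals the diagonal $j_1=j_2$ contribution produces the identity
$$
\sum_{\substack{j_1, j_2=1\\ j_1\neq j_2}}^{N'} F_{N'}\bigl(4\sqrt{2}(j_1^2-j_2^2)\bigr)=\sum_{0<|k|<N'}(N'-|k|)\,\bigl|T(4k\sqrt{2};N')\bigr|^2.
$$
The heuristic is now clear: by Parseval's identity $\int_0^1|T(\beta;N')|^2 d\beta=N'$, and by the equidistribution of $\{4k\sqrt{2}\bmod 1\}_{k=1}^{N'-1}$ (since $4\sqrt{2}$ is irrational), the typical size of $|T(4k\sqrt{2};N')|^2$ is of order $N'$, and combined with $\sum_{|k|<N'}(N'-|k|)\asymp(N')^2$ this delivers the required order $(N')^3$.

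The main obstacle is converting this Parseval heuristic into a rigorous pointwise-type lower bound at the arithmetic-progression samples $4k\sqrt{2}\bmod 1$. My plan is to combine a fourth-moment bound $\int_0^1|T(\beta;N')|^4 d\beta\leq C(N')^{2+\eps}$, obtained from the divisor-function bound applied to the number of solutions of $j_1^2-j_2^2=j_3^2-j_4^2$, with a Paley-Zygmund inequality, to conclude that $|T(\beta;N')|^2\geq cN'$ on a subset of $\T$ of Lebesgue measure at least $1-\eps$. This Lebesgue lower bound would then be transferred to the arithmetic-progression samples via an Erd\H{o}s-Tur\'an type discrepancy estimate, which is available because $\sqrt{2}$ has bounded partial quotients and is therefore Diophantine. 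The remaining difficulty is that the ``bad'' $k$ at which $|T(4k\sqrt{2};N')|^2$ is small may cluster around the continued-fraction convergents of $4\sqrt{2}$ at some scale specific to $N$, contributing disproportionately to the weighted sum and spoiling the bound. This is precisely the purpose of the two-scale flexibility $N'\in\{N,\floor{8N/7}\}$: since the convergent denominators of $4\sqrt{2}$ grow at a bounded geometric rate, and since the dilation factor $8/7$ is incommensurate with that growth, the two scales $N$ and $\floor{8N/7}$ cannot simultaneously be resonant with the same convergent, so at least one value of $N'$ produces a negligible bad-$k$ contribution and hence the desired constant $\delta'>0$.
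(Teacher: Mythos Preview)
Your Fej\'er-kernel identity
\[
\sum_{j_1\neq j_2} F_{N'}\bigl(4\sqrt{2}(j_1^2-j_2^2)\bigr)=\sum_{0<|k|<N'}(N'-|k|)\,\bigl|T(4k\sqrt{2};N')\bigr|^2
\]
is correct and elegant, but the subsequent analysis has a genuine gap at the Paley--Zygmund step. You assert that Paley--Zygmund combined with $\int_0^1|T|^4\leq C(N')^{2+\eps}$ yields $|T(\beta;N')|^2\geq cN'$ on a set of Lebesgue measure at least $1-\eps$. This is not what the inequality gives. Since $\int_0^1|T(\beta;N')|^4\,d\beta$ counts quadruples with $j_1^2+j_3^2=j_2^2+j_4^2$, which number $\asymp (N')^2\log N'$, Paley--Zygmund only yields
\[
\bigl|\{\beta:|T(\beta;N')|^2\geq \theta N'\}\bigr|\;\geq\;\frac{(1-\theta)^2(N')^2}{C(N')^2\log N'}=\frac{c}{\log N'},
\]
which tends to zero. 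Even granting this weak lower bound, the transfer to the samples $\{4k\sqrt{2}\bmod 1\}$ via Erd\H{o}s--Tur\'an is not justified: the ``good'' set $\{\beta:|T(\beta;N')|^2\geq cN'\}$ is a complicated $N'$-dependent union of many short intervals, and discrepancy controls hitting counts only up to the number of interval components. The final two-scale heuristic about $8/7$ and convergents of $4\sqrt{2}$ is not connected to any concrete estimate in your argument.

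By contrast, the paper's proof is entirely elementary and avoids Weyl sums. After the change of variables $r=j_1-j_2$, $s=j_1+j_2$ (so $j_1^2-j_2^2=rs$), the paper looks directly for pairs $(r,s)$ where the \emph{denominator} $\|4\sqrt{2}rs\|_\T$ is small, in $(0,3/N)$. Pigeonhole guarantees at least $\lfloor N/2\rfloor$ such pairs. For any such pair the ratio exceeds $N\eps/3$ \emph{unless} $\|4\sqrt{2}rs\|_\T$ lies within $\eps/N$ of some $k/N$, $k\in\{1,2,3\}$; call this the bad set $\curly{B}_\eps$. If $|\curly{B}_\eps|<N/4$ one is done with $N'=N$. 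If $|\curly{B}_\eps|\geq N/4$, then precisely those bad pairs, being pinned near $k/N$, are automatically \emph{far} from every $l/\tilde N$ with $\tilde N=\lfloor 8N/7\rfloor$, and the argument runs again with $N'=\tilde N$. The dichotomy is thus a direct arithmetic incompatibility of the two lattices $\tfrac{1}{N}\Z$ and $\tfrac{1}{\tilde N}\Z$ at the relevant scale, not a statement about continued-fraction convergents.
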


Notice that Lemma \ref{lm:N'} implies Theorem \ref{thm:modelb}. We now prove this lemma. We change variables to 
$$
s:=j_1+j_2,\qquad r:=j_1-j_2.
$$
Noting that $j_1^2-j_2^2=rs$, this gives
\beq\label{eq:evenodd}
\begin{aligned}
&\frac{4}{\pi^2 N^3} \sum_{\substack{j_1,j_2=1\\ j_1\neq j_2}}^N \l(\frac{\| M \sqrt{2} (j_1^2-j_2^2)\|_\T}{\| \sqrt{2} (j_1^2-j_2^2)\|_\T}\r)^2\\
&=\frac{8}{\pi^2 N^3} \sum_{s=2}^{2N} \sum_{\substack{1\leq r\leq \min\{s-2,2N-s\}\\ r-s\equiv 0\mod 2}} \l(\frac{\| N \sqrt{2} rs\|_\T}{\| \sqrt{2} rs\|_\T}\r)^2\\
&\geq \frac{8}{\pi^2 N^3} \sum_{s=2}^{2N} \sum_{r=1}^{\min\{s-2,2N-s\}} \ind_{\textnormal{$r$ and $s$ even}}\l(\frac{\| N \sqrt{2} rs\|_\T}{\| \sqrt{2} rs\|_\T}\r)^2\\
&\geq \frac{8}{\pi^2 N^3} \sum_{s=1}^{\floor{N/2}} \sum_{r=1}^{s} \l(\frac{\| 4N \sqrt{2} rs\|_\T}{\| 4\sqrt{2} rs\|_\T}\r)^2\\
&\geq \frac{4}{\pi^2 N^3} \sum_{s=1}^{\floor{N/2}} \sum_{r=1}^{\floor{N/2}} \l(\frac{\| 4N \sqrt{2} rs\|_\T}{\| 4\sqrt{2} rs\|_\T}\r)^2
\end{aligned}
\eeq
In the last step, we symmetrized in $r$ and $s$. Given $0<\eps<1$ we define the set of mass-containing pairs, a subset of the pairs of integers $(s,r)$ with $1\leq s,r\leq \floor{N/2}$ defined by
$$
\curly{M}:=\setof{(s,r)}{\|4\sqrt{2}rs\|_\T\in \l(0,\frac{3}{N}\r)}.
$$

By the pigeonhole principle, we know that there are a macroscopic number of pairs $(r,s)$ in $\curly{M}$.

\begin{lm}\label{lm:Mcardinality}
The cardinality of $\curly{M}$ is bounded below by $\floor{N/2}$.
\end{lm}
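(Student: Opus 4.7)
The claim is essentially an instance of Dirichlet's pigeonhole argument for the irrational number $4\sqrt{2}$, applied uniformly in a parameter $s$.

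The plan is to fix $s \in \{1,2,\ldots,\lfloor N/2\rfloor\}$ and produce, for each such $s$, at least one value of $r \in \{1,\ldots,\lfloor N/2\rfloor\}$ with $(s,r)\in\curly{M}$. Since different $s$ then contribute different pairs, this immediately yields $|\curly{M}|\geq \lfloor N/2\rfloor$.

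To carry this out, I would consider the $\lfloor N/2\rfloor + 1$ points
$$
\{4\sqrt{2}\, s r\}\in[0,1),\qquad r=0,1,2,\ldots,\lfloor N/2\rfloor,
$$
and partition $[0,1)$ into $\lfloor N/2\rfloor$ consecutive subintervals of length $1/\lfloor N/2\rfloor$. By the pigeonhole principle, two of these points, say at $r=r_1<r_2$, must fall in the same subinterval, so that their difference $r:=r_2-r_1\in\{1,\ldots,\lfloor N/2\rfloor\}$ satisfies
$$
\|4\sqrt{2}\, s r\|_\T \leq \frac{1}{\lfloor N/2 \rfloor} \leq \frac{3}{N}
$$
for all $N\geq 3$ (using $\lfloor N/2\rfloor \geq (N-1)/2$). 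Moreover, because $\sqrt{2}$ is irrational and $s,r$ are nonzero integers, $4\sqrt{2}\, s r$ cannot be an integer, so $\|4\sqrt{2}\, s r\|_\T>0$. Hence $(s,r)\in \curly{M}$.

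The only ``obstacle'' is checking that the bound $1/\lfloor N/2\rfloor \leq 3/N$ comfortably holds in the regime we care about, which it does for all $N\geq 3$; for smaller $N$ the statement is trivial since $\curly{M}$ can be verified directly or the lemma is only used along the subsequence $N\to\infty$ of Lemma \ref{lm:N'}. Since the construction gives a distinct pair $(s,r)$ for each $s\in\{1,\ldots,\lfloor N/2\rfloor\}$, the bound $|\curly{M}|\geq \lfloor N/2\rfloor$ follows.
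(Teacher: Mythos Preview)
Your proof is correct and follows essentially the same approach as the paper: fix $s\in\{1,\ldots,\lfloor N/2\rfloor\}$ and apply the Dirichlet pigeonhole argument in $r$ to produce a pair $(s,r)\in\curly{M}$. Your version is in fact slightly cleaner than the paper's, since you include $r=0$ to get $\lfloor N/2\rfloor+1$ points and work with fractional parts rather than $\|\cdot\|_\T$, which makes the passage from ``two points in the same box'' to $\|4\sqrt{2}s(r_2-r_1)\|_\T<1/\lfloor N/2\rfloor$ immediate.
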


\begin{proof}[Proof of Lemma \ref{lm:Mcardinality}]
Fix an $1\leq s\leq \floor{N/2}$. Consider the collection of $\floor{N/2}$ numbers
$$
\setof{\|4\sqrt{2}rs\|_\T}{1\leq r\leq \floor{N/2}}\subset [0,1].
$$
By the pigeonhole principle, there exist $1\leq r_1< r_2\leq \floor{N/2}$ such that 
$$
|\|4\sqrt{2}r_2s\|-\|4\sqrt{2}r_1s\||< \frac{3}{N}.
$$
Then we can consider their difference $\tilde r=r_2-r_1$ and notice that $4\sqrt{2}\tilde{r} s=4\sqrt{2}r_2s-4\sqrt{2}r_1s=K+x$ where $K$ is an integer and $|x|<\frac{3}{N}$. Hence, $(s,\tilde r)\in \curly{M}$. Since $s$ was arbitrary, this shows $|\curly{M}|\geq \floor{N/2}$ as claimed. This proves the lemma.
\end{proof}

We return to the proof of Lemma \ref{lm:N'}. We distinguish two cases. Fix $0<\eps<\frac{1}{3}$. We define the bad set
$$
\curly{B}_\eps:=\setof{(s,r)\in\curly{M}}{\|4\sqrt{2}rs\|_\T\in \bigcup_{k=1}^3\l[\frac{k-\eps}{N},\frac{k+\eps}{N}\r]}.
$$

\textit{Case 1:} Assume that $|\curly{B}_\eps|< N/4$. Notice that for every $(r,s)\in\curly{M}\setminus \curly{B}_\eps$, we can write $4\sqrt{2}rs=K+u$ for some integer $K$ and some remainder 
$u\in (0,3/N)$ satisfying $|u-k/N|> \eps/N$ for $k=1,2,3$. Notice that then $4N\sqrt{2}rs=NK+Nu$ with $\|4N\sqrt{2}rs\|_\T=\|Nu\|_\T>\eps$. In conlusion, for $(r,s)\in\curly{M}\setminus \curly{B}_\eps$ we have
$$
\frac{\| 4N \sqrt{2} rs\|_\T}{\| 4\sqrt{2} rs\|_\T}\geq \frac{N\eps}{3}
$$
We then estimate the last expression in \eqref{eq:evenodd} by
$$
\begin{aligned}
\frac{4}{\pi^2 N^3} \sum_{s=1}^{\floor{N/2}} \sum_{r=1}^{\floor{N/2}} \l(\frac{\| 4N \sqrt{2} rs\|_\T}{\| 4\sqrt{2} rs\|_\T}\r)^2
&\geq \frac{4}{\pi^2 N^3} \sum_{(r,s)\in\curly{M}\setminus \curly{B}_\eps} \l(\frac{\| 4N \sqrt{2} rs\|_\T}{\| 4\sqrt{2} rs\|_\T}\r)^2\\
&\geq \frac{4}{\pi^2 N^3}\l(\frac{N\eps}{3}\r)^2 |\curly{M}\setminus \curly{B}_\eps|.
\end{aligned}
$$
Since we assumed that $|\curly{B}_\eps|< N/4$, Lemma \ref{lm:Mcardinality} implies that $|\curly{M}\setminus \curly{B}_\eps|\geq N/4$ and it follows that
$$
\begin{aligned}
\frac{4}{\pi^2 N^3} \sum_{s=1}^{\floor{N/2}} \sum_{r=1}^{\floor{N/2}} \l(\frac{\| 4N \sqrt{2} rs\|_\T}{\| 4\sqrt{2} rs\|_\T}\r)^2\geq \de>0.
\end{aligned}
$$
This proves the claim of Lemma \ref{lm:N'} in case 1, assuming $\de'$ is chosen $\leq \de$.

\textit{Case 2:} Assume that $|\curly{B}_\eps|\geq  N/4$. We define $\tilde N=\floor{8N/7}$. The idea is that pairs in the bad set $\curly{B}_\eps$, which are too close to a multiple of $1/N$, are in fact good points for $\tilde N$. Indeed, let $(r,s)\in \curly{B}_\eps$. Then we claim that, for $\eps>0$ sufficiently small,
\beq\label{eq:nowgood}
\|4\sqrt{2}rs\|_\T\in \l(0,\frac{4}{\tilde N}\r)\setminus \bigcup_{l=1}^4\l[\frac{l-\eps}{\tilde N},\frac{l+\eps}{\tilde N}\r].
\eeq
To see this, we write $\|4\sqrt{2}rs\|_\T=\frac{k}{N}+u$ with $k\in\{1,2,3\}$ and $|u|\leq \frac{\eps}{N}$. By the choice of $\tilde N=\floor{8N/7}$, we see that \eqref{eq:nowgood} can now be ensured for $\eps$ sufficiently small. 

By our assumption, the set of pairs $(r,s)$ such that \eqref{eq:nowgood} holds has cardinality $\geq N/4$. For every such pair, we can follow the argument given in case 1 to conclude that
$$
\begin{aligned}
\frac{4}{\pi^2 \tilde N^3} \sum_{s=1}^{\floor{\tilde N/2}} \sum_{r=1}^{\floor{\tilde N/2}} \l(\frac{\| 4\tilde N \sqrt{2} rs\|_\T}{\| 4\sqrt{2} rs\|_\T}\r)^2\geq \tilde \de>0
\end{aligned}
$$
Setting $\de':=\min\{\de,\tilde\de\}$, we see that Lemma \ref{lm:N'} is proved. 
\end{proof}

\begin{appendix}
\section{On Kirchhoff's current law}
In this appendix, we study the number of solutions to Kirchhoff's law. We will study the number of free parameters using vector space theory and so it is convenient to introduce $\R$-valued currents/edge weights. For this, we recall the notation from Definition \ref{defn:Feynman}

\begin{defn}[$\R$-valued edge weights]
Let $L$ be an exploration and $G_L=(V,L)$ its associated exploration graph.
\be{enumerate}[label=(\roman*)]
\item  Given a sequence $\ul{j}=(j_1,\ldots, j_k)\subset \R^k$, we assign the weight $j_i$ to the edge $(\nu_i,\nu_{i+1})$ in $L$.

We say that the sequence $\ul{j}$ is an admissible collection of edge weights for $L$ (or ``$L$-admissible'' for short), if the Kirchhoff circuit law holds on $G_L$, i.e., if
$$
\sum_{e \in I_v} j_e = \sum_{e \in O_v} j_e,\qquad \forall v\in V.
$$
\item Define the set
$$
\curly{C}:=\setof{\ul{j}=(j_1,\ldots, j_k)\subset \R^k}{\ul{j} \textnormal{ is $L$-admissible}}
$$
\e{enumerate}
\end{defn}

\be{rmk}
We may equivalently interpret a negative current $j_i<0$ as a positive current running in the opposite direction (which amounts to reorienting the corresponding edge).
\e{rmk}

We note that $\curly{C}$ is a vector space, since it is a subset of $\R^k$ defined through linear constraints containing the origin.

\subsection{Number of free parameters in Kirchhoff's law}
The key result of this section is

\be{thm}[Free parameters in Kirchhoff's law]
\label{thm:K}
Let $L$ be an exploration on $l$ vertices and $k$ edges. Then $\mathrm{dim}_\R (\curly{C})=k+1-l$.
\e{thm}

%The argument is composed of two steps: In step 1, we collapse all self-loops and multiple edges to reduce to the case of a simple graph. In step 2, we solve the counting problem for simple graphs.

%
%For the proof of Theorem \ref{thm:K}, we work with a tailor-made orientation of the graphs. We emphasize that we are not using the orientation of edges inherited from the fact that the considered graphs are exploration graphs, which we forget from now on.

We will now prove Theorem \ref{thm:K}. Recall that an exploration graphs is endowed with an orientation of the edges.

\be{defn}\label{defn:curlyc}
%We call any directed graph ``simple'' if, when viewing edges as undirected, it is simple, i.e., it has no self-loops and all pairs of vertices are connected by at most one undirected edge.
Let $G=(V,E)$ be a connected directed graph. To each cycle in the underlying undirected graph of $G$, $$C = e_1 \rightarrow e_2 \rightarrow e_3\to\ldots\rightarrow e_j \rightarrow e_1,$$ we associate the vector $\delta_C= \sum_{i=1}^{j} \mathrm{sgn}(e_i) \delta_{e_i}$ where $\mathrm{sgn}(e_i)$ is 1 if the direction of $e_i$ along the cycle follows its orientation in the directed graph $G$ and -1 otherwise. We define the auxiliary vector space
$$
\tilde{\mathcal C}(G):=\text{span}\{\delta_C: C \text{ is a cycle in } G\}
$$
\e{defn}

\begin{lm}\label{lm:cyclespace}
Let $G=(V,E)$ be a directed graph on $k$ edges and $l$ vertices. Then $\mathrm{dim}_\R(\tilde{\mathcal C}(G))=k-l+1$. 
\end{lm}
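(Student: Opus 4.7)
The plan is to use the classical spanning-tree construction of the cycle space. Since Definition \ref{defn:curlyc} assumes $G$ is connected (without connectedness the formula would be $k-l+c$, where $c$ is the number of connected components), I fix a spanning tree $T\subseteq E$, which has exactly $l-1$ edges. The remaining $k-l+1$ edges, the \emph{chords}, are labeled $e_1,\dots,e_{k-l+1}\in E\setminus T$, and to each chord $e_j$ I associate the unique cycle $C_{e_j}\subseteq T\cup\{e_j\}$, called the \emph{fundamental cycle} at $e_j$. The goal is to show that $\{\delta_{C_{e_1}},\dots,\delta_{C_{e_{k-l+1}}}\}$ is a basis of $\tilde{\mathcal{C}}(G)$, which immediately yields $\dim_\R\tilde{\mathcal{C}}(G)=k-l+1$.

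Linear independence is the easy direction and is built into the construction. The vector $\delta_{C_{e_j}}$ has coefficient $\pm 1$ on coordinate $e_j$ but coefficient $0$ on every other chord $e_i$ with $i\neq j$ (since every non-$e_j$ edge of $C_{e_j}$ lies in $T$). A vanishing relation $\sum_j\lambda_j\delta_{C_{e_j}}=0$ therefore forces $\lambda_i=0$ by reading off the $e_i$-coordinate.

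Spanning is the step that carries the actual content, and is the main obstacle. Given any cycle $C$ in $G$, let $s_j\in\Z$ be the coefficient of $e_j$ in $\delta_C$ and set $w:=\delta_C-\sum_j s_j\delta_{C_{e_j}}$. By construction the coefficient of every chord in $w$ vanishes, so $w$ is supported on the tree $T$, and the task reduces to proving $w=0$. The key ingredient is that every cycle vector $\delta_{C'}$ satisfies the Kirchhoff current law at each vertex: indeed, along a traversal $C'=f_1\to\cdots\to f_j\to f_1$ with $f_i$ having endpoints $v_{i-1},v_i$, a short case analysis on the orientation of $f_i$ shows that the signed contribution of $f_i$ to the incidence map equals $\delta_{v_i}-\delta_{v_{i-1}}$, and the resulting sum telescopes to $0$. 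Hence the linear combination $w$ also obeys Kirchhoff's law at every vertex.

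Now let $T'\subseteq T$ denote the set of tree edges where $w$ has nonzero coefficient, viewed as a subgraph. Being a subgraph of the tree $T$, $T'$ is a forest, and if $T'$ were nonempty it would contain a leaf $v$. At such a leaf the unique $T'$-edge incident to $v$ would be the only edge of $G$ at $v$ carrying nonzero weight in $w$, because all chord edges contribute $0$ by construction of $w$ and all tree edges outside $T'$ contribute $0$ by definition of $T'$. Kirchhoff's law at $v$ would then force that last weight to vanish, contradicting its presence in $T'$. Therefore $T'=\emptyset$ and $w=0$, completing the proof.
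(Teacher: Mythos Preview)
Your proof is correct, and it proceeds along a genuinely different route than the paper. The paper argues by induction on $m=k-l+1$: pick a cycle $C$, an edge $e\in C$, delete $e$, apply the induction hypothesis to $G\setminus e$ to obtain a basis $\{c_1,\dots,c_n\}$ of $\tilde{\mathcal C}(G\setminus e)$, and then check that adjoining $\delta_C$ gives a basis of $\tilde{\mathcal C}(G)$. Your argument instead fixes a spanning tree once and for all and exhibits the explicit basis of fundamental cycles. The linear-independence step is the same triangularity observation in both approaches; the spanning step is where they differ. The paper handles spanning by a one-edge-at-a-time reduction, while you handle it globally by showing that the ``residual'' vector $w$ is a Kirchhoff current supported on a tree, and then exploit the leaf structure of a forest to force $w=0$. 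Your approach has the pleasant side effect of making the connection to the Kirchhoff space $\mathcal C$ explicit already here (you are essentially anticipating the content of Lemma~\ref{lm:iso}), and it gives a concrete basis rather than an inductively built one. The paper's induction is a bit more economical in that it never needs to invoke Kirchhoff's law or the incidence map at this stage.
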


\begin{lm}\label{lm:iso}
There is an isomorphism between the vector spaces $\tilde{\mathcal C}(G)$ and $\mathcal C$.
\end{lm}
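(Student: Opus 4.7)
The plan is to show that the inclusion $\tilde{\mathcal C}(G) \hookrightarrow \mathcal C$, viewed as subspaces of $\R^k$, is itself the desired isomorphism. This reduces the lemma to two ingredients: a containment and a matching dimension count.

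First I would verify the containment $\tilde{\mathcal C}(G) \subseteq \mathcal C$ by checking Kirchhoff's law on the generators $\delta_C$. Fix a cycle $C = e_1 \to e_2 \to \cdots \to e_j \to e_1$ and a vertex $v$ along $C$. The cycle enters $v$ via one edge and leaves via another, and a brief four-case analysis, based on whether each of these two edges agrees with its ambient orientation in $G$, shows that the signs $\mathrm{sgn}(e_i)$ in Definition \ref{defn:curlyc} are precisely the ones that make the incoming and outgoing contributions of $\delta_C$ cancel at $v$. Vertices outside $C$ receive no contribution. Hence $\delta_C \in \mathcal C$, and taking spans gives $\tilde{\mathcal C}(G) \subseteq \mathcal C$.

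Second, I would compute $\dim_\R \mathcal C = k - l + 1$. Realize $\mathcal C$ as the kernel of the incidence map $D : \R^k \to \R^l$ defined by $D(\vec{j})_v := \sum_{e \in O_v} j_e - \sum_{e \in I_v} j_e$. The image of $D$ lies inside the hyperplane $H := \{x \in \R^l : \sum_v x_v = 0\}$, since each edge contributes exactly $+1$ to its head and $-1$ to its tail. For the connected graph $G_L$ (connectedness holds because $G_L$ is traced out by a single closed walk visiting every edge), a standard spanning-tree argument shows $\mathrm{image}(D) = H$, i.e., $\mathrm{rank}\,D = l - 1$. Rank-nullity then yields $\dim_\R \mathcal C = k - (l-1) = k - l + 1$, matching $\dim_\R \tilde{\mathcal C}(G)$ from Lemma \ref{lm:cyclespace}. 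Since a dimension-preserving inclusion of finite-dimensional vector spaces is an equality, the inclusion is an isomorphism.

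The main (and mild) obstacle is the surjectivity $\mathrm{image}(D) = H$ used in the dimension count, which is exactly where connectedness of $G_L$ enters. I would handle this by fixing a spanning tree $T$ of $G_L$ and, for an arbitrary $x \in H$, constructing a preimage by routing currents along the edges of $T$ in a leaf-stripping order: at each leaf of the remaining subtree one uses the single pendant edge to absorb the boundary value at that leaf, and the condition $\sum_v x_v = 0$ guarantees the procedure closes consistently at the root. Since this is entirely standard graph theory, I would state it briefly rather than belabor the induction.
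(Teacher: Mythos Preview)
Your proof is correct, but it takes a genuinely different route from the paper's.

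The paper establishes the isomorphism by proving the \emph{reverse} inclusion $\mathcal{C}\subseteq \tilde{\mathcal C}(G)$ directly and constructively: given any $\ul{j}\in\mathcal{C}$, one removes zero-current edges, reorients so all currents are positive, finds a directed simple cycle in the support (which must exist by Kirchhoff's law), and subtracts the minimum current along that cycle. This kills at least one edge; iterating decomposes $\ul{j}$ as a finite $\R$-combination of $\delta_C$'s. No dimension count is used, and in particular Lemma~\ref{lm:cyclespace} is not invoked. You instead prove only the easy inclusion $\tilde{\mathcal C}(G)\subseteq\mathcal{C}$ and then compute $\dim_\R\mathcal{C}=k-l+1$ via rank--nullity on the incidence map, matching it against Lemma~\ref{lm:cyclespace}.

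What each approach buys: the paper's cycle-peeling argument is self-contained and yields an explicit decomposition, which is why the paper can afterwards deduce Theorem~\ref{thm:K} as the combination lm:cyclespace $+$ lm:iso. Your argument is the textbook linear-algebra route and is arguably cleaner, but note that your rank--nullity step is precisely the content of Theorem~\ref{thm:K}; with your organization, the paper's subsequent derivation of Theorem~\ref{thm:K} from Lemma~\ref{lm:iso} becomes tautological. There is no logical circularity---you prove $\dim\mathcal{C}=k-l+1$ from scratch---but you have effectively reversed the dependency between Lemma~\ref{lm:iso} and Theorem~\ref{thm:K}.
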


\be{proof}[Proof of Theorem \ref{thm:K} assuming the lemmas]
Let $L$ be an exploration on $k$ edges and $l$ vertices, and associate to it the oriented exploration graph $G$. Thanks to Lemma \ref{lm:cyclespace}, it suffices to prove that
\beq\label{eq:dimeq}
\mathrm{dim}_\R(\tilde{\mathcal C}(G))=\mathrm{dim}_\R(\mathcal C).
\eeq
This follows from Lemma \ref{lm:iso}.
\end{proof}

It remains to prove the two lemmas.

\begin{proof}[Proof of Lemma \ref{lm:cyclespace}]
This will be a result of induction on $m= k-l +1$ . The base case is $m=0$. Since the graph is connected, this implies that the graph is a tree and the dimension of the vector space of all cycles is 0.

Now let us assume that the theorem holds for all values of $m \le n$. We will show it true when $m = n+1$. Since the graph $G$ is not a tree, there exists a cycle $C$ and we fix an edge $e\in C$. Consider the connected graph $G\setminus e$. By the induction hypothesis, we can find a basis $\{c_1,...,c_{n}\}$ for the space of cycles $\tilde{\mathcal{C}}(G\setminus e)$. We claim that $\{c_1,...,c_m, \delta_{C}\}$ is a basis for $\tilde{\mathcal{C}}(G)$. Indeed, if we have an element of $\tilde{\mathcal{C}}(G)$ that does not involve the edge $e$ then it will be an element of  $\tilde{\mathcal{C}}(G\setminus e)$ and, thus, will be in the span of $\{c_1,...,c_m\}$. Now consider an element of $\tilde{\mathcal{C}}(G)$ that does involve the edge $e$. Taking either the sum or difference of it with $\delta_{C}$ will produce an element of $\tilde{\mathcal{C}}(G)$ that avoids the edge $e$ and hence lies in $\tilde{\mathcal{C}}(G\setminus e)=\mathrm{span}\{c_1,...,c_m\}$. This proves that $\{c_1,...,c_m, \delta_{C}\}$ is a basis for $\tilde{\mathcal{C}}(G)$, and so this set has dimension $m+1$ as claimed. This proves Lemma \ref{lm:cyclespace}.
\end{proof}

\be{proof}[Proof of Lemma \ref{lm:iso}]
We fix an arbitrary element $\ul{j} \in \mathcal{C}$. From our graph $G$ remove all edges $(\nu_i,\nu_{i+1})$ such that $j_i=0$. Additionally, reorient all the edges of $G$ so that each edge has positive current running through it; call the resulting current $\tilde{\ul{j}}$ and the resulting graph $\tilde{G}$. Note that $\tilde{\ul{j}}$ satisfies Kirchhoff's current law for $\tilde{G}$. Find any directed cycle in $\tilde{G}$; such a cycle must exist or else Kirchhoff's current law would not be satisfied. Without loss of generality, by loop erasure, this cycle can be chosen to be a simple cycle. We decrease the current uniformly along every edge of this cycle until we get an edge with 0 current. The net effect of this procedure is to remove an edge and its associated current. Iterating the procedure $\tilde{\ul{j}}$ shows that $\tilde{\mathcal C}$ can be represented as an element of $\mathcal{C}$. Since $\ul{j} \in \mathcal{C}$ was arbitrary, this gives an isomorphism and hence Theorem \ref{thm:K}.
\end{proof}

%Let $s_1...s_k$ be a basis for the set of solutions to Kirchhoff's current law for the underlying directed graph for the Feynman diagram $\mathcal{F}$. A solutions to Kirchhoff's current equation will be of the form $a_1 s_1 +...a_k s_k$. These $a_i$'s will be the free parameters when performing moment summations and a careful choice of free parameter as well as the order of summation will be crucial to determining the trace to leading order.

%We now need a final result that determines the number of possible choices of $a_1..a_j$. Recall that all of the currents along an edge are integral. This will allow us to give the combinatorial bound that we desire.

\subsection{Lattice point geometry}

Recall that in the main text, the allowed currents are necessarily integer-valued (Definition \ref{defn:Feynman}). We note that Theorem \ref{thm:K} implies a result about the number of integer-valued solutions as well.

\be{cor}[Number of integer-valued solutions to Kirchhoff's law]
\label{cor:K}
Let $L$ be an exploration on $l$ vertices and $k$ edges. Then, there is a constant $C_{k}>0$ such that 
$$
|\setof{\ul{j}=(j_1,\ldots, j_k)\subset \{1,\ldots,M\}}{\ul{j}\textnormal{ is $L$-admissible}}|\leq C_{k}M^{k+1-l}.
$$
\e{cor}

The corollary is implied by the following statement by taking $V=\curly{C}$. (The corollary is not used in the main text, but its refinement, Lemma \ref{lem:latticepoints2} below, will be used.)

We recall that an affine space is a shifted linear subspace.

\be{lm}\label{lem:latticepoints}
There exists a universal constant $C>0$ such that for every affine space $V\subset \R^k$ of dimension $1\leq d\leq k$, it holds that
$$
|V\cap \{1,\ldots, N\}^k|\leq C_d d^{k/2} N^d.
$$
\end{lm}
\begin{proof}
We translate the counting of lattice points to a statement about volumes by using balls centered at lattice points. For each point $p\in \{1,\ldots, N\}^k$, we define the ball $B^k(p)\subset \R^k$ as the $k$-dimensional ball of radius $1/2$ centered at $p$, and we define $B^d(p)\subset V$ as the $d$-dimensional ball of radius $1/2$ centered at $p$ in the subspace $L$. Note that these balls are pairwise disjoint, i.e., $B^k(p)\cap B^k(p')=\emptyset$ if $p\neq p'$ (and similarly for $B^d$). Moreover, for any $p\in V$, we have
$$
B^k(p)\cap V=B^d(p).
$$
Thus, we can bound the number of points in $V\cap \{1,\ldots, N\}^k$ by the ratio of the volume of $V\cap [1,N]^d$ to the volume of a $B^d(p)$. This ratio is bounded by $C_d d^{k/2} N^d$ for a constant $C_d>0$, which proves the lemma.
\e{proof}
We now refine this argument. The following lemma is used for counting the number of admissible assignments of external currents in the proof of Lemma \ref{lem:ESBtoPhi}.

\begin{lm}\label{lem:latticepoints2}
Let $1\leq m\leq m'$ and fix a basis $\{b_1,\ldots,b_m,b_{m+1},\ldots,b_{m'}\}$ of $\R^{m'}$ such that $b_{m+1},\ldots,b_{m'}$ have coordinates in the set $\{0,\pm 1\}$. Fix a vector $\tau\in \R^{m'}$. Then the set 
\beq\label{eq:latticepointform}
\setof{(a_1,\ldots,a_m)\in \R^m}{\exists\, a_{m+1},\ldots,a_{m'}\in \R:\, \tau+\sum_{i=1}^{m'} a_i b_i\in \{1,\ldots,N\}^{m'}}
\eeq
has cardinality bounded by $C_{m'} N^m$.
\end{lm}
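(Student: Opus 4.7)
The plan is to rewrite the set in \eqref{eq:latticepointform} as the image of the integer cube $\{1,\ldots,N\}^{m'}$ under the quotient by $W := \mathrm{span}_\R(b_{m+1},\ldots,b_{m'})$, and then to bound this image via a unimodular integer change of coordinates enabled by the $\{0,\pm 1\}$-restriction on $b_{m+1},\ldots,b_{m'}$.

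First, since $\{b_1,\ldots,b_{m'}\}$ is a basis of $\R^{m'}$, the coordinate map $\sum_i a_i b_i \mapsto (a_1,\ldots,a_m)$ has kernel exactly $W$ and therefore descends to a linear isomorphism $\R^{m'}/W \simeq \R^m$. Writing $q:\R^{m'}\to \R^{m'}/W$ for the quotient, this identifies the set in \eqref{eq:latticepointform} with $q(\{1,\ldots,N\}^{m'}-\tau)$, which has cardinality $|q(\{1,\ldots,N\}^{m'})|$ by translation invariance.

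Next, since $b_{m+1},\ldots,b_{m'}\in \Z^{m'}$ are $\R$-linearly independent, the sublattice $\tilde W := W\cap \Z^{m'}$ has rank $m'-m$ and is saturated in $\Z^{m'}$ (for if $k p\in \tilde W$ with $k\in\Z\setminus\{0\}$ and $p\in\Z^{m'}$, then $p = kp/k \in W\cap \Z^{m'} = \tilde W$). Applying the Smith normal form to the integer matrix $B' := [b_{m+1}|\cdots|b_{m'}]$ then produces a $\Z$-basis $v_1,\ldots,v_{m'}$ of $\Z^{m'}$ whose first $m'-m$ vectors form a $\Z$-basis of $\tilde W$. Assembling these into $U := [v_1|\cdots|v_{m'}] \in GL_{m'}(\Z)$, two integer points are equivalent modulo $W$ exactly when the last $m$ coordinates of their $U^{-1}$-images coincide. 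Consequently $|q(\{1,\ldots,N\}^{m'})|$ equals the number of distinct last-$m$ components of $U^{-1}p$ as $p$ ranges over $\{1,\ldots,N\}^{m'}$, and since $|(U^{-1}p)_j|\leq m'\|U^{-1}\|_\infty N$, this count is at most $(2 m' \|U^{-1}\|_\infty N+1)^m$.

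Finally, to obtain a constant depending only on $m'$, I would observe that $U$ is determined by $B'$, whose entries all lie in $\{0,\pm 1\}$; since there are only finitely many $m'\times(m'-m)$ integer matrices with entries in $\{0,\pm 1\}$, the maximum of $\|U^{-1}\|_\infty$ over this finite set yields a $C_{m'}$ depending only on $m'$, producing the claimed bound $C_{m'}N^m$. The main obstacle I anticipate is precisely this uniformity step: the change of basis depends on the specific $b_{m+1},\ldots,b_{m'}$, but the $\{0,\pm 1\}$-restriction in the hypothesis reduces the problem to finitely many cases per dimension, thereby making the constant independent of the basis.
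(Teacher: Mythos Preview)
Your argument is correct and takes a genuinely different route from the paper's. You pass to the quotient $\R^{m'}/W$, use Smith normal form to find a unimodular $U\in GL_{m'}(\Z)$ adapted to the saturated sublattice $\tilde W=W\cap\Z^{m'}$, and then bound the number of distinct last-$m$ coordinates of $U^{-1}p$ via the entry bound $\|U^{-1}\|_\infty$; uniformity in $b_{m+1},\ldots,b_{m'}$ follows because the $\{0,\pm1\}$ restriction leaves only finitely many matrices $B'$ per dimension (for each of which you fix one choice of $U$). The paper instead gives a direct pigeonhole argument: for each admissible $(a_1,\ldots,a_m)$ it produces $(N/m')^{m'-m}$ distinct integer points in the enlarged box $\{-N,\ldots,N\}^{m'}$ by perturbing the free coefficients $a_{m+1},\ldots,a_{m'}$ by integers in $\{-N/(2m'),\ldots,N/(2m')\}$, which stays in the box precisely because $\|b_i\|_\infty\le 1$ for $i>m$; linear independence makes these sets disjoint, so the count is at most $(2N+1)^{m'}/(N/m')^{m'-m}$. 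The paper's argument is more elementary---no Smith normal form, no lattice saturation---and yields an explicit constant, while yours is more structural and would generalize cleanly to other integer constraints on $b_{m+1},\ldots,b_{m'}$. One minor imprecision: $U$ is not uniquely determined by $B'$, but choosing one $U$ per $B'$ suffices for your finiteness argument.
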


\begin{proof}
Without loss of generality, we assume that $N$ is even and that we require the weaker condition $\sum_{i=1}^{m'} a_i b_i\in \{0,\ldots,N\}^{m'}$. First, we choose $\tau=(N/2,\ldots,N/2)$. That is, we consider the set
$$
\begin{aligned}
S_{N/2}=&\setof{(a_1,\ldots,a_m)\in \R^m}{\exists\, a_{m+1},\ldots,a_{m'}\in \R:\, \tau+\sum_{i=1}^{m'} a_i b_i\in \{0,\ldots,N\}^{m'}}\\
=&\setof{(a_1,\ldots,a_m)\in \R^m}{\exists\, a_{m+1},\ldots,a_{m'}\in \R:\, \sum_{i=1}^{m'} a_i b_i\in \{-N/2,\ldots,N/2\}^{m'}}.
\end{aligned}
$$
Now fix a vector $(a_1,\ldots,a_m)\in S_{N/2}$, with an associated collection $a_{m+1},\ldots,a_{m'}$. Since $b_{m+1},\ldots,b_{m'}$ have coordinates in the set $\{0,\pm 1\}$, the triangle inequality implies
$$
\sum_{r=1}^{m'} a_i b_i+\sum_{i=m+1}^{m'} \al_i b_i \in \{-N,\ldots,N\}^{m'}
$$
for all $\al_{m+1},\ldots,\al_{m'}\in \{-N/(2{m'}),\ldots,N/(2{m'})\}$. This shows that to every vector $(a_1,\ldots,a_m)\in S_{N/2}$, we can associate $(N/{m'})^{{m'}-m}$ points in $\{-N,\ldots,N\}^{m'}$. Note that these associated points are different for every vector $(a_1,\ldots,a_m)\in S_{N/2}$ since $\{b_1,\ldots,b_{m'}\}$ form a basis of $\R^{m'}$. Since the cardinality of $\{-N,\ldots,N\}^{m'}$ is $(2N+1)^{m'}$, the pigeonhole principle implies that the cardinality of $S_{N/2}$ is bounded by
$$
(2N+1)^{m'}\l(\frac{{m'}}{N}\r)^{{m'}-m}\leq C_{m'} N^m.
$$
Finally, we observe that the argument generalizes to an arbitrary shift vector $\tau$ by shifting the box $\{-N,\ldots,N\}^{m'}$ by $\tau$. This proves the lemma.
\end{proof}

\section{Bypass lemmas}
In this section, as in Section \ref{sect:graphtheory}, we work with ordinary graphs $G=(V,E)$ with each vertex having even degree and undirected edges.

\begin{lm}[Bypass lemma]\label{lem:bypass}
Let $G= (V,E)$ be an undirected graph. Fix a simple cycle $C$ and two edges $e,e' \in C$. Suppose there exists a simple cycle $C'$ which contains $e'$ but not $e$. Then, there is a simple cycle $\tilde{C}$ which, conversely, contains $e$ but not $e'$. The edges of the cycle $\tilde{C}$ will be contained in $C\cup C'$.
\end{lm}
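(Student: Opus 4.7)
The plan is to extract $\tilde{C}$ from the symmetric difference of the two given cycles. Set $H := C \triangle C'$, i.e., the subgraph of $G$ whose edges are exactly those lying in precisely one of $C$ or $C'$. Because each simple cycle induces degree $0$ or $2$ at every vertex, both $C$ and $C'$ have all-even degrees, and hence so does $H$ (the degree at any vertex $v$ equals $\deg_C(v) + \deg_{C'}(v) - 2\deg_{C\cap C'}(v)$, which is even). The two key incidence facts driving the argument are that $e \in H$ (because $e \in C$ and $e \notin C'$) and $e' \notin H$ (because $e' \in C \cap C'$).

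Next, I would invoke the classical structural fact that any undirected graph in which every vertex has even degree decomposes as an edge-disjoint union of simple cycles. Applying this to $H$ produces simple cycles $\tilde{C}_1,\ldots,\tilde{C}_r$ whose edge sets partition the edge set of $H$. Since these cycles are edge-disjoint and $e$ is an edge of $H$, exactly one of them — call it $\tilde{C}_1$ — contains $e$. By construction the edge set of $\tilde{C}_1$ is contained in that of $H$, hence in $C\cup C'$; and because $e' \notin H$, we also have $e' \notin E(\tilde{C}_1)$. Setting $\tilde{C} := \tilde{C}_1$ gives the desired simple cycle.

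The only step that is not immediate is the Eulerian decomposition lemma, which I would either cite as standard or dispatch inline by a short induction: in an all-even-degree subgraph pick any edge, greedily extend it into a closed walk (always possible since at each intermediate visit one has used an odd number of incident edges out of an even total, leaving an unused one), apply loop erasure to extract a simple cycle, delete its edges, and recurse on the remaining subgraph, which is again all-even-degree. No substantive obstacle is anticipated beyond this standard combinatorial fact; the entire proof occupies only a few lines once the symmetric-difference idea is identified.
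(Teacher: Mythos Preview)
Your proof is correct and takes a genuinely different route from the paper's. The paper argues constructively: it writes $C$ as two arcs meeting at $e$ and $e'$, replaces the edge $e'$ by the path $C'\setminus\{e'\}$ to obtain a (possibly non-simple) closed walk that traverses $e$ exactly once and avoids $e'$, and then applies the loop-erasure procedure (introduced earlier in the paper) starting from $e$ to extract a simple cycle. Your approach is instead structural: you pass to the symmetric difference $H=C\triangle C'$, observe that $e\in H$ while $e'\notin H$ and that $H$ is an even-degree subgraph, and then invoke the standard edge-disjoint cycle decomposition to isolate a simple cycle through $e$.

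Your argument is cleaner and more conceptual---it is essentially the cycle-space computation $[C]+[C']$ over $\mathbb{F}_2$---and yields the slightly sharper containment $E(\tilde C)\subseteq C\triangle C'$ rather than just $C\cup C'$. The paper's approach, on the other hand, is more explicit about how the bypass is actually routed, and it dovetails with the loop-erasure machinery that the paper reuses repeatedly in Section~\ref{sect:graphtheory}; this makes it a more natural fit in context even if it is a touch less elegant in isolation.
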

\begin{figure}
    \centering
    \includegraphics[scale=0.3]{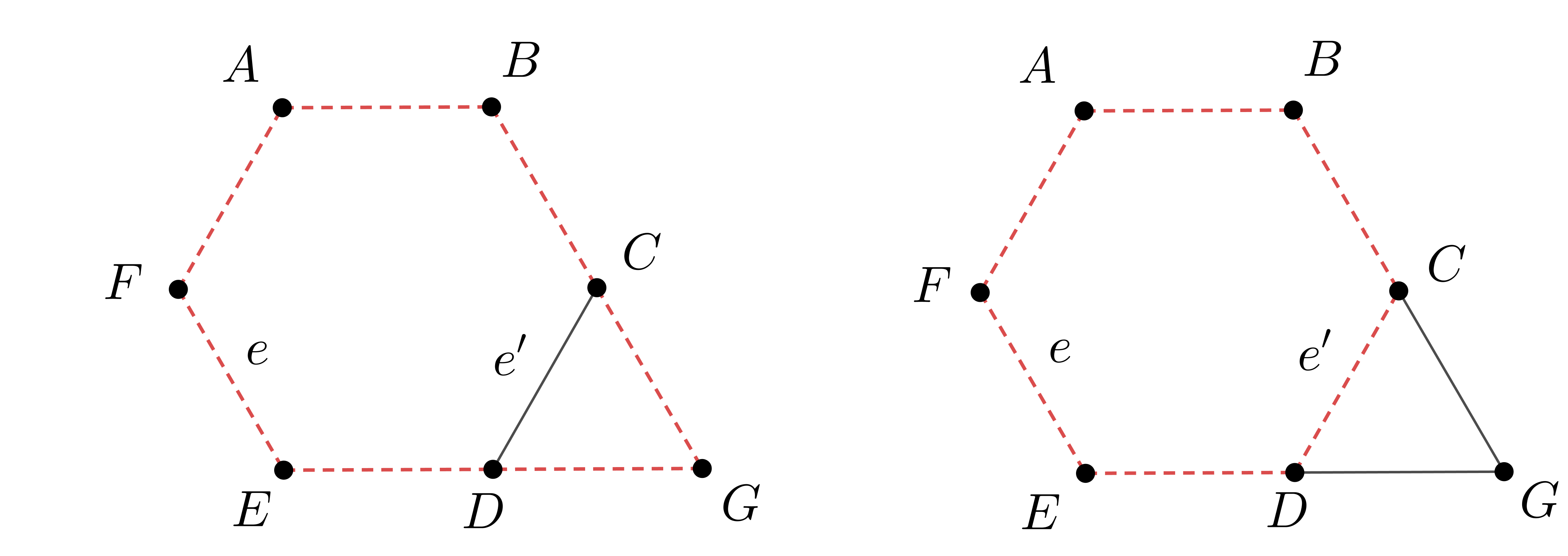}
    \caption{This explains the statement of the bypass lemma \ref{lem:bypass}. In passing from the left image to the right, the edge $e'=C\to D$ was bypassed via the path $C\to G\to D$. Since $C\to G\to D$ does not contain the edge $e$, one has that $e$ is still contained in the red cycle after bypassing.}
    \label{fig:Bypass}
\end{figure}

The name of the lemma derives from the image that the cycle $\tilde C$ ``bypasses'' the edge $e'$ by taking a detour along the cycle $C'$. See Figure \ref{fig:Bypass} for an example.

\begin{proof}
The cycle $C$ can be written as 
$$
C= e\to p_{e_1\rightarrow e_1'}\to e'\to p_{e_2' \rightarrow e_2},
$$
where $e_1,e_2$ are the neighbors of $e$ in $C$ and $e_1',e_2'$ are the neighbors of $e'$ in $C$, and $p_{e\rightarrow f}$ is the simple path in $C$ whose first edge is $e$ and whose last edge is $f$.   Similarly, the cycle $C_{e'}$ can be written as
$$
C'=e'\to p'_{f_1'\to f_2'},
$$
$f_1',f_2'$ are the neighbors of $e'$ in $C'$ and $p'_{f_1'\to f_2'}$ is the simple path in $C'$ whose first edge is $f_1'$ and whose last edge is $f_2'$.

Now consider the composition $e\to p_{e_1\rightarrow e_1'}\to p'_{f_1'\to f_2'} \to p_{e_2' \rightarrow e_2}$. This is a possibly non-simple cycle that uses the edge $e$ only once and does not use the edge $e'$. Finally, we apply loop erasure to reduce this to a simple cycle, where loop erasure was defined in the proof of Proposition \ref{prop:typeIIodd}. This proves Lemma \ref{lem:bypass}.
\end{proof}

\begin{lm}[Even bypass lemma] \label{lem:EvBypass}
Consider a graph $G=(V,E)$. Take two distinct edges $e_1$ and $e_2$ and assume that there is a cycle $C$, which will necessarily not be simple, that uses the edge $e_1$ exactly once and the edge $e_2$ an even number of times. Then there exists a cycle $\tilde{C}\subset C$ that uses the edge $e_1$ only once and does not use the edge $e_2$. 
\end{lm}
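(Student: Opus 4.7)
The plan is to reinterpret $C$ as a closed walk, encode its edge usages as a multiset, delete every copy of $e_2$, and then apply Euler's theorem to the resulting even multigraph to extract $\tilde C$.

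First I would view the closed walk $C$ as a multiset $M$ of edges of $G$, where each edge $e$ is counted with its multiplicity in $C$. Because $C$ is a closed walk, the standard pairing between entering and leaving edges at each visit to a vertex $v$ shows that the total degree of $v$ in the multigraph $(V,M)$ is even. Now form $M'$ by deleting every copy of $e_2$ from $M$. Since $e_2$ appears an even number of times in $C$, this deletion removes an even number of edge-endpoints at each of the two endpoints of $e_2$ and nothing elsewhere; hence every vertex of $(V,M')$ still has even degree. Moreover $e_1 \in M'$, because $e_1 \neq e_2$ and $e_1$ is used (exactly once) in $C$.

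Next I would take $H$ to be the connected component of $(V,M')$ containing $e_1$. Then $H$ is a connected even multigraph with at least one edge, so Euler's theorem furnishes an Eulerian circuit $\tilde C$ of $H$, i.e., a closed walk that traverses every edge of $H$ exactly once. This $\tilde C$ is the required cycle: its edge multiset is contained in $M' \subseteq M$, so $\tilde C \subset C$; it traverses $e_1$ exactly once (since $e_1 \in H$ and $H$ is traversed with multiplicity one); and it does not traverse $e_2$ at all, since $e_2$ has been purged from $M'$.

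The only substantive point is the parity bookkeeping at the two endpoints of $e_2$ after removing all its copies, which is immediate because removing an even number of edge-endpoints at a vertex preserves the parity of its degree; the remainder is the standard Eulerian-circuit theorem for multigraphs. Alternatively, if one prefers to avoid invoking Euler, one can argue by induction on the number of occurrences of $e_2$ in $C$: pair up two consecutive occurrences of $e_2$ along $C$ and splice the walk (reversing one segment if the two occurrences have the same orientation along $C$) to excise both copies simultaneously, choosing the resulting sub-walk that still contains $e_1$; this reduces the number of occurrences of $e_2$ by two while preserving the one-occurrence property for $e_1$.
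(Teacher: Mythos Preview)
Your proof is correct and takes a genuinely different route from the paper's. The paper argues by induction on the number $2k$ of occurrences of $e_2$ in $C$: placing $e_1$ at the start of the walk, one looks at the first and last occurrences of $e_2$ along $C$ and, depending on whether their orientations along the walk are opposite or equal, either excises the entire segment between them or removes just those two occurrences by reversing the intervening segment, thereby reducing $k$. Your alternative sketch at the end is essentially this argument.

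Your primary argument via Euler's theorem is conceptually cleaner: encoding $C$ as an even multigraph, deleting the even number of $e_2$-copies (which preserves parity at every vertex), and extracting an Eulerian circuit from the component through $e_1$ dispenses with the orientation case analysis entirely and directly yields a closed walk using $e_1$ once and $e_2$ not at all. The paper's inductive splicing has the minor advantage of being self-contained (no appeal to Euler) and of producing $\tilde C$ explicitly as a rearranged sub-walk of $C$, but for the lemma as stated and as used in the paper (where only the edge multiset of $\tilde C$ matters), your Eulerian argument is shorter and equally effective.
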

\begin{proof}
We will induct on the number of times $2k$ that $e_2$ appears in the cycle $C$. The case $k=0$ is trivial. Assume that the claim holds for $k \le n-1$, we will now proceed to show the claim for $k=n$.

First assign an orientation to the cycle $C$. We will write
\begin{equation}
    C= \hat{e}_0 \rightarrow \hat{e}_1 \rightarrow \hat{e}_2\rightarrow \ldots\rightarrow \hat{e}_m \rightarrow \hat{e}_0
\end{equation}
where we have set $\hat{e}_0= e_1$. In this ordering, let $\hat{e}_i$ be the first appearance of the edge $e_2$ and let $\hat{e}_f$ be the final appearance of $e_2$. Let the two endpoints of the edge $e_2$ be $v$ and $w$. 

First consider the case that both $\hat{e}_i$ and $\hat{e}_f$ are oriented in in the opposite direction; namely, $\hat{e}_i= v \rightarrow w$ and $\hat{e}_f= w \rightarrow v$ or vice versa. We can then define 
\begin{equation}
    \tilde{C}:= \hat{e}_0 \rightarrow\ldots\rightarrow \hat{e}_{i-1} \rightarrow \hat{e}_{f+1}\rightarrow \ldots\rightarrow \hat{e}_m \rightarrow \hat{e}_0
\end{equation}
and we are done. Indeed, we are able to skip from the left endpoint of $\hat{e}_i$ directly to the right endpoint of $\hat{e}_f$. The resulting graph will have no appearance of the edge $e_2$.

We now only need to consider the case that $\hat{e}_i$ and $\hat{e}_f$ are oriented in the opposite direction. We will define
\begin{equation}
    \hat{C}:= \hat{e}_0 \rightarrow \ldots\rightarrow \hat{e}_{i-1} \rightarrow \hat{e}_{f-1} \rightarrow \hat{e}_{f-2}\rightarrow\ldots\rightarrow \hat{e}_{i+1} \rightarrow \hat{e}_{f+1}\rightarrow\ldots \to \hat{e}_m \rightarrow \hat{e}_0
\end{equation}
In words, $\hat{C}$ is constructed by removing the two edges $\hat{e}_i$ and $\hat{e}_f$ and connecting $\hat{e}_{i-1}$ to $\hat{e}_{f+1}$ by using the reverse of the path between $\hat{e}_{i+1}$ to $\hat{e}_{f-1}$. Notice that the number of appearances of $e_2$ in $\hat{C}$ is $2(n-1)$. Hence, we can apply the induction hypothesis to $\hat{C}$ and this proves the induction step.
\end{proof}
\end{appendix}

\section*{Acknowledgments}
The authors are grateful to Noam D.~Elkies for useful advice concerning the exponential sum estimates in Section \ref{sect:ntheory}. The work of H.-T.~Y.\ is partially supported by NSF Grant DMS-1606305 and a Simons Investigator award.

% ----------------------------------------------------------------

\end{document}